\newcommand*{\bbb}[1]{\mathbb{#1}}
\newcommand{\bC}{\bbb{C}}
\newcommand{\bR}{\bbb{R}}
\newcommand{\R}{\bR}
\newcommand{\Z}{\bZ}
\newcommand{\C}{\bC}
\newcommand{\F}{\mathbb{F}}
\newcommand{\rank}{\mathrm{rank}}
\newcommand{\Hom}{\mathrm{Hom}}
\DeclareMathOperator{\wt}{wt} 
\DeclareMathOperator{\im}{im} 
\DeclareMathOperator{\minwt}{minwt}
\theoremstyle{plain}
\newtheorem{thm}[equation]{Theorem}
\newtheorem{cor}[equation]{Corollary}
\newtheorem{lem}[equation]{Lemma}
\newtheorem{prop}[equation]{Proposition}
\newtheorem*{thm*}{Theorem}
\newtheorem*{cor*}{Corollary}
\newtheorem*{lem*}{Lemma}
\newtheorem*{prop*}{Proposition}
\newtheorem*{ax*}{Axiom}
\theoremstyle{definition}
\newtheorem{defn}[equation]{Definition}
\newtheorem*{defn*}{Definition}
\newtheorem*{problem*}{Problem}
\newtheorem{ex}[equation]{Example}
\newtheorem*{ex*}{Example}
\newtheorem*{exs*}{Examples}
\newtheorem*{rem*}{Remark}
\newtheorem*{aside*}{Aside}
\newtheorem*{intuition*}{Intuition}
\newtheorem*{notn*}{Notation}
\newtheorem*{conv*}{Convention}
\newtheorem*{interp*}{Interpretation}
\newtheorem*{mnem*}{Mnemonic}
\newtheorem*{exer*}{Exercise}
\newtheorem*{conjecture*}{Conjecture}
\newcommand{\beginex}{\begin{ex}$\rhd$ }
\newcommand{\mendex}{\hfill$\lhd$\end{ex}}
\newcounter{ppart}
\def\Z{\mathbb{Z}}
\tikzstyle{none}=[inner sep=0pt]
\tikzstyle{vertex}=[circle,fill=White,draw=Black]
\tikzstyle{edge}=[rectangle,fill=White,draw=Black]
\tikzstyle{dualVertex}=[circle,fill=White,draw=Red]
\author{Martin Leslie}
\title{Hypermap-Homology Quantum Codes}
\date{2013}
\begin{document}

\maketitle

\chapter*{Acknowledgments}
Thank you to my advisor Marek Rychlik for all his valuable effort and time and to my committee members Klaus Lux, Pham Huu Tiep and Janek Wehr. Thank you to my girlfriend Deborah Shelton, my parents Pat and Michael and my sisters Jillian and Ellen for their support over so many years. Thank you to my friends and colleagues among the graduate students in the mathematics department, especially Enrique Acosta, Michael Bishop, Michael Gilbert, Yaron Hadad, Shane Passon and Sarah Mann.

\chapter*{Dedication}
\thispagestyle{topright}
\begin{center}For my mum and dad.\end{center}

\tableofcontents
\listoffigures
\listoftables

\begin{abstract}

We introduce a new type of sparse CSS quantum error correcting code based on the homology of hypermaps. Sparse quantum error correcting codes are of interest in the building of quantum computers due to their ease of implementation and the possibility of developing fast decoders for them. Codes based on the homology of embeddings of graphs, such as Kitaev's toric code, have been discussed widely in the literature and our class of codes generalize these. We use embedded hypergraphs, which are a generalization of graphs that can have edges connected to more than two vertices.

We develop theorems and examples of our hypermap-homology codes, especially in the case that we choose a special type of basis in our homology chain complex. In particular the most straightforward generalization of the $m\times m$ toric code to hypermap-homology codes gives us a $[(3/2)m^2,2,m]$ code as compared to the toric code which is a $[2m^2,2,m]$ code. Thus we can protect the same amount of quantum information, with the same error-correcting capability, using less physical qubits.

\
\end{abstract}



\chapter{Introduction}
\label{chap:introduction}

This dissertation introduces a new type of quantum error correcting code which we will call hypermap-homology quantum codes. We also provide examples and analysis of said codes.

Quantum computers may be able to provide significant speedups over classical computers in certain problems. The most famous example is factoring integers using Shor's algorithm, where the quantum algorithm is exponentially faster than any known classical algorithm. They may also be able to simulate quantum systems faster than classical computers. However quantum noise (such as undesired unitary evolution of a quantum state or undesired measurements) makes building quantum computers difficult. Quantum error correction may be able to provide an answer to these problems. The quantum fault tolerance theorem roughly says that if we can implement quantum gates with error probability per unit of time below some threshold then by concatenating quantum codes (i.e. building sufficiently many layers of quantum error correction into our gates) we can get the error rate arbitrarily low. We will not go into the details of such practicalities, instead looking at quantum stabilizer codes as mathematical objects. See \cite{nielsenquantum} for more information on quantum computing and error correction.

We consider binary CSS (Calderbank-Shor-Steane) codes, a special case of stabilizer codes (see Chapter \ref{chap:background} for definitions and details). CSS codes are defined by two $\F_2$ matrices $H_X, H_Z$ such that $H_XH_Z^T=0$. These codes have parameters $[N,K,D]$ where $K$ information qubits are encoded into $N$ physical qubits with minimum distance $D$. A minimum distance $D$ code can correct any errors on $\lfloor(D-1)/2\rfloor$ qubits.

It was shown non-constructively in \cite{calderbank1996good} that there exist CSS codes with $K/N$ fixed and $D \sim cN$ for a constant $c$. Codes with this property are called \textit{good}. However it is not known if \em sparse \em good CSS codes (codes with $H_X$ and $H_Z$ sparse) exist. Sparse quantum codes are interesting for at least two reasons. The first is that sparse classical codes such as Low Density Parity Check (LDPC) codes are known to have performance as good as any codes while also having efficient suboptimal decoders. Being able to decode quickly is important in applications such as quantum fault tolerance. The second reason is that in physical implementations of quantum codes, the matrices can correspond to connections between the physical qubits. So a sparse quantum code can mean that the qubits are connected to few neighboring qubits which can lead to easier implementation.

 In \cite{kitaev2003fault} a family of sparse codes called toric codes were introduced which have parameters $[2m^2,2,m]$, thus $K$ is fixed and $D \sim (1/\sqrt{2})\sqrt{N}$. Despite their relatively poor performance toric and toric-like codes have promise in that they may be easy to implement and decode as discussed above. In \cite{mackay2004sparse} sparse graph quantum codes inspired by classical LDPC codes were introduced. However the most practically successful class of such codes, the `bicycle codes', are expected to have $D$ bounded above by a constant as $N$ increases.

Codes with growing distance have been suggested in the years since. We summarize some attempts with Table \ref{tab:codes}. In \cite{zemor2009cayley} it was suggested that looking at $KD^2$ may be a way to compare toric-like codes so we include this information also.

\begin{table}[h!]
\centering
\begin{tabular}{ | c | c | c | c |  }
\hline
Code & $K$ & $D$ & $KD^2$ \\
\hline
Good codes \cite{calderbank1996good} & $\sim c_1 N$ & $\sim c_2 N$ & $\sim c_3N^3$\\
Bicycle codes \cite{mackay2004sparse} & $\sim c_1N$ & expect $\leq c_2$ & $\sim c_3N$\\
Toric code \cite{kitaev2003fault} & $2$ & $(1/\sqrt{2})\sqrt{N}$ & $N$\\
Systolic freedom \cite{freedman2002z2}, \cite{fetaya2011homological} & $c_1$ & $\sim c_2\sqrt{N\log{N}}$ & $\sim c_3 N\log{N}$\\
Hypergraph-product \cite{tillich2009quantum}, \cite{kovalev2012improved} & $\sim c_1 N$ & $\sim c_2\sqrt{N}$ & $\sim c_3 N^2$\\
Cayley graph \cite{zemor2009cayley} & $\sim c_1N$ & $\sim c_2 \log{N}$ & $\sim c_3 N(\log{N})^2$ \\
Cayley graph repetition \cite{couvreur2011construction} & $\sqrt{2}\sqrt{N}$ & $(1/\sqrt{2})\sqrt{N}$ & $(1/\sqrt{2})N^{3/2}$\\
\hline
\end{tabular}
\caption[Some code constructions]{Some code constructions.}
\label{tab:codes}
\end{table}

Some results constraining the parameters of toric-like codes have been proved. In \cite{fetaya2011homological} it is shown that codes based on the homology of a fixed surface have a bound $D^2 \leq c N$. Similarly in \cite{bravyi2010tradeoffs} it is shown that for `geometrically local' codes on a 2D lattice we have $KD^2 \leq N$. Another relevant result is Gallager's proof in \cite{gallager1962low} that classical LDPC codes of column weight two cannot have minimum distance $d$ linear with blocklength $n$. This result does not imply that CSS codes with column weight two matrices cannot achieve $D\sim cN$ but it does make it quite unlikely. It is our hope (as yet unrealized) that our codes may be able to avoid these limits.

Hypergraph based quantum codes have been proposed in \cite{tillich2009quantum} (based on products of hypergraphs) and \cite{sarvepalli2012topological}. However to our knowledge no codes based on homology of hypermaps have been suggested before.

Our interest in hypermap-homology codes initially came from considering the existing construction of codes from homology of embeddings of graphs. In these constructions the fact that an edge is connected to only two vertices in classical graphs leads to one of the matrices in the CSS construction having column weight two. Hypermap-homology codes are based on embeddings of hypergraphs - a generalization of graphs where an edge can be connected to more than two vertices. Unfortunately, in our examples we make a choice of basis that does lead to the relevant matrix having column weight two. However our construction is still a generalization of codes based on graph homology and in Example \ref{ex:mbymsquaregrid} we can see that the straightforward generalization of the toric code to hypermap-homology codes has better parameters (it can store the same amount of information, with the same error correction capability, in less qubits). We see this as a proof of concept that hypermap-homology codes can be useful.

We will not discuss decoding of our codes. They are quantum LDPC codes so can be decoded by standard belief propagation techniques as in \cite{mackay2004sparse}. However the example of decoding toric codes show that this may not work particularly well. See for example \cite{duclos2010fast} for specialized techniques that may be able to be extended to hypermap-homology codes.

In Chapter 2 we give the background knowledge required in classical coding theory, quantum mechanics on qubits, and stabilizer codes. In Chapter 3 we give an exposition of how we can create CSS codes from $\F_2$-chain complexes, including from graphs embedded in surfaces and \textit{planar codes} which come from grids in the plane with certain holes removed. In Chapter 4 we give the required background in hypermaps and hypermap homology before introducing hypermap-homology codes and their analysis. Appendix A is a short discussion of computer software that we developed to compute parameters of hypermap-homology codes.

The work in this dissertation that to our knowledge is original is:
\begin{enumerate}
\item an explanation that CSS codes can be constructed from any $\F_2$-chain complex in Section \ref{sec:chainComplex},
\item an exposition of planar codes that fills in some of the more intuitive arguments from the literature in Section \ref{sec:planarCodes},
\item propositions expressing different ways to understand hypermap-homology in Section \ref{sec:hypermapHomology}, and
\item definitions and methods to determine the weight of hypermap-homology codes with a certain type of basis in Sections \ref{sec:hypHomCodes} and \ref{sec:specialBasis}. We also include examples.
\end{enumerate}

\chapter{Background}
\label{chap:background}

\section{Classical codes}
Classical information theory and coding theory were initiated by the works of Shannon and Hamming in 1948 and 1950 respectively. This section draws from a number of sources including \cite{nielsenquantum}, \cite{mackay2003information}, \cite{richardson2008modern} and \cite{guruswami2010lecture} to give an introduction to binary linear codes.

Consider the vector space $\F_2^n$ with the non-degenerate symmetric bilinear form 
\[(x,y) \mapsto x\cdot y=\sum_i x_i y_i.\]
We write elements of $\F_2^n$ as column vectors. Define an $[n,k]$ \textit{binary linear code}\index{binary linear code} $C$ to be a $k$-dimensional subspace of $\mathbb{F}_2^n$, with elements of $C$ called \textit{codewords}. We say that $G \in M_{n \times k}(\mathbb{F}_q)$ is a \textit{generator matrix}\index{generator matrix} for $C$ if $C=G(\mathbb{F}_2^k)$. A matrix $H \in M_{m \times n}(\mathbb{F}_2)$ is called a \textit{parity check matrix}\index{parity check matrix} for $C$ if $C=\ker(H)$. Since $G$ is injective we know that $G$ is full rank i.e. $\rank(G)=k$. However, since $\F_2^n/\ker(H)\cong \im(H)$ we have $\dim(\im(H))=n-k$ so with our definition $H$ does not have to be full rank. We allow the parity check matrix to include $m\geq n-k$ conditions of which $m-(n-k)$ must be redundant. This is purely for convenience; many of our code constructions will be via specifying parity check matrices which may not necessarily be full rank.

Furthermore we have $HG=0$ because columns of $G$ are elements of $C$ and $H$ times an element of $C$ is $0$.

To form a generating matrix for a code $C$ choose a basis for $C$ and place these as columns of $G$. Then we can generate all the codewords by adding basis codewords i.e. multiplying $G$ by some $x$.

To form a full rank parity check matrix for $C$ we consider the orthogonal complement 
\[C^\perp=\{y\in \mathbb{F}_2^n \colon y \cdot x = 0 \mbox{ for all } x\in C\},\]
called the \textit{dual code}\index{dual!code} of $C$. One thing to note is that we do not necessarily have $C \cap C^\perp=\{0\}$ (for example even weight codewords are orthogonal to themselves). However our bilinear form is non-degenerate so we do still have
\[\dim(C)+\dim(C^\perp)=\dim(\mathbb{F}_2^n)\]
and thus $\dim(C^\perp)=n-k$. Now choose a basis for $C^\perp$ and use this as the rows of $H$.

To show that $H$ is a parity check matrix for $C$ we need
\[C=\{x \in \mathbb{F}_2^n \colon Hx=0\}.\]
If $x \in C$ then $Hx=0$ because each row of $H$ is orthogonal to all elements of $C$. If $Hx=0$ then we know that $x$ is orthogonal to a basis for $C^\perp$ and is thus orthogonal to $C^\perp$. Thus $x\in (C^\perp)^\perp=C$ (to see this last equality note that $C \subseteq (C^\perp)^\perp$ and that $\dim(C^\perp)+\dim((C^\perp)^\perp)=n$).

We now claim that $C^\perp$ is a $[n,n-k]$ code with generator matrix $H^T$ and parity check matrix $G^T$. To see that $H^T$ is a generator matrix notice that the rows of $H$ are a basis for $C^\perp$ and thus any column vector in $C^\perp$ can be written as a sum of columns of $H^T$. For the parity check matrix: if $x \in C^\perp$ then $x$ is orthogonal to all elements of $C$ so is orthogonal to columns of $G$ so $G^T x=0$. Finally if $G^T x=0$ then $x$ is orthogonal to a basis of $C$ so is orthogonal to $C$ so $x \in C^\perp$.

We now explain error correction with an $[n,k]$ code. To transmit $u\in \F_2^k$ we send $x=Gu$. Then some error $r \in \F_2^n$ (it may be the trivial error $r=0$) occurs and we receive $y=x+r$. The decoding algorithm is to decide that the sent codeword is a codeword $x^* \in C$ which is closest to $y$. Here closest is with respect to the Hamming distance
\[d_H(a,b)=\mbox{number of elements in which } a \mbox{ and } b \mbox{ differ}.\] Then to recover the information we decode to the unique $u^*\in \F_2^k$ which corresponds to $x^*$.

With this in mind, when designing a linear code we would like codewords to be far apart. The \textit{minimum distance}\index{minimum distance!classical} $d$ of a code is the minimum of $d_H(a,b)$ for all $a \neq b \in C$. For linear codes this minimum distance is also the minimum weight of nonzero codewords where the weight of a codeword is
\[\textrm{wt}(a)=d_H(a,0)=\mbox{number of 1's in } a.\]
To see this, we have
\[d = \min_{a\neq b \in C} d_H(a,b) =\min_{a\neq b \in C} \textrm{wt}(a+b) = \min_{c \in C\setminus\{0\}} \textrm{wt}(c).\]

We will refer to such a code as an $[n,k,d]$-code.

\section{Quantum mechanics on qubits}
This introduction to quantum mechanics for quantum computing follows \cite{nielsenquantum} with some simplifications. Our quantum computing model is based on \textit{qubits}\index{qubit} (named for quantum bits) although more general qudits (quantum digits) or other systems are possible. Choose a basis of $\C^2$ (called the \textit{computational basis}) to be
\[|0\rangle = \begin{bmatrix} 1\\0 \end{bmatrix} \mbox{ and } |1\rangle=\begin{bmatrix}0\\1\end{bmatrix}.\]

We now give some postulates for the quantum mechanics\index{quantum mechanics} of qubits.
\begin{enumerate}
\item The state of $n$ qubits is an element of $\mathcal{H}_n=(\C^2)^{\otimes n}$. We work with vectors which are normalized to have $\langle \psi |\psi\rangle=1$ and consider states differing multiplicatively by $e^{i\theta}$ with $\theta\in\R$ to be equal. More concretely, $|\psi\rangle \in \mathcal{H}_n$ can be written as
\[|\psi\rangle=\sum_{i \in \F_2^n} a_i |i\rangle\]
where $|i\rangle=|i_1 i_2 \ldots i_n\rangle=|i_1\rangle |i_2\rangle\cdots |i_n\rangle=|i_1\rangle \otimes |i_2\rangle \otimes \cdots \otimes |i_n\rangle$ are all just different notations for the same thing. The normalization condition means $\sum |a_i|^2=1$.
\item The evolution of the system is described by a unitary transformation. That is, the state $|\psi\rangle$ of the system at time $t_1$ is related to the state $|\psi'\rangle$ of the system at time $t_2$ by $U=U(t_1,t_2)\in U(\mathcal{H}_n)$ via $|\psi'\rangle = U|\psi\rangle$.
\item Given an \textit{observable}\index{observable} $M$ (a Hermitian operator on $\mathcal{H}_n$) with spectral decomposition $M=\sum_m m P_m$ we can measure a state $|\psi\rangle$ with respect to $M$. The result of the measurement is $m$ with probability $p(m)=\langle\psi|P_m|\psi\rangle$ and if the result is $m$ then the new state is $P_m|\psi\rangle/\sqrt{p(m)}$.
\end{enumerate}

The following proposition justifies the commonly used statement that commuting observables can be measured `simultaneously'.

\begin{prop}
If observables $M_1,\dots,M_k$ commute then we can measure $|\psi\rangle$ with respect to these in any order and have the same probability of measurement and state outcome.
\end{prop}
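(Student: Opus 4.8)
The plan is to reduce the statement to the elementary fact that the spectral projections of commuting Hermitian operators commute, and then to compute the outcome probabilities and post-measurement states of a sequential measurement directly.

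First I would record the key lemma: if $M_i$ and $M_j$ are commuting Hermitian operators on the finite-dimensional space $\mathcal{H}_n$, with spectral decompositions $M_i = \sum_m m\, P_m^{(i)}$ and $M_j = \sum_{m'} m'\, P_{m'}^{(j)}$, then $P_m^{(i)} P_{m'}^{(j)} = P_{m'}^{(j)} P_m^{(i)}$ for all $m,m'$. This holds because each spectral projection $P_m^{(i)}$ is a polynomial in $M_i$ — explicitly, the Lagrange interpolation polynomial that takes the value $1$ at the eigenvalue $m$ and $0$ at the other (finitely many) eigenvalues of $M_i$ — so commuting operators have commuting spectral projections. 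In particular $P_m^{(i)} P_{m'}^{(j)}$ is again an orthogonal projection, being a product of commuting projections; I will use this repeatedly.

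Next I would analyze the sequential measurement. Fix a permutation $\pi$ of $\{1,\dots,k\}$ and suppose we measure $M_{\pi(1)},M_{\pi(2)},\dots,M_{\pi(k)}$ in that order. By Postulate 3 and induction on the number of measurements performed, after observing outcomes $m_{\pi(1)},\dots,m_{\pi(j)}$ the unnormalized state is $P^{(\pi(j))}_{m_{\pi(j)}}\cdots P^{(\pi(1))}_{m_{\pi(1)}}|\psi\rangle$, and the probability of having seen exactly that sequence is the squared norm of this vector. Now apply the lemma: since all the projections commute and each is idempotent, after all $k$ measurements the (unnormalized) state equals $Q|\psi\rangle$ with $Q=P^{(1)}_{m_1}P^{(2)}_{m_2}\cdots P^{(k)}_{m_k}$, and the joint probability of the outcome tuple $(m_1,\dots,m_k)$ is $\|Q|\psi\rangle\|^2=\langle\psi|Q|\psi\rangle$ (using $Q^\dagger Q = Q$, as $Q$ is itself a projection). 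The final normalized state is then $Q|\psi\rangle/\|Q|\psi\rangle\|$. Both the joint probability and the final state are manifestly independent of the order $\pi$, which is exactly the assertion.

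The only real subtlety — the step I would be most careful about — is the bookkeeping for the intermediate normalizations: at stage $j$ one divides by $\sqrt{p(m_{\pi(j)}\mid\text{earlier outcomes})}$, and one must check that these conditional probabilities telescope, so the cumulative normalization constant after all $k$ steps is $\|Q|\psi\rangle\|$ regardless of order, and that no division by zero occurs along any branch of positive overall probability (if $Q|\psi\rangle=0$ the tuple simply never occurs, with probability $0$, in any order). Granting the commutation lemma this is routine, so I expect the write-up to be short.
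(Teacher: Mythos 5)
Your proposal is correct and follows essentially the same route as the paper: establish that the spectral projections of the commuting observables commute, then observe that the sequential-measurement probability and post-measurement state are symmetric under reordering of the projections. The only differences are cosmetic — you justify the commutation of the projections via Lagrange interpolation polynomials where the paper invokes simultaneous diagonalization, and you carry out the induction and normalization bookkeeping for general $k$ explicitly where the paper treats two observables and asserts the general case.
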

\begin{proof}
Since the $M_i$ are Hermitian they are diagonalizable and since they commute they are mutually diagonalizable. Write $M_i=\sum_{m_i} m_i P_{i,m_i}$. Notice that $P_{i,m_i}P_{i,m_i'}=\delta_{m_im_i'}P_{i,m_i}$ and $P_{i,m_i}P_{j,m_j}=P_{j,m_j}P_{i,m_i}$ if $i\neq j$. 

If we first measure with respect to $M_i$ then we get result $m_i$ with probability $p(m_i)=\langle\psi|P_{i,m_i}|\psi\rangle$ and new state $|\psi'\rangle=P_{i,m_i}|\psi\rangle/\sqrt{p(m_i)}$. Measuring this new state with respect to $M_j$ gives result $m_j$ with probability 
\[p(m_j)=\langle \psi'|P_{j,m_j}|\psi'\rangle= \frac{\langle\psi|P_{i,m_i}^\dagger}{\sqrt{p(m_i)}}P_{j,m_j}\frac{P_{i,m_i}|\psi\rangle}{\sqrt{p(m_i)}}=\frac{\langle\psi|P_{j,m_j}P_{i,m_i}|\psi\rangle}{p(m_i)}\]
and new state
\[\frac{P_{j,m_j}|\psi'\rangle}{\sqrt{p(m_j)}}=\frac{P_{j,m_j}P_{i,m_i}|\psi\rangle}{\sqrt{\langle\psi|P_{j,m_j}P_{i,m_i}|\psi\rangle}}.\]
Thus measuring with respect to $M_i$ and $M_j$ in either order is equivalent to measuring with respect to $M_iM_j=\sum_{m_i,m_j}P_{j,m_j}P_{i,m_i}$. So measuring $M_1,\dots,M_k$ in any order is equivalent to measuring $M_1\dots M_k$.
\end{proof}

\section{Stabilizer codes}

The stabilizer code formalism, first introduced by Gottesman in \cite{gottesman1997stabilizer}, is a way of describing quantum codes somewhat analogous to linear codes in the classical setting. Our discussion in this section mainly follows \cite{nielsenquantum} and, in parts, \cite{mackay2004sparse}.

Define the Pauli matrices\index{Pauli matrices}
\[I=\begin{bmatrix}
1&0\\
0&1
\end{bmatrix}, \quad
X=\begin{bmatrix}
0&1\\
1&0
\end{bmatrix}, \quad Y=\begin{bmatrix}
0&-i\\
i&0
\end{bmatrix}, \quad Z=\begin{bmatrix}
1&0\\
0&-1
\end{bmatrix}.\]
They can be easily seen to be Hermitian and unitary and checked to satisfy the following relations:
\[X^2=Y^2=Z^2=I,\]
\[XY=iZ \quad ZX=iY \quad YZ=iX,\]
\[YX=-iZ \quad XZ=-iY \quad ZY=-iX.\]

Let $U(\mathcal{H}_n)$ be the group of unitary operators on the space of $n$ qubits. Define the \textit{Pauli group}\index{Pauli group} to be the group $G_n$ inside $U(\mathcal{H}_n)$ generated by operators of the form $A_1\otimes\dots\otimes A_n$ where each $A_i \in \{I,X,Y,Z\}$. We will sometimes use notation where we omit the tensor signs or include only the non-identity operators. For example $XIY=X_1Y_3 \in G_3$ is shorthand for $X \otimes I \otimes Y$. 

Using this notation we have $X_1 Y_1 Z_1=iI \in G_n$. Thus the group $G_n$ must contain $\{\pm I,\pm iI\}$. But this is enough to ensure $G_n$ is closed under products and inverses: if $c,d \in \{\pm 1,\pm i\}$ and $A_i, B_i \in \{I,X,Y,Z\}$ then we have
\[\left(c \bigotimes_{i=1}^{n} A_i\right)\left(d \bigotimes_{i=1}^{n} B_i \right) = (cd) \bigotimes_{i=1}^{n} (A_iB_i)\]
and
\[\left(c \bigotimes_{i=1}^{n} A_i\right)^{-1}=c^*\bigotimes_{i=1}^{n} A_i\]
and thus
\[G_n=\left\{c \bigotimes_{i=1}^{n} A_i \colon c\in\{\pm1,\pm i\}, A_i \in\{I,X,Y,Z\}\right\}.\]

\begin{prop} 
If we call elements of the Pauli group \em Pauli operators \em we have the following facts.
\begin{enumerate}
\item Pauli operators commute if and only if they have an even number of places with different non-identity matrices. If they do not commute then they anti-commute.
\item Squaring a Pauli operator gives $\pm I$.
\item A Pauli operator $\displaystyle c \bigotimes_{i=1}^{n} A_i$ is Hermitian if and only if $c=\pm1$.
\end{enumerate}
\end{prop}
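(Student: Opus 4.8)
The plan is to verify each of the three facts by reducing to the single-qubit case and then multiplying across tensor factors, using the Pauli relations already recorded in the excerpt.

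For part (1), I would write two Pauli operators as $P = c\bigotimes_i A_i$ and $Q = d\bigotimes_i B_i$ and compute $PQ$ and $QP$ factorwise. At each position $i$, either $A_i = I$ or $B_i = I$ (in which case $A_iB_i = B_iA_i$), or $A_i = B_i$ (in which case they commute and $A_i B_i = I$), or $A_i \neq B_i$ are both non-identity, in which case the listed relations ($XY = -YX$, etc.) give $A_i B_i = -B_i A_i$. So $PQ = (-1)^s QP$ where $s$ is the number of positions where $A_i, B_i$ are distinct non-identity matrices. Hence $P,Q$ commute iff $s$ is even, and otherwise they anti-commute; since $(-1)^s$ is the only scalar that appears, there is no third possibility. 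This gives both sentences of (1) at once.

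For part (2), squaring gives $\left(c\bigotimes_i A_i\right)^2 = c^2 \bigotimes_i A_i^2 = c^2 \bigotimes_i I = c^2 I$, using $X^2 = Y^2 = Z^2 = I^2 = I$; since $c \in \{\pm 1, \pm i\}$ we have $c^2 \in \{\pm 1\}$, so the square is $\pm I$. For part (3), I would use that each $A_i \in \{I,X,Y,Z\}$ is Hermitian (noted in the excerpt) and that the tensor product of Hermitian operators is Hermitian, so $\left(c\bigotimes_i A_i\right)^\dagger = \bar c \bigotimes_i A_i$. This equals $c\bigotimes_i A_i$ iff $\bar c = c$ (the tensor factor $\bigotimes_i A_i$ is a nonzero operator, so it can be cancelled), i.e. iff $c \in \bR$, i.e. $c = \pm 1$.

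There is no real obstacle here; the only mild subtlety is in part (1), making sure the casework at a single position is exhaustive and that the sign picked up is exactly $-1$ in the non-commuting case — this is immediate from the relation table $XY=-YX$, $YZ=-ZY$, $ZX=-XZ$ — and then observing that distinct scalars cannot both multiply the same operator, so "$PQ = \pm QP$" genuinely forces the commute/anti-commute dichotomy with nothing in between.
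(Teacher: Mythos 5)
Your proof is correct and is essentially the argument the paper intends: the paper's own proof is just the one-line remark that the facts ``follow easily from our multiplication rule above and the relations among the Pauli matrices,'' and your factorwise casework (counting positions with distinct non-identity factors to get the sign $(-1)^s$, squaring each tensor factor, and conjugating the scalar) is exactly the fleshed-out version of that. No issues.
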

\begin{proof}
These follow easily from our multiplication rule above and the relations among the Pauli matrices.
\end{proof}

Now for $S \leq G_n$ define $V_S \subseteq \mathcal{H}_n$ to be the set of vectors stabilized by $S$ i.e. 
\[V_S=\{|\psi\rangle \colon s |\psi\rangle = |\psi\rangle \mbox{ for all } s \in S\}.\]
It is easy to check that $V_S$ is a subspace of $\mathcal{H}_n$ and that we can write 
\[V_S=\bigcap_{s \in S} V_{\{s\}}.\]

\begin{prop}
The subspace $V_S \neq 0$ only if $-I \notin S$. In this case we have $S$ abelian, $\pm iI \notin S$ and the relations $g^2=I$ and $g^\dagger=g$ for all $g \in S$.
\end{prop}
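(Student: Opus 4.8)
The plan is to prove the contrapositive-flavored chain of implications directly: first show that if $-I \in S$ then $V_S = 0$; then, assuming $-I \notin S$, extract in turn that $\pm iI \notin S$, that every $g \in S$ squares to $I$, that every $g \in S$ is Hermitian, and finally that $S$ is abelian. Each link uses only the structure of the Pauli group worked out above and the previous Proposition (items 1--3 about commuting, squaring, and Hermiticity of Pauli operators).

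First I would handle $-I \notin S$. If $-I \in S$ then for any $|\psi\rangle \in V_S$ we have $-I|\psi\rangle = |\psi\rangle$, forcing $|\psi\rangle = 0$; hence $V_S = 0$. So $V_S \neq 0$ forces $-I \notin S$, and for the rest we assume $-I \notin S$. Next, $\pm iI \notin S$: if $iI \in S$ then $(iI)^2 = -I \in S$, contradiction; likewise $-iI \in S$ gives $(-iI)^2 = -I \in S$. Now take any $g \in S$. By item 2 of the previous Proposition, $g^2 = \pm I$; but $g^2 \in S$ and $-I \notin S$, so $g^2 = I$. Writing $g = c\bigotimes_i A_i$ with $c \in \{\pm 1, \pm i\}$, the relation $g^2 = I$ together with $\left(\bigotimes_i A_i\right)^2 = \pm I$ forces $c^2 = \pm 1$, and in fact $c \in \{\pm 1\}$ is needed to make $g$ square to $+I$ consistently — more cleanly, since $g^2 = I$ is Hermitian one checks $g$ has real eigenvalues $\pm 1$, so $g^\dagger = g^{-1} = g$; alternatively invoke item 3, noting that $c = \pm i$ would give $g^\dagger = -g \neq g$ while also $g^2 = (c\bigotimes A_i)^2 = c^2 (\pm I) = \mp I$ when $c = \pm i$, again contradicting $g^2 = I$. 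Either route yields $c = \pm 1$, hence $g^\dagger = g$.

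Finally, abelianness. Take $g, h \in S$. By item 1 of the previous Proposition, $g$ and $h$ either commute or anticommute. Suppose they anticommute, so $gh = -hg$. Then $(gh)(gh) = g(hg)h = -g(gh)h = -g^2 h^2 = -I$ (using $g^2 = h^2 = I$ from the previous step), so $(gh)^2 = -I$. But $gh \in S$, and we just showed every element of $S$ squares to $+I$ — contradiction. Hence $g$ and $h$ commute, and $S$ is abelian.

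I do not expect a genuine obstacle here; the only mildly delicate point is pinning down $c = \pm 1$ (equivalently $g^\dagger = g$), where it is worth being careful to use $g^2 = I$ rather than merely $g^2 = \pm I$, since the scalar $c$ contributes $c^2$ to $g^2$. Everything else is a short deduction from the multiplication rule for Pauli operators and the three facts already established about them.
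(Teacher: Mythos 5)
Your proof is correct and follows essentially the same route as the paper: $-I\in S$ annihilates $V_S$; $\pm iI$ and anticommuting pairs each generate $-I$ inside $S$; $g^2=\pm I$ plus $-I\notin S$ gives $g^2=I$; and unitarity gives $g^\dagger=g$. The only cosmetic difference is ordering — the paper rules out non-abelianness via the commutator $MNM^\dagger N^\dagger=-I$ before discussing squares, while you derive $g^2=I$ first and then get the same $-I$ from $(gh)^2$ — but these are the same computation.
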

\begin{proof}
If $-I \in S$ then $|\psi\rangle=-|\psi\rangle$ so $|\psi\rangle=0$ for all $|\psi\rangle \in V_S$.

If $S$ is not abelian then there exists $M,N \in S$ such that $MN=-NM$, then $MNM^\dagger N^\dagger=-I \in S$. If $\pm iI \in S$ then $(\pm iI)^2=-I \in S$. Also for any $g\in S$ we have $g^2=\pm I$. So $-I \notin S$ implies  $\pm iI \notin S$ and $g^2=I$ for all $g \in S$. Then $g^2=I$ implies $g^\dagger=g$ because $g$ is unitary.
\end{proof}

Define a \textit{stabilizer group}\index{stabilizer group} to be a abelian subgroup $S \leq G_n$ with $-I \notin S$. The \textit{stabilizer code}\index{stabilizer code} given by $S$ is $V_S$.

Recall the definitions of the normalizers and centralizer of a subgroup: the normalizer of the stabilizer group $S$ in $G_n$ is
\[N(S)=\{E \in G_n \colon EgE^\dagger \in S \mbox{ for all } g\in S\}\]
and the centralizer is 
\[C(S)=\{E \in G_n \colon EgE^\dagger=g\mbox{ for all } g\in S\}.\]
Clearly $C(S) \subseteq N(S)$ but in this case the inclusion is true in the opposite direction also: if $E \in N$ then $EgE^\dagger=\pm gEE^\dagger=\pm g$. So since $EgE^\dagger \in S$ we must have plus not minus and $EgE^\dagger=g$.

We will often work with generators for $S$ such that $S=\langle g_1, \ldots, g_m\rangle$. Note that $E$ commutes with all elements of $S$ if and only if it commutes with all the generators. To prove this: using the fact that the $g_l$ commute and square to one, a general $g \in S$ can be written as $g=g_1^{\epsilon_1} \cdots g_m^{\epsilon_m}$ with $\epsilon_l\in \{0,1\}$. Then if $E$ commutes with each $g_l$ it commutes with $g$.

We now describe the error correction process. Our errors are elements of $G_n$. We start with a state $|\psi\rangle \in V_S$ then after error $E$ occurs the system is in state $E|\psi\rangle$. If $S=\langle g_1, \ldots, g_{m}\rangle$ then the \textit{syndrome}\index{syndrome} of an error operator $E$ is 
\[\beta=\beta(E)=(\beta_1\ldots,\beta_{m})\] where $\beta_l \in\{0,1\}$ is defined by the equation \[Eg_l=(-1)^{\beta_l} g_lE.\]

Now the stabilizer generators $g_l$ are commuting Hermitian operators so are observables that can be measured simultaneously. Each observable $g_l$ has eigenvalues $\pm1$ (because $g_l^2=I$) and the projectors onto the $+1$ and $-1$ eigenspaces are $(I+g_l)/2$ and $(I-g_l)/2$ respectively. Thus \[g_l=(+1)\frac{I+g_l}{2}+(-1)\frac{I-g_l}{2}.\]
If $\beta_l$ is the syndrome of $g_l$ then measuring $g_l$ gives result $+$ with probability
\begin{align*}
p(+)&=\langle \psi | E^\dagger \left(\frac{I+g_l}{2}\right) E |\psi\rangle\\
&=\frac{1}{2}\langle \psi | \psi\rangle+\frac{1}{2}\langle \psi | E^\dagger g_l E |\psi\rangle\\
&=\frac{1}{2}+\frac{1}{2}(-1)^{\beta_l}\langle \psi | g_l E^\dagger E |\psi\rangle\\
&=\frac{1}{2}+\frac{1}{2}(-1)^{\beta_l}\langle \psi | \psi\rangle\\
&=\frac{1}{2}+\frac{1}{2}(-1)^{\beta_l}\\
&=1-\beta_l.
\end{align*}
Thus the outcome of the measurement is deterministic and depends only on the syndrome of the error, not the state $|\psi\rangle$.

If we have a collection of errors with distinct syndromes then we can correct the errors. In fact more is true as we will now see.

\begin{thm}Suppose $\{E_j\}$ is a set of error operators such that $E_j^\dagger E_k \notin C(S)\setminus S$ for all $j,k$. Then $\{E_j\}$ is correctable. 
\end{thm}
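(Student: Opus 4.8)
The plan is to mimic the classical correctability argument adapted to the stabilizer setting. The key object is the syndrome map $\beta$ introduced just before the theorem: given an error $E \in G_n$ and generators $g_1,\dots,g_m$ of $S$, the syndrome $\beta(E) = (\beta_1,\dots,\beta_m)$ records whether $E$ commutes ($\beta_l = 0$) or anticommutes ($\beta_l = 1$) with $g_l$. The first thing I would establish is the elementary fact that $\beta(E_j) = \beta(E_k)$ if and only if $E_j^\dagger E_k \in C(S)$, i.e. $E_j^\dagger E_k$ commutes with every generator, which by the remark in the text is equivalent to commuting with all of $S$. Since we already noted $N(S) = C(S)$ for Pauli stabilizers, the hypothesis $E_j^\dagger E_k \notin C(S) \setminus S$ splits into two cases: either $E_j^\dagger E_k \in S$, or $\beta(E_j) \neq \beta(E_k)$.

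Next I would describe the actual recovery procedure. Upon receiving $E_j|\psi\rangle$ for some unknown $j$, we measure the observables $g_1,\dots,g_m$ simultaneously (they commute and are Hermitian); by the deterministic-syndrome computation just above the theorem statement, this yields exactly $\beta(E_j)$ with certainty and leaves the state in $E_j|\psi\rangle$ up to a global phase (one should check $E_j|\psi\rangle$ is indeed a $(-1)^{\beta_l}$-eigenvector of $g_l$, which is immediate from $g_l E_j = (-1)^{\beta_l} E_j g_l$ and $g_l|\psi\rangle = |\psi\rangle$). We then pick any representative error $E_k$ in our set with that same syndrome and apply $E_k^\dagger$ (equivalently $E_k$, since Pauli operators square to $\pm I$) as the correction. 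The state becomes $E_k^\dagger E_j |\psi\rangle$.

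The crux is then to argue that $E_k^\dagger E_j |\psi\rangle = |\psi\rangle$ (up to phase) whenever $\beta(E_j) = \beta(E_k)$. Set $F = E_k^\dagger E_j$. By the first step, $\beta(E_j) = \beta(E_k)$ forces $F \in C(S)$, and the hypothesis then forces $F \in S$. Here I need a small lemma: if $F \in S$ then $F$ acts as the identity on $V_S$ — but that is immediate from the definition of $V_S$ as the set of vectors fixed by every element of $S$ (using that $S$ is a group, so $F \in S$ rather than merely $\pm F \in S$; one should be slightly careful that $E_k^\dagger E_j$ lands in $S$ and not $-S$, but this is exactly what $F \in C(S)$ plus the hypothesis gives, since the hypothesis excludes $F \in C(S)\setminus S$). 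Hence the corrected state equals the original codeword. Combining the two cases: in all cases the measured syndrome plus the correction returns us to $|\psi\rangle$, so $\{E_j\}$ is correctable.

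I expect the main obstacle to be phrasing carefully the equivalence between "same syndrome" and "product lies in $C(S)$", and in particular keeping track of signs — a Pauli product $E_k^\dagger E_j$ could a priori be $-1$ times an element of $S$, and one must use the structure of the hypothesis ($C(S)\setminus S$ excluded, and $-I \notin S$) to rule this out. The rest is bookkeeping: assembling the measurement step (citing the deterministic-syndrome calculation), the correction step, and the invariance of $V_S$ under $S$ into a clean statement of what "correctable" means and verifying the procedure meets it.
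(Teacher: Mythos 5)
Your proposal is correct and follows essentially the same route as the paper's proof: establish that two errors have equal syndrome exactly when their product $E_j^\dagger E_k$ lies in $C(S)$, then use the hypothesis to conclude that same-syndrome errors differ by an element of $S$, which acts trivially on $V_S$, so applying the recovery operator of any representative with the measured syndrome restores $|\psi\rangle$. Your extra attention to the sign issue (ruling out $E_k^\dagger E_j \in -S$ via $-I \notin S$) is a careful touch the paper leaves implicit, but it does not change the argument.
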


\begin{proof}
First note that $Eg_l=(-1)^{\beta_l}g_lE$ can be rewritten as $E^\dagger g_l=(-1)^{\beta_l}g_lE^\dagger$ so $E^\dagger$ has the same syndrome as $E$. Let $\beta(E_j)=(\beta_{j,1},\dots,\beta_{j,m})$ and similarly for $\beta(E_k)$. Then \[E_j^\dagger E_k g_l=(-1)^{\beta_{j,l}}(-1)^{\beta_{k,l}}g_l E_j^\dagger E_k\] so $E_jE_k^\dagger\in C(S)$ if and only if $(-1)^{\beta_{j,l}}(-1)^{\beta_{k,l}}=1$ for all $l$ if and only if $E_j$ and $E_k$ have the same syndrome. Thus the theorem to be proved can be restated as: a set of errors is correctable if errors with the same syndrome differ by an element of the stabilizer.

If the syndrome corresponding to $E_j$ is unique then we can correct the error by applying the operation $E_j^\dagger$, so the state becomes $E_j^\dagger E_j |\psi\rangle = |\psi\rangle$. If we have two errors $E_j$ and $E_k$ with the same syndrome then, by assumption in the theorem, $E_j^\dagger E_k \in S$. Thus even if we use the `wrong' operator, $E_j^\dagger$ instead of $E_k^\dagger$, to correct we still have $E_j^\dagger E_k |\psi\rangle = |\psi\rangle$.
\end{proof}

Define the \textit{distance}\index{minimum distance!quantum} of a stabilizer code to be the minimum weight (number of non-identity components) of a Pauli operator in $C(S) \setminus S$. Then by the theorem above a distance $d$ stabilizer code can fix errors on any $\lfloor(d-1)/2\rfloor$ qubits. We will say $V_S$ is an $[n,k,d]$ quantum code if $V_S$ is a $2^k$ dimensional subspace of $\mathcal{H}_n$ with distance $d$.

\section{Check matrix representation}
Define a function $r \colon G_n \to \F_2^{2n}$ by writing $g$ as a string of $X$'s times a string of $Z$'s, forgetting the constant out the front, and then writing those strings in binary. For example $g=XIY=i(XIX)(IIZ)$ and $r(g)=101001$. 

Notice $r(gg')=r(g)+r(g')$ and that $\ker{r}=\{\pm I,\pm iI\}$. Also for any string in $\F_2^{2n}$ we can find an operator that maps to it. Thus $G_n/\{\pm I,\pm iI\} \cong \F_2^{2n}$ (an isomorphism of a multiplicative group with the additive group of $\F_2^{2n}$). Define the \textit{effective Pauli group}\index{Pauli group!effective} to be $\overline{G}_n=G_n/\{\pm I,\pm iI\}$. We will choose coset representatives of $\overline{G}_n$ of the form
\[\bigotimes_{i=1}^{n} A_i\]
where each $A_i \in \{I,X,Y,Z\}$,

Now if $S$ is a stabilizer group then since $-I \notin S$ we have no pairs of elements differing by $-I$ or $\pm iI$ so $S\cong r(S)$.

Define \[\Lambda = \begin{bmatrix}
0&I_n\\
I_n&0
\end{bmatrix}.\]
Now write $A= \begin{bmatrix}
x(g) & z(g)
\end{bmatrix}$ and define the \textit{twisted product} 
\[r(g) \odot r(g') = r(g)\Lambda r(g')^T = x(g) \cdot z(g')+z(g) \cdot x(g').\]
Then we claim that $g$ and $g'$ commute if and only if $r(g) \odot r(g')=0$. To see this, note that wherever $g$ and $g'$ have different non-identity matrices in a position, we get exactly one one in the sum for the twisted product.

The generators for $S=\langle g_1, \ldots, g_{m}\rangle$ are said to be \textit{irredundant} if removing any generator makes the group smaller.
\begin{prop}
The generators $g_l$ of a stabilizer group are irredundant if and only if the collection of $r(g_l)$ are linearly independent.
\end{prop}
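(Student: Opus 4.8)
The plan is to translate the group-theoretic statement into linear algebra over $\F_2$ via the isomorphism $S \cong r(S)$. First I would record two facts already in hand: since $S$ is abelian with $g^2 = I$ for every $g \in S$, each element of $S$ can be written in the normal form $g_1^{\epsilon_1}\cdots g_m^{\epsilon_m}$ with $\epsilon_l \in \{0,1\}$; and since $-I,\pm iI \notin S$, the restriction $r|_S$ is injective, while $r$ itself is a homomorphism into the additive group $\F_2^{2n}$, so $r(g_1^{\epsilon_1}\cdots g_m^{\epsilon_m}) = \epsilon_1 r(g_1) + \cdots + \epsilon_m r(g_m)$.

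Next I would unwind the definition of irredundant. The generators fail to be irredundant precisely when there is some index $l$ with $\langle g_i : i \neq l \rangle = S$; since adjoining $g_l$ to $\langle g_i : i \neq l\rangle$ recovers all of $S$, this is equivalent to $g_l \in \langle g_i : i \neq l\rangle$, and by the normal form above this means $g_l = \prod_{i \neq l} g_i^{\epsilon_i}$ for some $\epsilon_i \in \{0,1\}$. Applying the injective homomorphism $r$, this holds if and only if $r(g_l) = \sum_{i\neq l}\epsilon_i r(g_i)$, i.e.\ $r(g_l)$ lies in the $\F_2$-span of $\{r(g_i) : i \neq l\}$.

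Finally I would invoke the elementary fact that a finite family of vectors over $\F_2$ is linearly dependent if and only if one of them lies in the span of the others: a nontrivial relation $\sum_i \epsilon_i r(g_i) = 0$ with, say, $\epsilon_l = 1$ rearranges (over $\F_2$) to $r(g_l) = \sum_{i \neq l}\epsilon_i r(g_i)$, and conversely such an expression gives a nontrivial relation. Combining this with the previous paragraph: the $g_l$ are redundant if and only if the $r(g_l)$ are linearly dependent, and negating both sides yields the proposition.

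I do not expect a genuine obstacle here; the entire content is setting up the dictionary between $S$ and $r(S)$ correctly. The one point worth stating carefully is that $\langle g_i : i \neq l\rangle \subsetneq S$ is really equivalent to $g_l \notin \langle g_i : i \neq l\rangle$ — i.e.\ "the group gets strictly smaller" is exactly "$g_l$ is not already generated by the rest."
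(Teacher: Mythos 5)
Your proof is correct and follows essentially the same route as the paper: both arguments rest on the dictionary given by the homomorphism $r$ with $\ker r = \{\pm I, \pm iI\}$, the normal form $g_1^{\epsilon_1}\cdots g_m^{\epsilon_m}$ for elements of $S$, and the observation that $S \cap \ker r = \{I\}$ (equivalently, your injectivity of $r|_S$) to convert a nontrivial $\F_2$-relation among the $r(g_l)$ into an expression of one generator as a product of the others. Your write-up is if anything slightly more careful than the paper's, which states the dependence/independence dichotomy with the words swapped at one point.
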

\begin{proof}
The rows are linearly independent over $\F_2$ if and only if $\sum a_i r(g_i)=0$ with $a_j =1$ for some $j$. But $\sum a_i r(g_i)=0$ if and only if $\prod g_i^{a_i} \in \{ \pm I,\pm iI\}$. But this can only be $I$ because it is in $S$ and $-I \notin S$. So some $a_j =1$ if and only if $g_j=\prod_{i\neq j} g_i^{a_i}$ i.e. $g_j$ is dependent on the other generators.
\end{proof}

So, if we create a matrix of rows which have twisted product zero and which are linearly independent then that defines a stabilizer group (choose the $g_i$ corresponding to the $r_i$ to be the one with $c=1$). If we write the matrix as
\[A= \begin{bmatrix}
A_1 & A_2
\end{bmatrix}\]
then the condition that all pairs of rows have twisted product zero becomes $A_1A_2^T+A_2A_1^T=0$.

To decode in the check matrix\index{check matrix} representation we need to find the syndrome of an error $E$. As discussed earlier this depends on whether $E$ commutes with the generators $g_l$. Thus we need to take the twisted product of rows of the check matrix $A$ with $r(E)$. In practice if we write $\begin{bmatrix}
z(E) & x(E)
\end{bmatrix}$ instead of $r(E)=\begin{bmatrix}
x(E) & z(E)
\end{bmatrix}$ then we can instead use the normal $\F_2$ matrix product. Once again, if the error has unique syndrome then it can be decoded and also if errors with the same syndrome differ by elements of the stabilizer they can be corrected.

We now develop a formula for the dimension of a stabilizer code.
\begin{lem}
If $S=\langle g_1,\dots,g_m\rangle$ with irredundant generators and $-I\notin S$ then for each $i$ there exists $h_i\in G_n$ such that $h_ig_j=(-1)^{\delta_{ij}}g_jh_i$ (i.e. $h_i$ anti commutes with $g_i$ and commutes with all other $g_j$).
\end{lem}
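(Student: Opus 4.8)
The plan is to work in the check matrix representation, where the irredundant generators $g_1,\dots,g_m$ correspond to linearly independent vectors $r(g_1),\dots,r(g_m)\in\F_2^{2n}$ (by the proposition just proved). Finding $h_i\in G_n$ with $h_ig_j=(-1)^{\delta_{ij}}g_jh_i$ is, via the identity $g$ and $g'$ commute $\iff r(g)\odot r(g')=0$, exactly the problem of finding a vector $v_i\in\F_2^{2n}$ with $r(g_j)\odot v_i = \delta_{ij}$ for all $j$; then take $h_i$ to be any Pauli operator with $r(h_i)=v_i$ (such an operator exists since $r$ is onto). The key point is that the twisted product $\odot$ is a \emph{nondegenerate} symplectic bilinear form on $\F_2^{2n}$: writing $u\odot v = u\,\Lambda\,v^T$ with $\Lambda=\begin{bmatrix}0&I_n\\I_n&0\end{bmatrix}$, the matrix $\Lambda$ is invertible over $\F_2$.

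First I would set up the linear map $\Phi\colon \F_2^{2n}\to\F_2^{m}$ defined by $\Phi(v)=\big(r(g_1)\odot v,\dots,r(g_m)\odot v\big)$. Because $\odot$ is nondegenerate and the $r(g_j)$ are linearly independent, $\Phi$ is surjective: the map $v\mapsto (r(g_j)\odot v)_j$ is the composite of $v\mapsto \Lambda v^T$ (a bijection on $\F_2^{2n}$) with the map $w\mapsto (r(g_j)\cdot w)_j$, and the latter is surjective onto $\F_2^m$ precisely because $r(g_1),\dots,r(g_m)$ are linearly independent (its matrix has full row rank $m$). Hence for each $i$ there exists $v_i\in\F_2^{2n}$ with $\Phi(v_i)=e_i$, the $i$-th standard basis vector; that is, $r(g_j)\odot v_i=\delta_{ij}$ for all $j$.

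Next I would lift $v_i$ back to the Pauli group: since $r\colon G_n\to\F_2^{2n}$ is onto (any binary string of length $2n$ is realized by some $A_1\otimes\cdots\otimes A_n$), pick $h_i\in G_n$ with $r(h_i)=v_i$. Then $r(g_j)\odot r(h_i)=\delta_{ij}$, which by the commutation criterion means $h_i$ commutes with $g_j$ for $j\ne i$ and anticommutes with $g_i$. Since both $g_j$ and $h_i$ are (coset representatives of) Pauli operators squaring to $\pm I$, anticommuting/commuting translates directly to $h_ig_j=(-1)^{\delta_{ij}}g_jh_i$, which is the desired relation.

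The only mild subtlety — and the step I would be most careful about — is the surjectivity of $\Phi$, i.e.\ making precise that linear independence of the $r(g_l)$ plus nondegeneracy of $\odot$ gives full rank; everything else (onto-ness of $r$, the dictionary between commutation and $\odot$, the passage from $\overline{G}_n$ back to $G_n$) is already established in the preceding pages and is routine. No genuine obstacle is expected.
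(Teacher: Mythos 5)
Your proposal is correct and follows essentially the same route as the paper: the paper's proof also reduces the problem to solving the linear system $A\,[z(h_i)\ \ x(h_i)]=e_i$ (which is exactly your map $\Phi$, with the swap of the $x$- and $z$-blocks playing the role of multiplication by $\Lambda$) and solves it using the linear independence of the rows $r(g_l)$. You simply make explicit a few steps the paper leaves implicit, such as the surjectivity of $r$ and the nondegeneracy of the twisted product.
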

\begin{proof}
This is equivalent to solving $A[z(h_i)\quad x(h_i)]=e_i$ which can be done because $A$ has linearly independent rows.
\end{proof}

Now for $x\in\F_2^m$ define \[h_x=\prod_{i=1}^m h_i^{x_i} \text{ and } E_x=\{|\psi\rangle \colon g_j|\psi\rangle=(-1)^{x_j}|\psi\rangle\}.\] Notice that $h_xg_j=(-1)^{x_j}g_jh_x$.

\begin{prop}
If $S=\langle g_1,\dots,g_m\rangle$ with irredundant generators and $-I\notin S$ then $\dim V_S=2^{n-m}$.
\end{prop}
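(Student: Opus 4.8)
The plan is to show that $\mathcal{H}_n$ decomposes as the direct sum of the joint eigenspaces $E_x$ over all $x \in \F_2^m$, that each $E_x$ is nonzero, and that all of them have the same dimension; since there are $2^m$ of them and $V_S = E_0$, this forces $\dim V_S = 2^n / 2^m = 2^{n-m}$.

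First I would observe that the projector onto $E_x$ is $P_x = \prod_{j=1}^m \tfrac{I + (-1)^{x_j} g_j}{2}$. Because the $g_j$ are commuting Hermitian operators squaring to $I$ (using that $S$ is abelian with $-I \notin S$, via the earlier proposition), each factor $\tfrac{I \pm g_j}{2}$ is an orthogonal projection, the factors commute, and the $P_x$ are mutually orthogonal with $\sum_{x \in \F_2^m} P_x = \prod_{j=1}^m \left( \tfrac{I+g_j}{2} + \tfrac{I-g_j}{2}\right) = I$. Hence $\mathcal{H}_n = \bigoplus_{x \in \F_2^m} E_x$, so $\sum_x \dim E_x = 2^n$.

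Next I would use the anti-commuting operators $h_x$ supplied by the preceding lemma to show all the $E_x$ are isomorphic, in particular all nonzero. Since $h_x g_j = (-1)^{x_j} g_j h_x$, for any $|\psi\rangle \in E_0 = V_S$ we get $g_j (h_x|\psi\rangle) = (-1)^{x_j} h_x g_j |\psi\rangle = (-1)^{x_j} h_x |\psi\rangle$, so $h_x$ maps $E_0$ into $E_x$; as $h_x$ is unitary (a Pauli operator up to scalar) it is injective, and $h_x^{-1}$ (which differs from $h_x$ only by a scalar, since $h_x^2 = \pm I$) maps $E_x$ back into $E_0$, so $h_x$ restricts to a linear isomorphism $E_0 \xrightarrow{\sim} E_x$. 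Therefore $\dim E_x = \dim E_0$ for every $x$. Combined with $\sum_{x \in \F_2^m} \dim E_x = 2^n$ and the fact that there are $2^m$ values of $x$, we conclude $\dim V_S = \dim E_0 = 2^{n-m}$.

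The main thing to be careful about is the claim that each $E_x$ is actually nonzero — this is exactly what the lemma buys us, since it guarantees $h_x$ exists for every $x \in \F_2^m$, and a priori $V_S = E_0$ could be $0$ until we know the decomposition forces every summand to have equal (hence positive) dimension; the argument is internally consistent because the isomorphisms $h_x$ show the dimensions are all equal \emph{before} we compute the common value. A minor point worth a line is verifying $h_x g_j = (-1)^{x_j} g_j h_x$ from the defining relations $h_i g_j = (-1)^{\delta_{ij}} g_j h_i$, which is immediate by moving each $h_i$ past $g_j$ in turn and collecting signs.
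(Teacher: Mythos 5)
Your proposal is correct and follows essentially the same route as the paper's proof: decompose $\mathcal{H}_n$ into the joint eigenspaces $E_x$ (the paper invokes simultaneous diagonalizability where you write out the projectors $P_x$ explicitly, a cosmetic difference) and then use the $h_x$ from the preceding lemma to show all $E_x$ are isomorphic to $E_0=V_S$. The only divergence is that you invert $h_x$ via $h_x^2=\pm I$ where the paper uses $h_x^\dagger$; both work.
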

\begin{proof}
The $g_j$ are commuting and diagonalizable so we can simultaneously diagonalize them. This gives \[\mathcal{H}_n=\bigoplus_{x\in\F_2^n}E_x.\] Now we claim that $E_0 \cong E_x$ for any $x$. To see this we have the linear map $h_x\colon E_0\to E_x$. This has the correct codomain because if $|\psi\rangle\in E_0$ then $g_jh_x|\psi\rangle=(-1)^{x_j}h_x|\psi\rangle$ so $h_x|\psi\rangle\in E_x$. There is also the linear map $h_x^\dagger \colon E_x \to E_0$ for which $|\psi\rangle\in E_x$ has $g_jh_x^\dagger|\psi\rangle=(-1)^{x_j}h_x^\dagger g_j|\psi\rangle=h_x^\dagger|\psi\rangle$ so $h_x^\dagger|\psi\rangle\in E_0$. These two maps are inverses of each other and thus we have $\dim(\mathcal{H}_n)=2^m\dim E_0$. But $V_S=E_0$ so we conclude that $\dim V_S=2^{n-m}$.
\end{proof}

\section{CSS codes}

A Calderbank-Shor-Steane (CSS)\index{CSS code} code is a stabilizer code built out of two classical codes. We give a definition allowing not necessarily full rank matrices similar to the one in \cite{tillich2009quantum}.

\begin{prop}
Assume that parity check matrices $H_X$ and $H_Z$ define classical binary linear codes $C_X$ and $C_Z$ of length $n$ and that $H_XH_Z^T=0$ (this is equivalent to $C_Z^\perp \subseteq C_X$ which is equivalent to $C_X^\perp \subseteq C_Z$). Then the stabilizer code with binary check matrix
\[A=\begin{bmatrix}
H_X & 0\\
0 & H_Z
\end{bmatrix},\]
is a quantum $[n,k,d]$ code where \[k=n-\dim(C_X^\perp)-\dim(C_Z^\perp)\] and
\[d=\min \{\wt(c) \colon c \in (C_Z \setminus C_X^\perp) \cup (C_X \setminus C_Z^\perp)\}.\]
\end{prop}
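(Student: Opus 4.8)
The plan is to verify each claim inside the check-matrix / effective-Pauli-group formalism set up above, working throughout modulo the scalars $\{\pm I,\pm iI\}$ via the isomorphism $\overline{G}_n\cong\F_2^{2n}$.

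\textbf{The stabilizer condition and the parenthetical equivalences.} Since $H_X$ is a parity check matrix for $C_X$, its rows span $C_X^\perp$ (indeed $H_X^T$ generates $C_X^\perp$), and likewise the rows of $H_Z$ span $C_Z^\perp$; hence $H_XH_Z^T=0$ says precisely that $C_X^\perp$ is orthogonal to $C_Z^\perp$, which by nondegeneracy and $(C_Z^\perp)^\perp=C_Z$ is the same as $C_X^\perp\subseteq C_Z$, and applying $\perp$ gives the equivalent form $C_Z^\perp\subseteq C_X$. Next I would check that $A$ defines a stabilizer group. Writing $A=[\,A_1\mid A_2\,]$ with $A_1=\left[\begin{smallmatrix}H_X\\0\end{smallmatrix}\right]$ and $A_2=\left[\begin{smallmatrix}0\\H_Z\end{smallmatrix}\right]$, a block multiplication gives $A_1A_2^T=\left[\begin{smallmatrix}0&H_XH_Z^T\\0&0\end{smallmatrix}\right]=0$ and $A_2A_1^T=(A_1A_2^T)^T=0$, so $A_1A_2^T+A_2A_1^T=0$: every pair of rows of $A$ has vanishing twisted product, so the associated $c=1$ Pauli operators pairwise commute. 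The first block of rows gives $X$-type operators (tensor products of $I$ and $X$) and the second block gives $Z$-type operators. To see $-I\notin S$: if a product of these generators had $r$-image $0$, then collecting the $X$-type factors and the $Z$-type factors separately, the $X$-type factors multiply — without any phase, since tensor factors commute and $X^2=I$ — to $\bigotimes X^{\mathbf{a}}$ with $\mathbf{a}$ the $X$-block of that $r$-image, i.e.\ $\mathbf{a}=0$, hence to $I$; and similarly the $Z$-type factors multiply to $I$; so the product is $I$ itself. Thus the only scalar in $S$ is $+I$, and since the generators commute, $S$ is a stabilizer group with quantum code $V_S$.

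\textbf{Dimension.} Because $-I\notin S$ we have $S\cong r(S)=C_X^\perp\oplus C_Z^\perp\subseteq\F_2^{2n}$ as $\F_2$-vector spaces. Choosing a basis of $C_X^\perp$ for the $X$-type generators and a basis of $C_Z^\perp$ for the $Z$-type generators yields $m=\dim C_X^\perp+\dim C_Z^\perp$ generators of $S$ whose $r$-images are linearly independent (a vanishing linear combination forces the $X$-block and $Z$-block parts to vanish separately), hence irredundant by the earlier proposition. Since $V_S$ depends only on $S$ and not on the chosen generating set, the formula $\dim V_S=2^{n-m}$ applies with this $m$, showing $V_S$ is a $2^k$-dimensional subspace with $k=n-\dim C_X^\perp-\dim C_Z^\perp$.

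\textbf{Distance.} For $E\in G_n$ write $r(E)=[\,\mathbf{x}\mid\mathbf{z}\,]$; the weight is insensitive to the overall phase, so it is well defined on $\F_2^{2n}$ as $\wt(E)=|\supp\mathbf{x}\cup\supp\mathbf{z}|$. Now $E\in C(S)$ iff $r(E)$ has vanishing twisted product with every generator: against an $X$-type generator $[\,\mathbf{a}\mid 0\,]$ the twisted product is $\mathbf{z}\cdot\mathbf{a}$, which vanishes for all $\mathbf{a}\in C_X^\perp$ iff $\mathbf{z}\in(C_X^\perp)^\perp=C_X$; against a $Z$-type generator $[\,0\mid\mathbf{b}\,]$ it is $\mathbf{x}\cdot\mathbf{b}$, vanishing for all $\mathbf{b}\in C_Z^\perp$ iff $\mathbf{x}\in C_Z$. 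Hence at the level of effective operators $C(S)=C_Z\oplus C_X$ while $S=C_X^\perp\oplus C_Z^\perp$, so $C(S)\setminus S$ consists of the $[\,\mathbf{x}\mid\mathbf{z}\,]$ with $\mathbf{x}\in C_Z$, $\mathbf{z}\in C_X$, and $\mathbf{x}\notin C_X^\perp$ or $\mathbf{z}\notin C_Z^\perp$. For $d\le\min\{\wt(c):c\in(C_Z\setminus C_X^\perp)\cup(C_X\setminus C_Z^\perp)\}$: for $c\in C_Z\setminus C_X^\perp$ the operator with $\mathbf{x}=c,\mathbf{z}=0$ lies in $C(S)\setminus S$ and has weight $\wt(c)$, and symmetrically for $c\in C_X\setminus C_Z^\perp$ take $\mathbf{x}=0,\mathbf{z}=c$. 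For the reverse inequality: any $[\,\mathbf{x}\mid\mathbf{z}\,]\in C(S)\setminus S$ has $\mathbf{x}\in C_Z\setminus C_X^\perp$ or $\mathbf{z}\in C_X\setminus C_Z^\perp$, and then $\wt(E)\ge\wt(\mathbf{x})$ in the first case and $\wt(E)\ge\wt(\mathbf{z})$ in the second, so $\wt(E)$ is at least the stated minimum. Combining the two bounds gives the claimed $d$. (If $k=0$ then $C(S)=S$ and the minimum is over the empty set; this degenerate case should be excluded or noted.)

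\textbf{Expected main obstacle.} The only genuinely delicate point is the stabilizer check: confirming $-I\notin S$ in spite of the $\pm i$ phases that can appear when multiplying mixed $X$-type and $Z$-type generators — handled above by splitting a candidate relation into its $X$- and $Z$-parts — together with the observation that allowing redundant rows in $H_X,H_Z$ is harmless because $S$, and hence $V_S$, $C(S)$ and the distance, depend only on the group $S$ and not on the generating set. Everything else is routine bookkeeping with the twisted product.
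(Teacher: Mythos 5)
Your proof is correct and follows essentially the same route as the paper: identify $S$ with $C_X^\perp\oplus C_Z^\perp$ and $C(S)$ with $C_Z\oplus C_X$ under the check-matrix correspondence, count independent rows for $k$, and observe that a minimum-weight element of $C(S)\setminus S$ can be taken of pure $X$- or pure $Z$-type. You are somewhat more careful than the paper in explicitly verifying the stabilizer conditions ($-I\notin S$ via pairwise commutativity of the generators, irredundancy of a basis choice) and in flagging the degenerate $k=0$ case, but the substance is the same.
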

\begin{proof}
We have $C_X=\ker(H_X)$ and $C_Z=\ker(H_Z)$ and thus $C_X^\perp=\im(H_X^T)$ and $C_Z^\perp=\im(H_Z^T)$. For the dimension of the code we have
\begin{align*}
k&=n-\text{\# independent rows of }A\\
&=n-\dim(\im H_Z^T)-\dim(\im H_X^T)\\
&=n-\dim(C_X^\perp)-\dim(C_Z^\perp).
\end{align*}

Next, recall that the distance of a quantum stabilizer code is
\[d=\min \{\wt(E) \colon E \in C(S)\setminus S\}.\]
Notice that
\[S=\langle X^a, Z^b \colon a \mbox{ is a row of } H_X, b \mbox{ is a row of } H_Z \rangle = \{X^aZ^b \colon a \in C_X^\perp, b\in C_Z^\perp\}\]
where $X^a$ is notation for $X_1^{a_1} \cdots X_n^{a_n}$. Now for $X^aZ^b\in C(S)$ we need $X^aZ^b$ to commute with $X^{a'}$ for $a'\in C_X^\perp$. Notice $X^aZ^bX^{a'}=(-1)^{a'\cdot b}X^{a'}X^aZ^b$ so we must have $a'\cdot b=0$ for all $a'\in C_X^\perp$ so $b\in C_X$. Similar arguments show $a\in C_Z$ so, ignoring global phase factors,
\[C(S)=\{X^aZ^b \colon a \in C_Z, b\in C_X \}.\]
For each $X^aZ^b$ at least one of $X^a$ and $X^b$ has lesser weight. Thus there exists a minimum weight vector in $C(S)\setminus S$ of the form $X^a$ or $Z^b$ so the formula in the proposition follows.
\end{proof}

\section{An example of a stabilizer code}

%
%

We describe Steane's 7 qubit code\index{Steane's 7 qubit code}. Begin with the classical $[7,4]$ Hamming code $C$ specified by the parity check matrix
\[H= \begin{bmatrix}
0 & 0 & 0 & 1 & 1 & 1 & 1\\
0 & 1 & 1 & 0 & 0 & 1 & 1 \\
1 & 0 & 1 & 0 & 1 & 0 & 1
\end{bmatrix}.\]
This code can correct one error because the syndromes of the 8 weight-zero and weight-one errors are distinct. The code also has $C^\perp\subseteq C$: the codewords in $C^\perp$ are linear combinations of rows of $H$ but the rows of $H$ are all orthogonal to all rows of $H$. Define a CSS code by $H_X=H_Z=H$. Then this is an $[n,k,d[$ code with $n=7$ and $k=7-3-3=1$. To find $d=\min\{\wt(c) \colon c\in C\setminus C^\perp\}$, note that there are no elements of weight less than or equal 2 because no column or sum of two columns of $H$ is equal to 0. Furthermore the sum of the first three columns is 0 so $1110000\in C$ and this codeword is not in $C^\perp$ because no sum of rows of $H$ is equal to it. Thus $d=3$ and we have a $[7,1,3]$ CSS code.

\chapter{Map-homology codes}
In this chapter we give an exposition of some code construction techniques from the point of view of homology. We begin with the most general case.

\section{Codes from \texorpdfstring{$\F_2$}{F2}-chain complexes}\label{sec:chainComplex}

Let $(C_\bullet, \partial_\bullet)$ be a chain complex\index{chain complex} of finite dimensional $\F_2$-vector spaces
\[\cdots \to C_{i+1} \xrightarrow{\partial_{i+1}} C_i \xrightarrow{\partial_i} C_{i-1} \to \cdots\]
(of course this means that $\partial_{i}\circ\partial_{i+1}=0$) and as usual define the homology\index{homology} vector spaces $H_i=\ker(\partial_i)/\im(\partial_{i+1})$.

We fix bases of the $C_i$ and use $[\cdot]$ to denote matrices with respect to these bases. Let $H_X=[\partial_i]$, a $\dim C_{i-1} \times \dim C_i$ matrix, and $H_Z=[\partial_{i+1}]^T$, so $H_Z$ is a $\dim C_{i+1} \times \dim C_i$ matrix. Then $H_X H_Z^T=[\partial_{i} \circ \partial_{i+1}]=0$ so we can form the $[n,k,d]$ CSS code with binary check matrix \[A=\begin{bmatrix}
H_X & 0\\
0 & H_Z
\end{bmatrix}.\] 

We have $n=\dim C_i$ and also notice $H_i=\ker(H_X)/\im(H_X^T)=C_X/C_Z^\perp$. From our discussion of classical dual codes we have $\dim(C_X)+\dim(C_X^\perp)=\dim(C_i)$. This gives
\begin{align*}
k & = \dim(C_i)-\dim(C_X^\perp)-\dim(C_Z^\perp)\\
&= \dim(C_X)-\dim(C_Z^\perp)\\
&=\dim(C_X/C_Z^\perp)\\
&=\dim(H_i).
\end{align*}

To understand the set $C_Z \setminus C_X^\perp$ we can consider the dual complex\index{dual!chain complex}. We now justify the intuitively reasonable idea that taking transposes of $H_X$ and $H_Z^T$ leads to cohomology.

Dualizing the chain complex
\[\cdots \to C_{i+1} \xrightarrow{\partial_{i+1}} C_i \xrightarrow{\partial_i} C_{i-1} \to \cdots\]
gives
\[\cdots \leftarrow C^*_{i+1} \xleftarrow{\partial^*_{i+1}} C^*_i \xleftarrow{\partial^*_i} C^*_{i-1} \leftarrow \cdots\]
where $C_i^*=\Hom(C_i,\F_2)$ and $\partial^*_i(\phi)$ is defined by $\partial^*_i(\phi)(v)=\phi(\partial_i(v))$. To check that the dual complex really is a complex we have
\[(\partial^*_{i+1}\circ\partial^*_{i})(\phi)(v)=\partial^*_{i+1}(\partial^*_i(\phi))(v)=\partial_i^*(\phi)(\partial_{i+1}(v))=\phi(\partial_i(\partial_{i+1}(v)))=0.\]
Having fixed a basis for each $C_i$ we have a non-degenerate bilinear form $\langle\,,\,\rangle\colon C_i\times C_i \to \F_2$ given by $\langle a,b\rangle=[a]\cdot[b]$. Then $C_i\cong C^*_i$ by $\psi_i\colon a\mapsto \langle a,\cdot\rangle$. Now define $\delta_i = \psi_i^{-1}\circ\partial^*_i\circ\psi_{i-1}$ and then since
\[\delta_{i+1}\circ\delta_i=\psi_{i+1}^{-1}\circ\partial^*_{i+1}\circ\psi_i\circ\psi_i^{-1}\circ\partial^*_i\circ\psi_{i-1}=\psi_{i+1}^{-1}\circ\partial^*_{i+1}\circ\partial^*_i\circ\psi_{i-1}=0\]
we have a chain complex
\[\cdots \leftarrow C_{i+1} \xleftarrow{\delta_{i+1}} C_i \xleftarrow{\delta_i} C_{i-1} \leftarrow \cdots.\]

The definition of $\delta_i$ gives us $\psi_i(\delta_i(v))=\partial_1^*(\psi_{i-1}(v))$ for all $v$ which implies $\langle\delta_i(v),w\rangle=\langle v,\partial_i (w)\rangle$ for all $v,w$. In particular if $C_i$ has basis $v_1,\dots,v_{\dim C_i}$ and $C_{i+1}$ has basis $w_1,\dots,w_{\dim C_{i+1}}$ then we have $\langle\delta_i(v_j),w_k\rangle=\langle v_j,\partial_i(w_k)\rangle$. For the right hand side of this equality we have
\[\langle v_j,\partial_i(w_k)\rangle=[v_j]\cdot[\partial_i(w_k)]=e_j\cdot [\partial_i] e_k=[\partial_i]_{j,k}\]
and similarly $\langle\delta_i(v_j),w_k\rangle=[\delta_i]_{k,j}$ which gives us $[\delta_i]=[\partial_i]^T$. With our earlier definitions $H_X=[\partial_i]$ and $H_Z=[\partial_{i+1}]^T$ we then have $[\delta_i]=H_X^T$ and $[\delta_{i+1}]=H_Z$.


Define the cohomology\index{cohomology} vector space $H^i$ by 
\[H^i=\ker(\delta_{i+1})/\im(\delta_i)=\ker(H_Z)/\im(H_X^T)=C_Z/C_X^\perp.\] Once again we have
\begin{align*}
k & = \dim(C_i)-\dim(C_X^\perp)-\dim(C_Z^\perp)\\
&= \dim(C_Z)-\dim(C_X^\perp)\\
&=\dim(C_Z/C_X^\perp)\\
&=\dim(H^i).
\end{align*}

Recall that $d=\min\{\wt(c)\colon c\in (C_X\setminus C_Z^\perp) \cup (C_Z\setminus C_X^\perp)\}$ so the minimum weight can be found by looking at elements of $C_X$ and $C_Z$ whose classes are not zero in $H_i$ and $H^i$ respectively.


\section{Codes from graphs on surfaces}\label{sec:codesFromMaps}
The idea of creating CSS codes from graphs embedded on surfaces\index{embedded graph} has been discussed in a number of papers. See for example \cite{zemor2009cayley} for an introduction to this and see \cite{lando2004graphs} for an introduction to graphs embedded in surfaces. Let $\Sigma$ be a compact, connected, oriented surface (i.e. 2-manifold) with genus $g$. We consider cellular homology with coefficients in $\F_2$ (see \cite{hatcheralgebraic} for an introduction to algebraic topology). It is well known that 
\[H_k(\Sigma)\cong\begin{cases}
\F_2 & \text{if $k=0$ or $2$}\\
(\F_2)^{2g} & \text{if $k=1$}\\
0 & \text{otherwise}.
\end{cases}\]

Define an \textit{embedding} of an undirected (simple) graph $G$ in $\Sigma$ to be a function $G\to \Sigma$ that takes vertices of $G$ to distinct points in $\Sigma$ and edges in $G$ to simple paths in $\Sigma$ (i.e. images of injective continuous functions $[0,1]\to\Sigma$) that intersect only at common vertices. We will denote such an embedding by $(\Sigma,G)$. Define a \textit{face} of the embedding to be a maximal connected subset of $\Sigma$ that does not intersect $G$. A \textit{2-cell embedding} is one where all the faces are homeomorphic to open disks; we shall only consider 2-cell embeddings. A 2-cell embedding is also called a \textit{map}\index{map|see{embedded graph}}.

Notice that a 2-cell embedding is actually a 2-dimensional CW complex and thus we have the standard cellular homology with $\F_2$ coefficients as follows. Let $V,E,F$ be sets of vertices, edges and faces of $(\Sigma,G)$ and $\mathcal{V}, \mathcal{E}, \mathcal{F}$ be $\F_2$-vector spaces with bases $V,E,F$ respectively. Then we have a chain complex
\[0 \to \mathcal{F} \xrightarrow{\partial_2} \mathcal{E} \xrightarrow{\partial_1} \mathcal{V} \to 0\]
where $\partial_2$ takes a face to the sum of the edges around that face and $\partial_1$ takes an edge to the sum of the vertices adjacent to that edge. We can then check that $\partial_1\circ \partial_2$ takes a face to twice the sum of the vertices around the face i.e. zero.

So we have $H_1((\Sigma,G))=\ker\partial_1/\im\partial_2$ and by homotopy invariance we know that $H_1((\Sigma,G)) \cong H_1(\Sigma)\cong (\F_2)^{2g}$. Create a code from the chain complex as discussed in Section \ref{sec:chainComplex}. This gives $H_X=[\partial_1]$, a $|V| \times |E|$ matrix that we can think of as the (unsigned) vertex-edge incidence matrix. Similarly $H_Z=[\partial_2]^T$ is an $|F| \times |E|$ matrix, the face-edge incidence matrix.

We now discuss the way we will understand $C_X\setminus C_Z^\perp$ in this context. Define the (Poincar\'{e}) dual\index{dual!Poincar\'{e}} of an embedded graph $(\Sigma,G)$ to be the embedded graph $(\Sigma,G^*)$ with:
\begin{itemize}
\item One vertex of $G^*$ inside each face of $G$,
\item For each edge $e$ of $G$ there is an edge $e^*$ of $G^*$ between the two vertices of $G^*$ corresponding to the two faces of $G$ adjacent to $e$.
\end{itemize}
Then the faces of $G^*$ correspond to the vertices of $G$. To see this, notice that a vertex of $G$ is adjacent to a set of edges of $G$ and the corresponding edges of $G^*$ form the boundary of a face.

So identifying the set of edges in $G^*$ with the edges of $G$ we can consider the homology chain complex for $(\Sigma,G^*)$:
 \[0 \to \mathcal{V} \xrightarrow{\delta_1} \mathcal{E} \xrightarrow{\delta_2} \mathcal{F} \to 0\]
 where $\delta_1(v)$ is the sum of edges of $G$ adjacent to $v$ and $\delta_2(e)$ is the sum of faces of $G$ adjacent to $e$. This shows that $[\delta_1]=[\partial_1]^T=H_X^T$ and $[\delta_2]=[\partial_2]^T=H_Z$ so
 \[H_1((\Sigma,G^*))\cong \ker H_Z/\im H_X^T=C_Z/C_X^\perp\cong H^1((\Sigma,G)).\]
 Also $(\Sigma,G^*)$ is a CW complex structure for $\Sigma$ so $H_1((\Sigma,G^*))\cong \F_2^{2g}$.

So from $(\Sigma,G)$ we have constructed a CSS code with parameters $[n,k,d]$ where $n$ is the number of edges of $G$, $k=2g$ and
\[d=\min\{\wt(c)\colon c\in (C_X\setminus C_Z^\perp) \cup (C_Z\setminus C_X^\perp)\}.\]
Elements of $C_X$ are cycles of the graph $G$ (equivalently subgraphs where each vertex is adjacent to an even number of edges) and similarly elements of $C_Z$ are cycles of $G^*$. Elements of $C_X\setminus C_Z^\perp$ are cycles whose homology class is not zero i.e. non-boundary cycles. So $d$ is the lowest weight (i.e. shortest) non-boundary cycle in $G$ or $G^*$.

There is a way to understand the lengths of cycles of $G^*$ just looking at $G$. A cycle in $G^*$ corresponds to a collection of faces of $G$ where each face is edge-adjacent to an even number of faces in the collection. The length of the cycle is exactly the number of edges of $G$ that we cross as we traverse the cycle. Call this the \textit{ladder distance} of the cycle.

\begin{ex}[Toric codes]
Toric codes\index{toric code} were suggested by Kitaev (see for example \cite{dennis2002topological}). Fix a positive integer $m$ and embed an $m\times m$ square grid $G$ in the torus (see Figure \ref{fig:toric} where $m=4$ and we represent the torus by a square with left and right edges identified and top and bottom edges identified). Then $G^*$ is isomorphic to $G$ (see Figure \ref{fig:toricDual} for the $m=4$ case). We have $n=2m^2$, the number of edges and $k=2g=2$. The minimum weight non-boundary cycles are the straight vertical and horizontal cycles of length $m$ so $d=m$. Thus the toric code from the $m\times m$ grid is a $[n,k,d]=[2m^2,2,m]$ code. From this we see $k=2$ is constant, $d=\sqrt{\frac{1}{2}n}$ and thus $kd^2=n$.

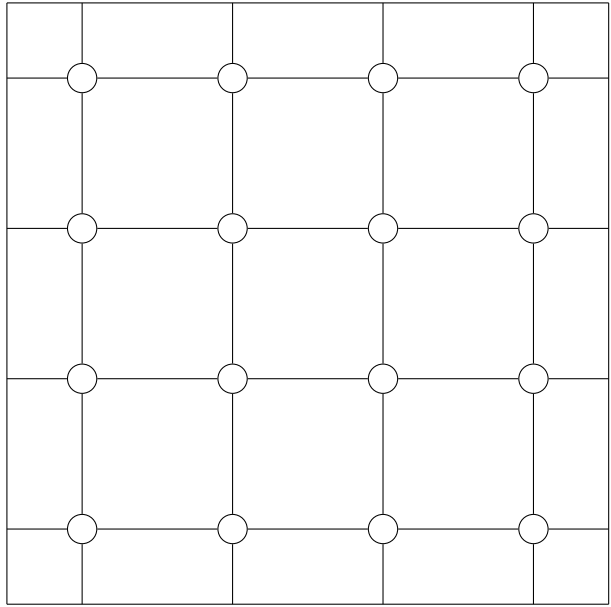
\begin{figure}[p]
\centering
\begin{tikzpicture}
	\begin{pgfonlayer}{nodelayer}
		\node [style=vertex] (0) at (-7, 6) {};
		\node [style=vertex] (1) at (-3, 6) {};
		\node [style=vertex] (2) at (-5, 4) {};
		\node [style=vertex] (3) at (-1, 4) {};
		\node [style=vertex] (4) at (-7, 2) {};
		\node [style=vertex] (5) at (-3, 2) {};
		\node [style=vertex] (6) at (-5, 0) {};
		\node [style=vertex] (7) at (-1, 0) {};
		\node [style=vertex] (8) at (-5, 6) {};
		\node [style=vertex] (9) at (-1, 6) {};
		\node [style=vertex] (10) at (-7, 4) {};
		\node [style=vertex] (11) at (-3, 4) {};
		\node [style=vertex] (12) at (-5, 2) {};
		\node [style=vertex] (13) at (-1, 2) {};
		\node [style=vertex] (14) at (-7, 0) {};
		\node [style=vertex] (15) at (-3, 0) {};
		\node [style=none] (16) at (-8, 6) {};
		\node [style=none] (17) at (-7, 7) {};
		\node [style=none] (18) at (-8, 7) {};
		\node [style=none] (19) at (-8, 4) {};
		\node [style=none] (20) at (-8, 2) {};
		\node [style=none] (21) at (-8, 0) {};
		\node [style=none] (22) at (-8, -1) {};
		\node [style=none] (23) at (-7, -1) {};
		\node [style=none] (24) at (-5, -1) {};
		\node [style=none] (25) at (-3, -1) {};
		\node [style=none] (26) at (-1, -1) {};
		\node [style=none] (27) at (0, -1) {};
		\node [style=none] (28) at (0, 0) {};
		\node [style=none] (29) at (0, 2) {};
		\node [style=none] (30) at (0, 4) {};
		\node [style=none] (31) at (0, 6) {};
		\node [style=none] (32) at (0, 7) {};
		\node [style=none] (33) at (-1, 7) {};
		\node [style=none] (34) at (-3, 7) {};
		\node [style=none] (35) at (-5, 7) {};
	\end{pgfonlayer}
	\begin{pgfonlayer}{edgelayer}
		\draw (18.center) to (22.center);
		\draw (22.center) to (27.center);
		\draw (27.center) to (32.center);
		\draw (32.center) to (18.center);
		\draw (17.center) to (0);
		\draw (0) to (16.center);
		\draw (0) to (8);
		\draw (8) to (35.center);
		\draw (34.center) to (1);
		\draw (1) to (8);
		\draw (1) to (9);
		\draw (9) to (33.center);
		\draw (9) to (31.center);
		\draw (9) to (3);
		\draw (3) to (30.center);
		\draw (3) to (11);
		\draw (11) to (1);
		\draw (11) to (2);
		\draw (2) to (8);
		\draw [in=0, out=180] (2) to (10);
		\draw (10) to (0);
		\draw (10) to (19.center);
		\draw (4) to (20.center);
		\draw (4) to (10);
		\draw (4) to (12);
		\draw (12) to (2);
		\draw (11) to (5);
		\draw (15) to (5);
		\draw (15) to (6);
		\draw (6) to (12);
		\draw (12) to (5);
		\draw (5) to (13);
		\draw (13) to (7);
		\draw (7) to (15);
		\draw (13) to (29.center);
		\draw (13) to (3);
		\draw (4) to (14);
		\draw (14) to (21.center);
		\draw (14) to (23.center);
		\draw (6) to (24.center);
		\draw (6) to (14);
		\draw (15) to (25.center);
		\draw (7) to (26.center);
		\draw (7) to (28.center);
	\end{pgfonlayer}
\end{tikzpicture}
\caption[A toric code with $m=4$]{A toric code with $m=4$.}
\label{fig:toric}
\end{figure}

\begin{figure}[p]
\centering
\begin{tikzpicture}
	\begin{pgfonlayer}{nodelayer}
		\node [style=vertex] (0) at (-7, 6) {};
		\node [style=vertex] (1) at (-3, 6) {};
		\node [style=vertex] (2) at (-5, 4) {};
		\node [style=vertex] (3) at (-1, 4) {};
		\node [style=vertex] (4) at (-7, 2) {};
		\node [style=vertex] (5) at (-3, 2) {};
		\node [style=vertex] (6) at (-5, 0) {};
		\node [style=vertex] (7) at (-1, 0) {};
		\node [style=vertex] (8) at (-5, 6) {};
		\node [style=vertex] (9) at (-1, 6) {};
		\node [style=vertex] (10) at (-7, 4) {};
		\node [style=vertex] (11) at (-3, 4) {};
		\node [style=vertex] (12) at (-5, 2) {};
		\node [style=vertex] (13) at (-1, 2) {};
		\node [style=vertex] (14) at (-7, 0) {};
		\node [style=vertex] (15) at (-3, 0) {};
		\node [style=none] (16) at (-8, 6) {};
		\node [style=none] (17) at (-7, 7) {};
		\node [style=none] (18) at (-8, 7) {};
		\node [style=none] (19) at (-8, 4) {};
		\node [style=none] (20) at (-8, 2) {};
		\node [style=none] (21) at (-8, 0) {};
		\node [style=none] (22) at (-8, -1) {};
		\node [style=none] (23) at (-7, -1) {};
		\node [style=none] (24) at (-5, -1) {};
		\node [style=none] (25) at (-3, -1) {};
		\node [style=none] (26) at (-1, -1) {};
		\node [style=none] (27) at (0, -1) {};
		\node [style=none] (28) at (0, 0) {};
		\node [style=none] (29) at (0, 2) {};
		\node [style=none] (30) at (0, 4) {};
		\node [style=none] (31) at (0, 6) {};
		\node [style=none] (32) at (0, 7) {};
		\node [style=none] (33) at (-1, 7) {};
		\node [style=none] (34) at (-3, 7) {};
		\node [style=none] (35) at (-5, 7) {};
		\node [style=dualVertex] (36) at (-6, 5) {};
		\node [style=dualVertex] (37) at (-4, 5) {};
		\node [style=dualVertex] (38) at (-2, 5) {};
		\node [style=dualVertex] (39) at (0, 5) {};
		\node [style=dualVertex] (40) at (-8, 5) {};
		\node [style=dualVertex] (41) at (-8, 3) {};
		\node [style=dualVertex] (42) at (-6, 3) {};
		\node [style=dualVertex] (43) at (-4, 3) {};
		\node [style=dualVertex] (44) at (-2, 3) {};
		\node [style=dualVertex] (45) at (0, 3) {};
		\node [style=dualVertex] (46) at (0, 1) {};
		\node [style=dualVertex] (47) at (-2, 1) {};
		\node [style=dualVertex] (48) at (-4, 1) {};
		\node [style=dualVertex] (49) at (-6, 1) {};
		\node [style=dualVertex] (50) at (-8, 1) {};
		\node [style=dualVertex] (51) at (-8, 7) {};
		\node [style=dualVertex] (52) at (-6, 7) {};
		\node [style=dualVertex] (53) at (-4, 7) {};
		\node [style=dualVertex] (54) at (-2, 7) {};
		\node [style=dualVertex] (55) at (0, 7) {};
		\node [style=dualVertex] (56) at (-8, -1) {};
		\node [style=dualVertex] (57) at (-6, -1) {};
		\node [style=dualVertex] (58) at (-4, -1) {};
		\node [style=dualVertex] (59) at (-2, -1) {};
		\node [style=dualVertex] (60) at (0, -1) {};
	\end{pgfonlayer}
	\begin{pgfonlayer}{edgelayer}
		\draw (18.center) to (22.center);
		\draw (22.center) to (27.center);
		\draw (27.center) to (32.center);
		\draw (32.center) to (18.center);
		\draw (17.center) to (0);
		\draw (0) to (16.center);
		\draw (0) to (8);
		\draw (8) to (35.center);
		\draw (34.center) to (1);
		\draw (1) to (8);
		\draw (1) to (9);
		\draw (9) to (33.center);
		\draw (9) to (31.center);
		\draw (9) to (3);
		\draw (3) to (30.center);
		\draw (3) to (11);
		\draw (11) to (1);
		\draw (11) to (2);
		\draw (2) to (8);
		\draw [in=0, out=180] (2) to (10);
		\draw (10) to (0);
		\draw (10) to (19.center);
		\draw (4) to (20.center);
		\draw (4) to (10);
		\draw (4) to (12);
		\draw (12) to (2);
		\draw (11) to (5);
		\draw (15) to (5);
		\draw (15) to (6);
		\draw (6) to (12);
		\draw (12) to (5);
		\draw (5) to (13);
		\draw (13) to (7);
		\draw (7) to (15);
		\draw (13) to (29.center);
		\draw (13) to (3);
		\draw (4) to (14);
		\draw (14) to (21.center);
		\draw (14) to (23.center);
		\draw (6) to (24.center);
		\draw (6) to (14);
		\draw (15) to (25.center);
		\draw (7) to (26.center);
		\draw (7) to (28.center);
		\draw [color=red] (36) to (37);
		\draw [color=red] (37) to (38);
		\draw [color=red] (38) to (39);
		\draw [color=red] (36) to (40);
		\draw [color=red] (36) to (42);
		\draw [color=red] (37) to (43);
		\draw [color=red] (38) to (44);
		\draw [color=red] (44) to (45);
		\draw [color=red] (44) to (43);
		\draw [color=red] (43) to (42);
		\draw [color=red] (42) to (41);
		\draw [color=red] (41) to (40);
		\draw [color=red] (40) to (18.center);
		\draw [color=red] (51) to (52);
		\draw [color=red] (52) to (53);
		\draw [color=red] (53) to (54);
		\draw [color=red] (54) to (32.center);
		\draw [color=red] (32.center) to (39);
		\draw [color=red] (39) to (45);
		\draw [color=red] (45) to (46);
		\draw [color=red] (46) to (27.center);
		\draw [color=red] (27.center) to (59);
		\draw [color=red] (59) to (58);
		\draw [color=red] (58) to (57);
		\draw [color=red] (57) to (56);
		\draw [color=red] (22.center) to (50);
		\draw [color=red] (50) to (41);
		\draw [color=red] (52) to (36);
		\draw [color=red] (53) to (37);
		\draw [color=red] (54) to (38);
		\draw [color=red] (42) to (49);
		\draw [color=red] (43) to (48);
		\draw [color=red] (44) to (47);
		\draw [color=red] (49) to (50);
		\draw [color=red] (49) to (48);
		\draw [color=red] (48) to (47);
		\draw [color=red] (47) to (46);
		\draw [color=red] (47) to (59);
		\draw [color=red] (48) to (58);
		\draw [color=red] (49) to (57);
	\end{pgfonlayer}
\end{tikzpicture}
\caption[A toric code and its dual with $m=4$]{A toric code and its dual with $m=4$. The dual is shown in red.}
\label{fig:toricDual}
\end{figure}
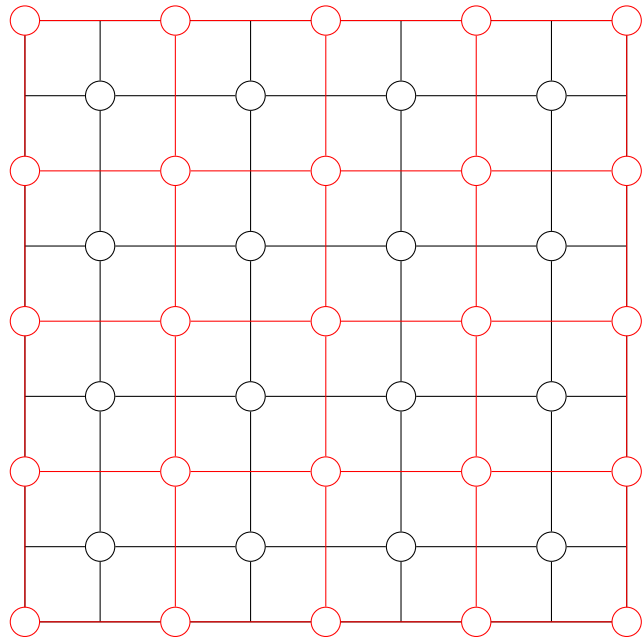

\end{ex}

\section{Planar codes}\label{sec:planarCodes}
The following is an attempt to make more explicit discussion in \cite{dennis2002topological} and \cite{bombin2007homological}. Planar codes\index{planar code} have similar properties to the toric code but allow some choice of parameters and so we can use them to show the tradeoffs in code construction from homology of graphs. They also have the practically desirable property that they can be implemented in a plane rather than a torus.

Begin with a $m \times n$ grid of vertices with $h$ `holes' (edge-contiguous collections of faces together with interior edges and vertices) removed. Think of this as a CW complex $\Gamma$ with 1-skeleton $G$ and 2-cells for each of the non-removed squares in the grid attached to $G$ by homeomorphic maps from $S^1$ to the boundary of the square. Each of these attaching maps have degree $\pm 1$. We will also define an embedding of $\tilde{G}=G$ into the sphere $S^2$ where we further attach a 2-cell to each of the removed holes and to the outer boundary of the grid.

This graph $G$ has at most $mn$ vertices and at most $(m-1)n+(n-1)m$ edges while $G$ has at most $(m-1)(n-1)$ faces and $\tilde{G}$ has at most $(m-1)(n-1)+1$ faces. The graph $\tilde{G}$ has a (Poincar\'{e}) dual graph $\tilde{G}^*$ on the sphere. As discussed in Section \ref{sec:codesFromMaps} this gives us matrices $\tilde{H}_X$ and $\tilde{H}_Z$ and since $0=H_1(S^2)=\tilde{C}_X/\tilde{C}_Z^\perp$ we have $\tilde{C}_X=\tilde{C}_Z^\perp$ and thus of course $\tilde{C}_X^\perp=\tilde{C}_Z$. In words this says all cycles of $\tilde{G}$ and $\tilde{G}^*$ are boundaries.

From the CW complex $\Gamma$ we have an $\F_2$-homology chain complex
\[0\to\mathcal{F}\to\mathcal{E}\to\mathcal{V}\to0\]
and as in Section \ref{sec:chainComplex} we construct matrices $H_X$ and $H_Z$.

Note $H_X$ is the vertex-edge incidence matrix of $G$ so $H_X=\tilde{H}_X$ and $H_Z$ is the face-edge incidence matrix of $\Gamma$ so it can be formed from $\tilde{H}_Z$, the face-edge incidence matrix of $\tilde{G}$, by removing the rows corresponding to the removed faces. These observations imply that $\tilde{C}_X=C_X$ and $\tilde{C}_Z\subseteq C_Z$.

The planar grid with $h$ holes removed is homotopy equivalent to a bouquet of $h$ circles and thus has $H_1(\Gamma)=(\F_2)^h$ and so we have an $[N,k,d]$ code with $N\leq (m-1)n+(n-1)m$ and $k=h$.

To calculate $d$ we notice that
\begin{align*}
C_X\setminus C_Z^\perp&=\tilde{C}_X\setminus C_Z^\perp\\
&=\tilde{C}_Z^\perp\setminus C_Z^\perp\\
&=\{\mbox{boundaries of }\tilde{G}\}\setminus\{\mbox{boundaries of }G\}
\end{align*}
and
\begin{align*}
C_Z\setminus C_X^\perp&=C_Z\setminus\tilde{C}_X^\perp\\
&=C_Z\setminus\tilde{C}_Z\\
&=\{\mbox{cycles of }G^*\}\setminus\{\mbox{cycles of }\tilde{G}^*\}.
\end{align*}
 
Thus a minimum weight element of $C_X\setminus C_Z^\perp$ is a boundary of a removed hole and a minimum weight element of $C_Z\setminus C_X^\perp$ is a cycle of $G^*$, which we can think of as a cycle of $\tilde{G}^*$ except we don't have to check that there an even number of edges adjacent to the vertices corresponding to removed faces. So
\begin{align*}
d=\min\{&\mbox{length of boundaries of removed faces}, \\
&\mbox{paths between holes or paths from holes to edge of grid}\}.
\end{align*}

\begin{ex}
We remove an $l\times l$ hole (including removing $(l-1)^2$ vertices and $2l(l-1)$ edges) from the middle of a square grid. See Figure \ref{fig:planarRemoveOneLbyLSquare} for an example with $l=2$. The boundary of the hole is distance $4l$ so place the hole at a ladder distance $4l$ away from the edge. Then the grid has $m=n=9l-1$ so we have a code with $N=2(9l-1)(9l-2)-2l(l-1)\sim160l^2$, $k=1$ and $d=4l$. So in this case $k=1$ is constant and $d\sim \sqrt{0.1N}$ with $kd^2 \sim 0.1N$.
\end{ex}

\begin{figure}[p]
\centering

\caption[A planar code with a grid of holes]{A planar code where we have removed a $2 \times 2$ grid of $1\times 1$ square holes.}
\label{fig:planarRemoveLbyLGridOf1by1}
\end{figure}

\begin{ex}
Remove $l\times l$ holes arranged in $l \times l$ grid, each hole $4l$ ladder distance away from other holes and edges. See Figure \ref{fig:planarRemoveLbyLGridOfLbyL} for an example with $l=2$. Then we must have $m=n=(4l)(l+1)+l=4l^2+5l$. Then $N=2(4l^2+5l)(4l^2+5l-1)-l^2(2l(l-1))\sim 30l^4$, $k=l^2$ and $d=4l$. So in this case $k\sim\sqrt{\frac{1}{30}N}$, $d\sim\sqrt[4]{\frac{128}{15}N}$ and $kd^2 \sim \frac{8}{15}N$.
\end{ex}

\begin{figure}[p]
\centering

\caption[A planar code with a grid of $2\times 2$ holes]{A planar code where we have removed a $2 \times 2$ grid of $2\times 2$ square holes.}
\label{fig:planarRemoveLbyLGridOfLbyL}
\end{figure}

\chapter{Hypermap-homology codes}\label{ch:hypermapHomologyCodes}

\section{Hypermaps}

A hypergraph can be thought of as a generalization of a graph, where edges can be connected to more than two vertices. A hypermap is an embedding of a hypergraph in a surface. We will give definitions below of topological hypermaps that follow this intuition and of combinatorial hypermaps that are described instead by a pair of permutations. We will explain that they really are the same object and show how to go back and forth between them. Our discussion of hypermaps follows \cite{cori1992maps} with some differences, particularly in our choice of graphical representation. Note that, following this source, we multiply permutations left to right. We will however write the action of $\sigma$ on $i$ as $\sigma(i)$ so that with our convention we have $(\alpha\sigma)(i)=\sigma(\alpha(i))$.

\begin{defn} A \textit{hypergraph}\index{hypergraph} with $n$ darts is a pair of partitions $V$ and $E$ of $B=\{1,\ldots,n\}$. We call the elements of $V$ \textit{vertices} and the elements of $E$ \textit{edges}. Noting that each \textit{dart} (element of $B$) is in one vertex and one edge, we say that it is \textit{incident} to that vertex and that edge. A hypergraph is \textit{connected} if whenever a union of elements of $V$ equals a union of elements of $E$ then this union is empty or the whole of $B$.
\end{defn}

For a graphical representation of a hypergraph we will use the bipartite graph representation (in the context of hypermaps this is usually called the Walsh representation, after \cite{walsh1975hypermaps}). Represent vertices as circles, edges as squares and darts as line segments between the edge and vertex they are incident to. We will label darts by their number, written counterclockwise of the dart with respect to edges.

For us hypergraphs and hypermaps will always be considered to be labeled (darts, vertices, edges and faces each have a label). We will usually not label vertices, edges and faces explicitly, instead labeling them by the set of labels of adjacent darts.

Notice that a hypergraph is connected if and only if its bipartite graph representation is connected (the definition for a hypergraph to be connected is equivalent to the bipartite graph having only one connected component).

\begin{defn}
A \textit{topological (oriented) hypermap}\index{hypermap!topological} is a 2-cell embedding of the bipartite graph representation of a connected hypergraph in a compact, connected, oriented surface.
\end{defn}

Now this surface has a genus $g$ and the bipartite graph embedding satisfies Euler's formula:
\[\text	{\# vertices}-\text{\# edges}+\text{\# faces}=2-2g.\]
Let $|F|$ be the number of faces of the embedding of the bipartite graph. Then noting that the bipartite graph has $|V|+|E|$ vertices, $n$ edges and $|F|$ faces we can derive an Euler formula for topological hypermaps:
\[|V|+|E|+|F|=n+2-2g.\]

\begin{defn}
A \textit{combinatorial (oriented) hypermap}\index{hypermap!combinatorial} is a pair $(\sigma,\alpha)$ with $\sigma,\alpha \in S_n$ such that $\langle\sigma,\alpha\rangle$ is transitive on $B$.
\end{defn}

We now describe the process to go back and forth between topological and combinatorial hypermaps. Given a topological hypermap, we define $\sigma$ and $\alpha$ by $\sigma(i)$ being the dart which is counterclockwise of dart $i$ with respect to the vertex incident to $i$ and $\alpha(i)$ being the dart which is clockwise of dart $i$ with respect to the edge incident to $i$. To show $\langle\sigma,\alpha\rangle$ is transitive we note that since the hypergraph in the definition of a topological hypermap is connected there is a path from one dart to any other which we can take by using $\sigma$ and $\alpha$.

Now note that the orbits of $\sigma$ are $V$ and the orbits of $\alpha$ are $E$. Also notice $\alpha^{-1}\sigma$ goes clockwise around the interior of a face: start at a dart whose label is inside the face, $\alpha^{-1}$ takes it counterclockwise around an edge, then $\sigma$ takes ${\alpha^{-1}}(i)$ clockwise around a vertex. Thus the orbits of $\alpha^{-1}\sigma$ are the faces of the topological hypermap.

\begin{table}[h!]
\centering
\begin{tabular}{ | c | c | c |  }
\hline
Permutation & Orbits are & Ordering \\
\hline
$\sigma$ & vertices & counterclockwise\\
$\alpha$ & edges & clockwise\\
$\alpha^{-1}\sigma$ & faces & clockwise\\
\hline
\multicolumn{3}{|c|}{Labels are counterclockwise of darts w.r.t. rotation about edges.}\\
\hline
\end{tabular}
\caption[Table of conventions for hypermaps]{Table of conventions for hypermaps.}
\label{tab:conventions}
\end{table}

To go from a combinatorial hypermap to a topological hypermap we form a polygon for each cycle of $\alpha^{-1}\sigma$. If the cycle is $(i_1,\dots,i_m)$ then place the labels $i_1$,\dots,$i_m$ inside the polygon on darts that go from vertex to edge as we travel clockwise. Outside the polygon place the label ${\alpha^{-1}}(i_j)$ on the dart counterclockwise of $i_j$ with respect to rotation about edges.

Notice each label in $B$ is on the inside of precisely one polygon (because $\alpha^{-1}\sigma$ is a partition of $B$) and on the outside of precisely one polygon (because $\alpha^{-1}$ is a permutation so takes the inside partition to a partition). So we can now specify a new topological space to be the polygons glued according to their labelings (i.e. the disjoint union of the polygons modulo identifying the corresponding edges).

\begin{prop}
The space so constructed is a compact connected oriented surface and the subspace given by the now-identified edges can be thought of as the bipartite graph representation of a connected hypergraph. This construction of a topological hypermap is an inverse process to the construction of combinatorial hypermaps given above.
\end{prop}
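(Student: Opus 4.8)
The plan is to verify the three claims --- (1) the glued space is a compact connected oriented surface; (2) the identified edges form the bipartite graph of a connected hypergraph; (3) the two constructions are mutually inverse --- in that order, since each builds on the previous. For (1), I would first note that the space is a quotient of a finite disjoint union of polygons by edge identifications, hence compact; it is a closed surface by the standard edge-identification criterion, provided I check that around each vertex of the glued complex the link is a circle (no ``figure-eight'' pinch points). Here the key observation is that the corners of the polygons come in two flavors --- ``vertex corners'' and ``edge corners'' of the bipartite graph --- and the cyclic gluing of corners around a putative vertex-node is governed precisely by $\sigma$ (resp. around an edge-node by $\alpha$), both of which are genuine permutations, so each link is a single cycle and hence a circle. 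Orientability follows by orienting every polygon consistently (clockwise, matching the convention that $\alpha^{-1}\sigma$ traverses a face clockwise) and checking that each edge is glued to its partner with opposite induced orientations, which is forced by the rule that places $i_j$ inside one polygon on a dart going vertex-to-edge and $\alpha^{-1}(i_j)$ outside on the counterclockwise dart. Connectedness of the surface follows from transitivity of $\langle\sigma,\alpha\rangle$: any two polygons can be joined by a chain of edge-adjacent polygons because one can walk from any dart to any other using $\sigma$ and $\alpha$.

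For (2), I would identify the $1$-skeleton of the glued complex (the image of the polygon edges) and show it is exactly the Walsh bipartite graph. Each glued edge corresponds to a single label $i \in B$ --- it is the common edge of the polygon containing $i$ inside and the polygon containing $i$ outside --- so the edges of the $1$-skeleton are in bijection with $B$, i.e. with darts. The nodes of the $1$-skeleton split into the $\sigma$-type vertices (whose incident darts form an orbit of $\sigma$, hence an element of $V$) and the $\alpha$-type vertices (orbits of $\alpha$, elements of $E$); every edge runs from a $\sigma$-node to an $\alpha$-node, so the graph is bipartite with parts $V$ and $E$ and with $i$ joining the vertex-block and edge-block containing $i$. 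This is the bipartite graph representation of the hypergraph $(V,E)$, and it is connected because, as already noted, transitivity of $\langle\sigma,\alpha\rangle$ is equivalent to connectedness of this graph (stated in the excerpt). Since all faces of the embedding are the interiors of the polygons, which are open disks, the embedding is a $2$-cell embedding, so we really have recovered a topological hypermap.

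For (3), I would run the combinatorial-to-topological construction and then apply the topological-to-combinatorial recipe ($\sigma'(i) = $ dart counterclockwise of $i$ about its vertex, $\alpha'(i) = $ dart clockwise of $i$ about its edge) and check $\sigma' = \sigma$, $\alpha' = \alpha$; conversely, starting from a topological hypermap, extract $(\sigma,\alpha)$, rebuild polygons from the cycles of $\alpha^{-1}\sigma$, and exhibit a homeomorphism to the original surface carrying bipartite graph to bipartite graph. Both directions reduce to bookkeeping about the local picture at a face-polygon: by construction the clockwise boundary sequence of the polygon for the cycle $(i_1,\dots,i_m)$ of $\alpha^{-1}\sigma$ alternates inside-labels $i_j$ and outside-labels $\alpha^{-1}(i_j)$, and reading off the rotation at a vertex-node or edge-node of this polygon reproduces $\sigma$ or $\alpha$ respectively; Euler's formula ($|V| + |E| + |F| = n + 2 - 2g$, also in the excerpt) confirms the face count is consistent so nothing is lost. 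For the other direction one uses the fact that a $2$-cell embedding is determined up to orientation-preserving homeomorphism by its rotation system, and the rotation system is exactly the data $(\sigma,\alpha)$.

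I expect the main obstacle to be claim (1), specifically the careful verification that the quotient is a manifold rather than merely a CW complex: one has to be meticulous that each corner of each polygon is glued to exactly the right neighboring corner so that vertex links are circles and not disjoint unions of arcs or wedges of circles, and that the orientations match up globally. The cleanest way to organize this is to phrase it as: ``the data $(\sigma,\alpha)$ together with $\alpha^{-1}\sigma$ is a triple of permutations whose product (in the appropriate order) is the identity, and such a triple is exactly a transitive permutation representation of the relevant surface-group-like presentation, equivalently a branched-cover/ribbon-graph datum,'' and then invoke the standard correspondence between such data and cellularly embedded graphs --- but since the excerpt develops this from scratch, I would instead give the hands-on link-is-a-circle argument, tracking a small diagram of two adjacent polygons, rather than appealing to a black box.
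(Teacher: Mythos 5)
Your proposal is correct and follows essentially the same route as the paper's proof: compactness from finiteness of the polygons, the manifold check at interior points, dart points (two polygons glued), and vertex/edge points (single cyclic order given by $\sigma$ resp.\ $\alpha$), orientability from consistent clockwise orientations forcing opposite orientations on glued darts, and connectedness from transitivity of $\langle\sigma,\alpha\rangle$. You simply spell out in more detail the link-is-a-circle verification and the inverse-process bookkeeping that the paper compresses into ``follows by construction.''
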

\begin{proof}
To see that the space is a surface (i.e. a 2-manifold) we check that each point has a neighborhood homeomorphic to $\R^2$. This is certainly true inside each polygon, it is true on darts since we are gluing exactly two darts together and it is true at vertices and edges because at each one we have a cyclic order of darts around it. The surface is compact because there is a finite number of polygons. The hypergraph (i.e. the 1-skeleton) is connected because $\langle \sigma,\alpha\rangle$ is transitive and this together with the interior of the polygons being connected gives us that the surface is connected. The surface is oriented because each polygon has the clockwise orientation and so when we glue together polygons at darts, those darts have opposite orientations. The second sentence follows by construction.
\end{proof}

\begin{ex}
Consider the square below to be a torus (identify boundary: top with bottom and left with right) then interpret it as the Walsh representation of a topological hypermap.

\begin{figure}[h!]
\centering
\begin{tikzpicture}
	\begin{pgfonlayer}{nodelayer}
		\node [style=edge, minimum size=6 mm] (0) at (0, 0) {$e_1$};
		\node [style=vertex, minimum size=6 mm] (1) at (-2, 0) {$v_2$};
		\node [style=edge, minimum size=6 mm] (2) at (-2, -2) {$e_2$};
		\node [style=none] (3) at (-2, 2) {};
		\node [style=none] (4) at (0, 2) {};
		\node [style=none] (5) at (2, 0) {};
		\node [style=none] (6) at (2, -2) {};
		\node [style=none] (7) at (2, 2) {};
		\node [style=none] (8) at (2, -4) {};
		\node [style=none] (9) at (0, -4) {};
		\node [style=none] (10) at (-2, -4) {};
		\node [style=none] (11) at (-4, -4) {};
		\node [style=none] (12) at (-4, -2) {};
		\node [style=none] (13) at (-4, 0) {};
		\node [style=none] (14) at (-4, 2) {};
		\node [style=vertex, minimum size=6 mm] (15) at (0, -2) {$v_1$};
	\end{pgfonlayer}
	\begin{pgfonlayer}{edgelayer}
		\draw (1) to node[left]{5} (2);
		\draw (0) to node[below]{4} (1);
		\draw (14.center) to (11.center);
		\draw (14.center) to (7.center);
		\draw (7.center) to (8.center);
		\draw (8.center) to (11.center);
		\draw (2) to node[right]{7} (10.center);
		\draw (0) to node[above]{2} (5.center);
		\draw (0) to node[left]{1} (4.center);
		\draw (3.center) to (1);
		\draw (1) to (13.center);
		\draw [in=0, out=180] (2) to node[below]{8} (12.center);
		\draw (2) to node[above]{6} (15);
		\draw (15) to (9.center);
		\draw (15) to node[right]{3} (0);
		\draw (15) to (6.center);
	\end{pgfonlayer}
\end{tikzpicture}
\caption[First example of a hypermap]{A topological hypermap.}
\end{figure}
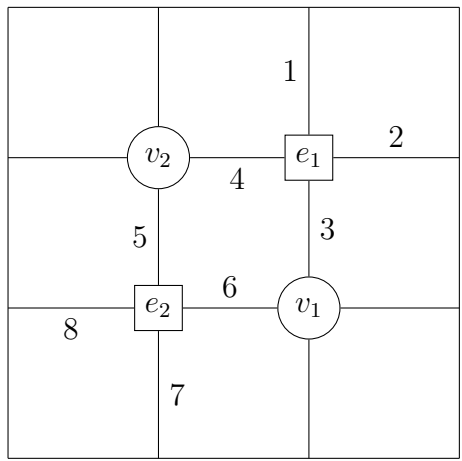

The combinatorial hypermap coming from it is $(\sigma,\alpha)$ with
\[\sigma=(1\ 8\ 3\ 6)(2\ 5\ 4\ 7), \qquad \alpha=(1\ 2\ 3\ 4)(5\ 6\ 7\ 8)\]
and from this we can calculate
\[\alpha^{-1}\sigma=(1\ 7)(2\ 8)(3\ 5)(4\ 6).\]

\end{ex}

\section{The dual hypermap}

The following definition can be found in \cite{cori1980complexity}.

\begin{defn}
The \textit{dual}\index{dual!hypermap} of a combinatorial hypermap $(\sigma,\alpha)$ is \[(\sigma',\alpha')=(\alpha^{-1}\sigma, \alpha^{-1}).\]
\end{defn}

We can see that $(\alpha')^{-1}\sigma'=(\alpha^{-1})^{-1}\alpha^{-1}\sigma=\sigma$ so duality switches vertices and faces while fixing edges (the orbits of $\alpha^{-1}$ are the same as the orbits of $\alpha$). Also $\sigma''=(\alpha')^{-1}\sigma'=\sigma$ and $\alpha''=(\alpha^{-1})^{-1}=\alpha$ so the dual of the dual is the original hypermap.

Our definition of the dual of a topological hypermap is similar to the one in \cite{mazoit2011tree}.

\begin{defn}
A \textit{dual} of a topological hypermap $H=(\Sigma,\Gamma)$ is a topological hypermap $H^*=(\Sigma_{\mathrm{op}},\Gamma^*)$ which we will now describe.
\begin{enumerate}
\item $\Sigma_{\mathrm{op}}$ is the surface $\Sigma$ with the opposite orientation.
\item The edges of $H^*$ are the edges of $H$.
\item There is precisely one vertex of $H^*$ for each face of $H$, inside that face. 
\item The darts of $H^*$ go from vertices of $H^*$ to edges around the corresponding face of $H$.
\item To label the darts of $H^*$ we draw the primal hypergraph in black and the dual hypergraph in red. Also draw a dotted line between red vertices and black vertices (this gives the \textit{canonical triangulation} of \cite{lando2004graphs}). Then each black label is inside one triangle: copy that label to the solid red line without leaving the triangle.
\end{enumerate}
\end{defn}

We will not discuss isomorphisms of combinatorial or topological hypermaps (see for example \cite{lando2004graphs}). But we do need a stronger notion of isomorphism of topological hypermaps that corresponds to equality of combinatorial hypermaps.

\begin{defn}
We say topological hypermaps $H=(\Sigma,\Gamma)$ and $H'=(\Sigma',\Gamma')$ are \textit{strongly isomorphic}, and write $H=H'$, if there exists an orientation-preserving homeomorphism $u\colon \Sigma \to \Sigma'$ with $u|_{\Gamma}$ giving an equality of hypergraphs.
\end{defn}

\begin{prop}
The dual of a topological hypermap is unique up to strong isomorphism and $(H^*)^*=H$.
\end{prop}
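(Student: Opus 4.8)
The plan is to deduce both assertions from the analogous facts for combinatorial hypermaps, using the equivalence between topological and combinatorial hypermaps already established (the two construction processes are mutually inverse up to strong isomorphism, by the proposition preceding the example). First I would pin down the combinatorial shadow of the topological dual: given $H=(\Sigma,\Gamma)$ with associated combinatorial hypermap $(\sigma,\alpha)$, I claim that the topological dual $H^*=(\Sigma_{\mathrm{op}},\Gamma^*)$ as described in the definition has combinatorial hypermap exactly $(\sigma',\alpha')=(\alpha^{-1}\sigma,\alpha^{-1})$. To verify this I would walk through the five clauses of the definition of $H^*$ and read off the permutations from the conventions in Table~\ref{tab:conventions}: the edges of $H^*$ are the edges of $H$, so the orbits of $\alpha'$ must coincide with those of $\alpha$, consistent with $\alpha'=\alpha^{-1}$; the vertices of $H^*$ are the faces of $H$, whose cyclic (clockwise) structure is governed by $\alpha^{-1}\sigma$, giving $\sigma'=\alpha^{-1}\sigma$; and the orientation reversal together with the labelling rule via the canonical triangulation is exactly what makes the clockwise/counterclockwise conventions match up. The key point is that the labelling prescription in clause~(5) is engineered so that a dart $i$ of $H$ and the dart of $H^*$ carrying the same label sit in the same triangle of the canonical triangulation, which forces the dual rotation system to be $(\alpha^{-1}\sigma,\alpha^{-1})$ and nothing else.

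Granting that identification, uniqueness up to strong isomorphism is immediate: if $H_1^*$ and $H_2^*$ are both duals of $H$ in the sense of the definition, then by the previous paragraph both have combinatorial hypermap $(\alpha^{-1}\sigma,\alpha^{-1})$, i.e. they have equal combinatorial hypermaps; and since the passage from a combinatorial hypermap to a topological one is well-defined up to strong isomorphism (this is the content of the proposition asserting the two constructions are inverse), we conclude $H_1^*=H_2^*$ in the strong-isomorphism sense. For the double-dual statement, I would compute combinatorially: the dual of $(\sigma',\alpha')=(\alpha^{-1}\sigma,\alpha^{-1})$ is $((\alpha')^{-1}\sigma',(\alpha')^{-1})=((\alpha^{-1})^{-1}\alpha^{-1}\sigma,(\alpha^{-1})^{-1})=(\sigma,\alpha)$, which is precisely the involutivity already noted in the text right after the definition of the dual of a combinatorial hypermap. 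So $(H^*)^*$ has the same combinatorial hypermap as $H$; applying once more that combinatorial data determines a topological hypermap up to strong isomorphism, we get $(H^*)^*=H$. One should also note that reversing orientation twice, $(\Sigma_{\mathrm{op}})_{\mathrm{op}}=\Sigma$, so the orientation bookkeeping in clause~(1) causes no trouble.

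The main obstacle I anticipate is the first step — rigorously checking that the graphical labelling rule in clause~(5), phrased in terms of the canonical triangulation and "copying a black label to the solid red line without leaving the triangle," really does produce the rotation system $(\alpha^{-1}\sigma,\alpha^{-1})$ rather than, say, its mirror or inverse. This is essentially a careful local picture argument at each vertex, edge, and face corner, keeping track of the orientation reversal; it is the kind of thing that is visually obvious from a well-drawn figure but slightly fussy to write out. Everything after that is pure permutation bookkeeping plus two invocations of the already-proved equivalence between the topological and combinatorial pictures, so no further difficulty is expected.
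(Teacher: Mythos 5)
Your argument is essentially correct, but it takes a genuinely different route from the paper. The paper proves uniqueness directly and geometrically: given two duals $(\Sigma_{\mathrm{op}},\Gamma^*)$ and $(\Sigma_{\mathrm{op}},{\Gamma^*}')$ of the same $H$, one sees pictorially that there is an orientation-preserving homeomorphism of $\Sigma_{\mathrm{op}}$ carrying the vertices and darts of one to the other, and the labelling clause of the definition then forces equality of the underlying hypergraphs; the involution $(H^*)^*=H$ is then obtained by checking that $H$ satisfies the five clauses of being a dual of $H^*$ and invoking the uniqueness just proved. You instead factor everything through the combinatorial side: you first establish that the topological dual carries the combinatorial data $(\alpha^{-1}\sigma,\alpha^{-1})$ — which is precisely the content of the \emph{next} proposition in the text, so you are front-loading that result (not circularly, since its proof does not use uniqueness) — and then appeal to the principle that equal combinatorial hypermaps yield strongly isomorphic topological hypermaps. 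That principle is the one soft spot: the earlier proposition only asserts that the polygon-gluing construction is an ``inverse process,'' and strong isomorphism is not defined until after it, so ``combinatorial data determines the topological hypermap up to strong isomorphism'' is an extrapolation you should state and justify explicitly rather than cite. Granting it, your approach buys a cleaner derivation of $(H^*)^*=H$ as pure permutation algebra ($((\alpha^{-1})^{-1}\alpha^{-1}\sigma,(\alpha^{-1})^{-1})=(\sigma,\alpha)$) and unifies this proposition with the one that follows, at the cost of pushing all the geometric care into verifying the labelling clause; the paper's route is more self-contained at this point in the exposition and gets involutivity essentially for free from uniqueness. Both arguments rest on a ``visually obvious but fussy to write out'' local picture, so neither is more rigorous than the other as stated.
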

\begin{proof}
If $H^*=(\Sigma_{\mathrm{op}},\Gamma^*)$ and $H^*=(\Sigma_{\mathrm{op}},{\Gamma^*}')$ are both dual hypermaps of $H$ then we can see pictorially that there exists an orientation-preserving homeomorphism $u\colon \Sigma_{\mathrm{op}} \to \Sigma_{\mathrm{op}}$ with vertices and darts of $H^*$ taken to vertices and darts of ${H^*}'$. Since the labels are specified by the definition of a dual this is enough to give $H^*={H^*}'$.

Now we check that $H$ is a dual of $H^*$ and thus this uniqueness gives us $(H^*)^*=H$.
\end{proof}

\begin{prop}
The combinatorial hypermap corresponding to the topological dual of a hypermap is equal to the combinatorial dual of the hypermap.
\end{prop}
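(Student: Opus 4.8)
The combinatorial hypermap corresponding to the topological dual of a hypermap $H$ is equal to the combinatorial dual of the combinatorial hypermap of $H$.

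**Plan.** The strategy is to take a topological hypermap $H=(\Sigma,\Gamma)$ with associated combinatorial hypermap $(\sigma,\alpha)$, form its topological dual $H^*=(\Sigma_{\mathrm{op}},\Gamma^*)$ according to the definition just given, read off its combinatorial hypermap $(\sigma^*,\alpha^*)$ directly from the Walsh/canonical-triangulation picture, and check that $(\sigma^*,\alpha^*)=(\alpha^{-1}\sigma,\alpha^{-1})$, which is precisely the combinatorial dual. Since everything is well-defined up to strong isomorphism (and strongly isomorphic topological hypermaps give equal combinatorial hypermaps), it suffices to verify the two permutation identities dart-by-dart on the canonical model. Throughout I will identify the darts of $H^*$ with the darts of $H$ via the labelling rule in item (5) of the definition of the topological dual, so that $B$ is the common dart set.

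The key steps, in order. First, recall from the earlier discussion that the orbits of $\alpha^{-1}\sigma$ are the faces of $H$, traversed clockwise, and that for a dart $i$, applying $\alpha^{-1}$ moves counterclockwise around the incident edge and then $\sigma$ moves clockwise around the incident vertex (this is the sentence in the excerpt establishing "$\alpha^{-1}\sigma$ goes clockwise around the interior of a face"). Second, compute $\alpha^*$: the edges of $H^*$ are the edges of $H$, and a dart $i$ of $H^*$, under the labelling rule, sits in the triangle of the canonical triangulation that contains the black label $i$; going clockwise around an edge of $H^*$ corresponds, via the triangulation, to the inverse of the vertex-rotation of $H$ restricted to that edge — but more cleanly, one checks from the picture that the cyclic order of darts of $H^*$ around a common edge is the reverse of their cyclic order around that same edge in $H$ composed with the face structure; carrying this out on the canonical polygon model shows $\alpha^* = \alpha^{-1}$. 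Third, compute $\sigma^*$: the vertices of $H^*$ are the faces of $H$, so the orbits of $\sigma^*$ must be the orbits of $\alpha^{-1}\sigma$; the remaining point is that the counterclockwise ordering of darts around a vertex of $H^*$ (a face of $H$) coincides with the ordering induced by $\alpha^{-1}\sigma$, which follows because $\Sigma_{\mathrm{op}}$ carries the opposite orientation — the clockwise traversal of a face of $H$ in $\Sigma$ becomes the counterclockwise traversal of the corresponding vertex of $H^*$ in $\Sigma_{\mathrm{op}}$. Hence $\sigma^*=\alpha^{-1}\sigma$. Fourth, assemble: $(\sigma^*,\alpha^*)=(\alpha^{-1}\sigma,\alpha^{-1})$, matching the Definition of the combinatorial dual, and note transitivity of $\langle\sigma^*,\alpha^*\rangle$ is automatic since $\langle\alpha^{-1}\sigma,\alpha^{-1}\rangle=\langle\sigma,\alpha\rangle$.

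**Main obstacle.** The genuinely delicate part is bookkeeping the orientations and the clockwise/counterclockwise conventions in item (5) of the dual definition — in particular getting $\alpha^*=\alpha^{-1}$ rather than $\alpha^*=\alpha$, and making sure the reversal of orientation on $\Sigma_{\mathrm{op}}$ is what converts the "clockwise around a face" rule for $\alpha^{-1}\sigma$ into the "counterclockwise around a vertex" rule defining $\sigma^*$. I expect this to be handled most convincingly by working on the canonical triangulation of a single polygon of the combinatorial model and tracking one dart and its neighbours explicitly, rather than by a purely verbal argument; the rest of the proof is then essentially formal. A secondary (but routine) point is confirming that the labelling rule in (5) is consistent — each black label lands in exactly one triangle and thus on exactly one red dart — which is already implicit in the earlier Proposition showing the polygon-gluing construction is well-defined.
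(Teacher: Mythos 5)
Your proposal is correct and follows essentially the same route as the paper's proof: both arguments identify the darts of $H^*$ with those of $H$, observe that a vertex of $H^*$ is a face of $H$ whose clockwise cyclic order (i.e.\ counterclockwise in $\Sigma_{\mathrm{op}}$) is given by $\alpha^{-1}\sigma$, and that the orientation reversal turns the edge-rotation $\alpha$ into $\alpha^{-1}$. The level of pictorial/convention-chasing detail you flag as the "main obstacle" is exactly what the paper's proof also leaves to the reader.
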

\begin{proof}
It is important to remember that the dual has the opposite orientation so all the conventions in Table \ref{tab:conventions} are reversed. Say we begin with a combinatorial hypermap $H=(\sigma,\alpha)$ which of course has combinatorial dual $H^*=(\alpha^{-1}\sigma,\alpha^{-1})$. Consider a cycle $(i_1,\dots,i_n)$ of $\alpha^{-1}\sigma$. This is a face of $H$ and so gives us a vertex of the topological dual with the darts $i_1,\dots,i_n$ clockwise around it. Our usual convention is to take vertices counterclockwise but on the dual graph we go clockwise so the vertices of $H^*$ give us exactly the permutation $\alpha^{-1}\sigma$. Similarly, cycles of $\alpha$ are edges of $H$ counterclockwise and if we take them in $H^*$ clockwise then we get the permutation $\alpha^{-1}$ as desired.
\end{proof}

\begin{ex}\label{ex:squareOctagon}

\begin{figure}[p]
\centering
\begin{tikzpicture}
	\begin{pgfonlayer}{nodelayer}
		\node [style=vertex] (0) at (-5, 5) {};
		\node [style=vertex] (1) at (-7, 1) {};
		\node [style=vertex] (2) at (-3, 1) {};
		\node [style=vertex] (3) at (-5, -3) {};
		\node [style=edge] (4) at (-5, 3) {};
		\node [style=edge] (5) at (-5, -1) {};
		\node [style=edge] (6) at (-1, 1) {};
		\node [style=edge] (7) at (1, 5) {};
		\node [style=edge] (8) at (1, -3) {};
		\node [style=edge] (9) at (-7, -5) {};
		\node [style=edge] (10) at (-3, -5) {};
		\node [style=vertex] (11) at (1, 3) {};
		\node [style=vertex] (12) at (1, -1) {};
		\node [style=vertex] (13) at (-1, -5) {};
		\node [style=vertex] (14) at (3, -5) {};
		\node [style=edge] (15) at (3, 1) {};
		\node [style=none] (16) at (-6, 6) {};
		\node [style=none] (17) at (-4, 6) {};
		\node [style=none] (18) at (0, 6) {};
		\node [style=none] (19) at (2, 6) {};
		\node [style=none] (20) at (4, 1) {};
		\node [style=none] (21) at (-8, 6) {};
		\node [style=none] (22) at (4, 6) {};
		\node [style=none] (23) at (4, -5) {};
		\node [style=none] (24) at (4, -6) {};
		\node [style=none] (25) at (2, -6) {};
		\node [style=none] (26) at (0, -6) {};
		\node [style=none] (27) at (-4, -6) {};
		\node [style=none] (28) at (-6, -6) {};
		\node [style=none] (29) at (-8, -6) {};
		\node [style=none] (30) at (-8, -5) {};
		\node [style=none] (31) at (-8, 1) {};
	\end{pgfonlayer}
	\begin{pgfonlayer}{edgelayer}
		\draw (21.center) to (22.center);
		\draw (22.center) to (24.center);
		\draw (24.center) to (29.center);
		\draw (21.center) to (29.center);
		\draw (0) to (16.center);
		\draw (0) to (17.center);
		\draw (7) to node[below left]{4} (18.center);
		\draw (7) to node[above left]{5} (19.center);
		\draw (7) to node[right]{6} (11);
		\draw (11) to node[above left]{7} (6);
		\draw (6) to node[below]{9} (2);
		\draw (2) to node[above right]{2} (4);
		\draw (4) to node[left]{1} (0);
		\draw (1) to (31.center);
		\draw (1) to node[below left]{13} (5);
		\draw (5) to node[right]{15} (3);
		\draw [in=45, out=-135] (3) to node[above left]{19} (9);
		\draw [in=0, out=180] (9) to node[below]{21} (30.center);
		\draw [in=135, out=-45] (9) to node[above right]{20} (28.center);
		\draw [in=-45, out=135] (10) to node[below left]{22} (3);
		\draw [in=45, out=-135] (10) to node[below right]{24} (27.center);
		\draw (10) to node[above]{23} (13);
		\draw (13) to (26.center);
		\draw [in=-135, out=45] (13) to node[below right]{18} (8);
		\draw [in=-135, out=45] (5) to node[above left]{14} (2);
		\draw [in=135, out=-45] (6) to node[above right]{8} (12);
		\draw [in=90, out=-90] (12) to node[left]{16} (8);
		\draw (12) to node[below right]{12} (15);
		\draw (15) to node[below left]{10} (11);
		\draw [in=180, out=0] (15) to node[above]{11} (20.center);
		\draw (8) to node[above right]{17} (14);
		\draw (14) to (25.center);
		\draw (14) to (23.center);
		\draw (1) to node[below right]{3} (4);
	\end{pgfonlayer}
\end{tikzpicture}
\caption[Octagon-square hypermap]{A hypermap with four square and four octagon faces.}
\label{fig:squareOctagon}
\end{figure}
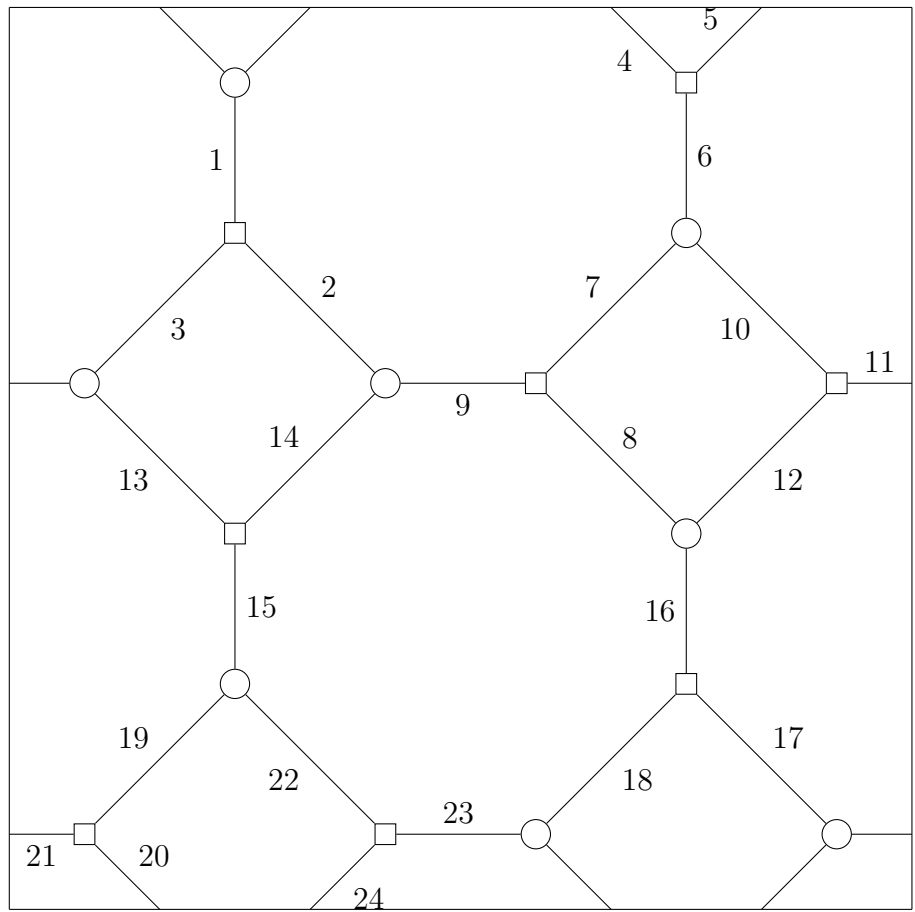

If we start with a hypergraph embedded on a torus as pictured in Figure \ref{fig:squareOctagon} then we have a hypermap with
\begin{align*}
&\sigma=(1\ 24\ 20)(2\ 14\ 9)(3\ 11\ 13)(4\ 18\ 23)(5\ 21\ 17)(6\ 7\ 10)(8\ 16\ 12)(15\ 19\ 22),\\
&\alpha=(1\ 2\ 3)(4\ 5\ 6)(7\ 8\ 9)(10\ 11\ 12)(13\ 14\ 15)(16\ 17\ 18)(19\ 20\ 21)(22\ 23\ 24),\text{ and}\\
&\alpha^{-1}\sigma=(1\ 11\ 6\ 21)(2\ 24\ 4\ 7)(3\ 14)(5\ 18)(8\ 10)(9\ 16\ 23\ 15)(12\ 13\ 19\ 17)(20\ 22).
\end{align*}
Then, using $(\sigma',\alpha')=(\alpha^{-1}\sigma, \alpha^{-1})$, we see the combinatorial dual has 
\begin{align*}
&\sigma'=(1\ 11\ 6\ 21)(2\ 24\ 4\ 7)(3\ 14)(5\ 18)(8\ 10)(9\ 16\ 23\ 15)(12\ 13\ 19\ 17)(20\ 22)\\
&\alpha'=(1\ 3\ 2)(4\ 6\ 5)(7\ 9\ 8)(10\ 12\ 11)(13\ 15\ 14)(16\ 18\ 17)(19\ 21\ 20)(22\ 24\ 23),\text{ and}\\
&(\alpha')^{-1}\sigma'=(1\ 24\ 20)(2\ 14\ 9)(3\ 11\ 13)(4\ 18\ 23)(5\ 21\ 17)(6\ 7\ 10)(8\ 16\ 12)(15\ 19\ 22).
\end{align*}
In particular $(\alpha')^{-1}\sigma'=\sigma$. Notice that this is the same as the topological dual in Figure \ref{fig:squareOctagonDual} as long as we orient the surface in the opposite way i.e. cycle around vertices clockwise, edges counterclockwise and faces counterclockwise.

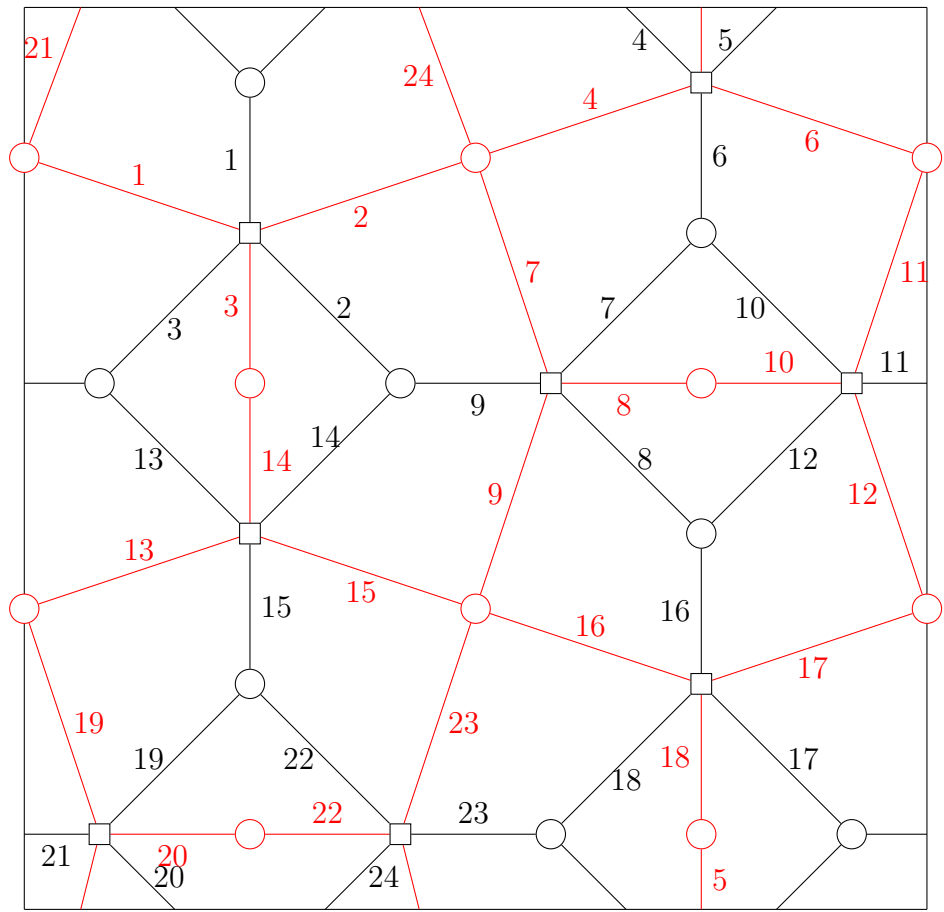
\begin{figure}[p]
\centering
\begin{tikzpicture}
	\begin{pgfonlayer}{nodelayer}
		\node [style=vertex] (0) at (-5, 5) {};
		\node [style=vertex] (1) at (-7, 1) {};
		\node [style=vertex] (2) at (-3, 1) {};
		\node [style=vertex] (3) at (-5, -3) {};
		\node [style=edge] (4) at (-5, 3) {};
		\node [style=edge] (5) at (-5, -1) {};
		\node [style=edge] (6) at (-1, 1) {};
		\node [style=edge] (7) at (1, 5) {};
		\node [style=edge] (8) at (1, -3) {};
		\node [style=edge] (9) at (-7, -5) {};
		\node [style=edge] (10) at (-3, -5) {};
		\node [style=vertex] (11) at (1, 3) {};
		\node [style=vertex] (12) at (1, -1) {};
		\node [style=vertex] (13) at (-1, -5) {};
		\node [style=vertex] (14) at (3, -5) {};
		\node [style=edge] (15) at (3, 1) {};
		\node [style=none] (16) at (-6, 6) {};
		\node [style=none] (17) at (-4, 6) {};
		\node [style=none] (18) at (0, 6) {};
		\node [style=none] (19) at (2, 6) {};
		\node [style=none] (20) at (4, 1) {};
		\node [style=none] (21) at (-8, 6) {};
		\node [style=none] (22) at (4, 6) {};
		\node [style=none] (23) at (4, -5) {};
		\node [style=none] (24) at (4, -6) {};
		\node [style=none] (25) at (2, -6) {};
		\node [style=none] (26) at (0, -6) {};
		\node [style=none] (27) at (-4, -6) {};
		\node [style=none] (28) at (-6, -6) {};
		\node [style=none] (29) at (-8, -6) {};
		\node [style=none] (30) at (-8, -5) {};
		\node [style=none] (31) at (-8, 1) {};
		\node [style=dualVertex] (32) at (1, 1) {};
		\node [style=dualVertex] (33) at (-2, 4) {};
		\node [style=dualVertex] (34) at (-5, 1) {};
		\node [style=dualVertex] (35) at (-2, -2) {};
		\node [style=dualVertex] (36) at (4, -2) {};
		\node [style=dualVertex] (37) at (-8, -2) {};
		\node [style=dualVertex] (38) at (4, 4) {};
		\node [style=dualVertex] (39) at (-8, 4) {};
		\node [style=dualVertex] (40) at (-5, -5) {};
		\node [style=dualVertex] (41) at (1, -5) {};
		\node [style=none] (42) at (1, -6) {};
		\node [style=none] (43) at (1, -6) {};
		\node [style=none] (44) at (1, 6) {};
		\node [style=none] (45) at (-5, 6) {};
		\node [style=none] (46) at (-5, -6) {};
		\node [style=none] (47) at (-2.75, 6) {};
		\node [style=none] (48) at (-2.75, 6) {};
		\node [style=none] (49) at (-2.75, -6) {};
		\node [style=none] (50) at (-7.25, 6) {};
		\node [style=none] (51) at (-7.25, 6) {};
		\node [style=none] (52) at (-7.25, -6) {};
	\end{pgfonlayer}
	\begin{pgfonlayer}{edgelayer}
		\draw (21.center) to (22.center);
		\draw (22.center) to (24.center);
		\draw (24.center) to (29.center);
		\draw (21.center) to (29.center);
		\draw (0) to (16.center);
		\draw (0) to (17.center);
		\draw (7) to node[left]{4} (18.center);
		\draw (7) to node[left]{5} (19.center);
		\draw (7) to node[right]{6} (11);
		\draw (11) to node[left]{7} (6);
		\draw (6) to node[below]{9} (2);
		\draw (2) to node[right]{2} (4);
		\draw (4) to node[left]{1} (0);
		\draw (4) to node[below]{3} (1);
		\draw (1) to (31.center);
		\draw (1) to node[left]{13} (5);
		\draw (5) to node[right]{15} (3);
		\draw [in=45, out=-135] (3) to node[left]{19} (9);
		\draw [in=0, out=180] (9) to node[below]{21} (30.center);
		\draw [in=135, out=-45] (9) to node[right]{20} (28.center);
		\draw (10) to node[left]{22} (3);
		\draw [in=45, out=-135] (10) to node[right]{24} (27.center);
		\draw (10) to node[above]{23} (13);
		\draw (13) to (26.center);
		\draw [in=-135, out=45] (13) to node[below]{18} (8);
		\draw [in=-135, out=45] (5) to node[above]{14} (2);
		\draw (6) to node[right]{8} (12);
		\draw [in=90, out=-90] (12) to node[left]{16} (8);
		\draw (12) to node[right]{12} (15);
		\draw (15) to node[left]{10} (11);
		\draw [in=180, out=0] (15) to node[above]{11} (20.center);
		\draw [in=135, out=-45] (8) to node[right]{17} (14);
		\draw (14) to (25.center);
		\draw (14) to (23.center);
		\draw [color=red] (41) to node[left]{18} (8);
		\draw [color=red, in=90, out=-90] (41) to node[right]{5} (42.center);
		\draw [color=red] (32) to node[below]{8} (6);
		\draw [color=red] (32) to node[above]{10} (15);
		\draw [color=red] (7) to (44.center);
		\draw [color=red] (33) to node[below]{2} (4);
		\draw [color=red] (33) to node[above]{4} (7);
		\draw [color=red] (33) to node[right]{7} (6);
		\draw [color=red] (35) to node[below]{15} (5);
		\draw [color=red] (35) to node[right]{23} (10);
		\draw [color=red] (35) to node[above]{16} (8);
		\draw [color=red] (35) to node[left]{9} (6);
		\draw [color=red] (34) to node[left]{3} (4);
		\draw [color=red] (34) to node[right]{14} (5);
		\draw [color=red] (39) to node[above]{1} (4);
		\draw [color=red] (38) to node[right]{11} (15);
		\draw [color=red] (38) to node[below]{6} (7);
		\draw [color=red] (5) to node[above]{13} (37);
		\draw [color=red] (37) to node[right]{19} (9);
		\draw [color=red, in=-72, out=108] (36) to node[left]{12} (15);
		\draw [color=red] (36) to node[below]{17} (8);
		\draw [color=red] (40) to node[below]{20} (9);
		\draw [color=red] (40) to node[above]{22} (10);
		\draw [color=red] (33) to node[left]{24} (47.center);
		\draw [color=red] (49.center) to (10);
		\draw [color=red] (39) to node[left, pos=0.7]{21} (50.center);
		\draw [color=red] (52.center) to (9);
	\end{pgfonlayer}
\end{tikzpicture}
\caption[Octagon-square hypermap with dual]{The octagon-square hypermap with its dual drawn in red.}
\label{fig:squareOctagonDual}
\end{figure}

\end{ex}


\section{Hypermap homology}\label{sec:hypermapHomology}
Our theory of homology of hypermaps\index{homology!hypermap} comes from \cite{cori1992maps}. Our main difference is that we are working over $\F_2$-vector spaces instead of over $\Z$-modules.

Let $\mathcal{V}, \mathcal{E}, \mathcal{F}, \mathcal{W}$ be $\F_2$-vector spaces with bases $V, E,F,W$ respectively where $W=\{w_1,\ldots,w_n\}$ . Define $d_2 \colon \mathcal{F}\to\mathcal{W}$ by $d_2(f)=\sum_{i\in f} w_i$ extended linearly and $d_1\colon \mathcal{W} \to \mathcal{V}$ by $d_1(w_i)=v_{\ni i}+v_{\ni \alpha^{-1}(i)}$. Here we use the notation $v_{\ni i}$ for the element of $V$ containing $i$ (there exists a unique such vertex because $V$ is a partition of $B$).

Also define $\imath \colon \mathcal{E} \to \mathcal{W}$ by $\imath(e)=\sum_{i\in e}w_i$ extended linearly. Notice $\imath$ is an injection ($d_2$ is also), and then define a projection map $p \colon \mathcal{W} \to \mathcal{W}/\imath(\mathcal{E})$.

\begin{prop}
We have $d_1\circ d_2=0$ and $d_1\circ \imath=0$.
\end{prop}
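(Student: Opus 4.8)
The plan is to verify the two identities directly by computing the effect of each composite on a basis element and showing it vanishes.

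\textbf{For $d_1 \circ d_2 = 0$.} Let $f \in F$ be a face, i.e. an orbit of $\alpha^{-1}\sigma$, say $f = (i_1, i_2, \dots, i_m)$ so that $(\alpha^{-1}\sigma)(i_j) = i_{j+1}$ (indices mod $m$). Then $d_2(f) = \sum_{j=1}^m w_{i_j}$, so
\[
d_1(d_2(f)) = \sum_{j=1}^m d_1(w_{i_j}) = \sum_{j=1}^m \bigl( v_{\ni i_j} + v_{\ni \alpha^{-1}(i_j)} \bigr).
\]
I want to show this sum is zero in $\mathcal{V}$, i.e. every vertex label appears an even number of times. The key observation is that $v_{\ni i_j}$ and $v_{\ni \alpha^{-1}(i_j)}$ are a bijective relabelling of the same index set: applying $\alpha^{-1}\sigma$ once sends $i_j \mapsto i_{j+1}$, and $\sigma$ sends a dart to another dart incident to the same vertex, so $v_{\ni \sigma(k)} = v_{\ni k}$ for all $k$. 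Hence $v_{\ni i_{j+1}} = v_{\ni (\alpha^{-1}\sigma)(i_j)} = v_{\ni \sigma(\alpha^{-1}(i_j))} = v_{\ni \alpha^{-1}(i_j)}$. Substituting, the second term in the $j$-th summand equals $v_{\ni i_{j+1}}$, so the whole sum telescopes: $\sum_{j=1}^m (v_{\ni i_j} + v_{\ni i_{j+1}}) = 0$ in characteristic $2$ since each $v_{\ni i_j}$ occurs exactly twice as the cyclic indices range over $1,\dots,m$.

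\textbf{For $d_1 \circ \imath = 0$.} Let $e \in E$ be an edge, i.e. an orbit of $\alpha$, say $e = (i_1, \dots, i_r)$ with $\alpha(i_j) = i_{j+1}$ (indices mod $r$). Then $\imath(e) = \sum_{j=1}^r w_{i_j}$ and
\[
d_1(\imath(e)) = \sum_{j=1}^r \bigl( v_{\ni i_j} + v_{\ni \alpha^{-1}(i_j)} \bigr).
\]
Here $\alpha^{-1}(i_j) = i_{j-1}$, which is again an element of the same edge-cycle $e$. So as $j$ ranges over $1, \dots, r$, the index $\alpha^{-1}(i_j)$ ranges over the same set $\{i_1, \dots, i_r\}$, and therefore $\sum_{j=1}^r v_{\ni \alpha^{-1}(i_j)} = \sum_{j=1}^r v_{\ni i_j}$. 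The two sums cancel over $\F_2$, giving $d_1(\imath(e)) = 0$.

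I do not expect any serious obstacle here: both claims reduce to the fact that $\F_2$-coefficients kill anything counted twice, combined with the elementary facts that $\sigma$ preserves the vertex-partition classes and that $\alpha$-cycles are exactly the edges. The only point requiring a little care is the first identity, where one must correctly identify $v_{\ni \alpha^{-1}(i_j)}$ with the ``next'' vertex $v_{\ni i_{j+1}}$ around the face; writing $\alpha^{-1}\sigma = $ (clockwise face rotation) and peeling off the $\sigma$ makes this transparent. I would present the face computation first, then remark that the edge computation is the same pattern (indeed simpler, since no $\sigma$ is involved).
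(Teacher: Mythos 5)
Your proof is correct and follows essentially the same route as the paper's: both identities come down to observing that $\sum_{i}v_{\ni i}$ and $\sum_{i}v_{\ni\alpha^{-1}(i)}$ enumerate the same vertices and hence cancel over $\F_2$. You simply make the paper's informal remark ("both of these sums are the vertices around the face") precise by exhibiting the reindexing $v_{\ni\alpha^{-1}(i_j)}=v_{\ni i_{j+1}}$ via the convention $(\alpha^{-1}\sigma)(i)=\sigma(\alpha^{-1}(i))$, which is a welcome bit of extra care but not a different argument.
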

\begin{proof}
We have $(d_1\circ d_2)(f)=d_1(\sum_{i\in f} w_i)=\sum_{i\in f}v_{\ni i}+\sum_{i\in f}v_{\ni \alpha^{-1}(i)}$. Now both of these sums are the vertices around the face $f$ so extending linearly $d_1\circ d_2=0$. Similarly $(d_1\circ\imath)(e)=d_1(\sum_{i\in e} w_i)=\sum_{i\in e}v_{\ni i}+\sum_{i\in e}v_{\ni \alpha^{-1}(i)}$ and both of these sums are the vertices adjacent to $e$.
\end{proof}

From this we can define $\partial_1\colon \mathcal{W}/\imath(\mathcal{E}) \to \mathcal{V}$ by $\partial_1(w+\imath(\mathcal{E}))=d_1(w)$ and $\partial_2=p\circ d_2$. Then we have the commutative diagram below.
\[\begin{tikzcd}[ampersand replacement=\&]
               \& \mathcal{F} \arrow{r}{d_2} \arrow{dr}{\partial_2} \& \mathcal{W} \arrow{r}{d_1}\arrow{d}{p} \& \mathcal{V} \\
               \& \& \mathcal{W}/\imath(\mathcal{E})\arrow{ur}{\partial_1} \&
\end{tikzcd}\]
From this we see $\partial_1 \circ \partial_2 = d_1\circ d_2=0$. So we have a chain complex
\[\mathcal{F} \stackrel{\partial_2}{\to} \mathcal{W}/\imath(\mathcal{E}) \stackrel{\partial_1}{\to} \mathcal{V}\]
\[V_S=\cap_{s \in S} V_{\{s\}}.\]

First we see that if $\mathcal{V}$ has basis $v_1,\dots,v_{|V|}$ then $\im\partial_1$ has basis $v_1-v_j$ for $j=2,\dots,|V|$. This set is linearly independent because the $v_j$ are a basis of $\mathcal{V}$ and it spans $\im\partial_1$ because $\partial_1(w_i)=v_{\ni i}+v_{\ni \alpha^{-1}(i)}=(v_1+v_{\ni \alpha^{-1}(i)})+(v_1+v_{\ni i})$. Thus $\dim(\im\partial_1)=|V|-1$ and $\dim H_0=1$.

Next $\ker\partial_2=\{f \in \mathcal{F} \colon d_2(f)\in \imath(\mathcal{E})\}$ which has the same dimension as $d_2(\mathcal{F}) \cap \imath(\mathcal{E})$ because $d_2$ is injective. But an element in this intersection corresponds to both a union of cycles of $\alpha^{-1}\sigma$ and a union of cycles of $\alpha$. But such a union must be empty or the whole of $B$ and so $\dim H_2=\dim(\ker\partial_2)=1$.

From this we have $\dim(\im\partial_2)=\dim\mathcal{F}-\dim(\ker\partial_2)=|F|-1$ and $\dim(\ker\partial_1)=\dim(\mathcal{W}/\imath(\mathcal{E}))-\dim(\im\partial_1)=|W|-|E|-(|V|-1)=n-|V|-|E|+1$. So then $\dim H_1=\dim(\ker\partial_1)-\dim(\im\partial_2)=(n-|V|-|E|+1)-(|F|-1)=n+2-|V|-|E|-|F|=2g$.

The following well known lemma allows us to create a homomorphism of homology groups induced from a homomorphism of chains.

\begin{lem}\label{lem:inducedHomomorphism}
If we have a commutative diagram of vector spaces and vector space homomorphisms (i.e. linear functions)
\[\begin{tikzcd}[ampersand replacement=\&]
\& A_2 \arrow{r}{f_2} \arrow{d}{\psi_2} \& A_1 \arrow{r}{f_1}\arrow{d}{\psi_1} \& A_0 \arrow{d}{\psi_0}\\
\&B_2 \arrow{r}{g_2} \& B_1 \arrow{r}{g_1} \& B_0 
\end{tikzcd}\]
with $f_1\circ f_2=0$ and $g_1\circ g_2=0$ then $\psi_1(\im f_2) \subseteq \im g_2$, $\psi_1(\ker f_1)\subseteq \ker g_1$ which gives us a homomorphism\index{induced homomorphism}
\[(\psi_1)_* \colon \frac{\ker f_1}{\im f_2} \to \frac{\ker g_1}{\im g_2}.\]
\end{lem}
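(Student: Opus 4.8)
The plan is to verify the two inclusions $\psi_1(\im f_2) \subseteq \im g_2$ and $\psi_1(\ker f_1) \subseteq \ker g_1$ directly from the commutativity of the diagram, and then observe that these inclusions are exactly what is needed for the formula $[a] \mapsto [\psi_1(a)]$ to give a well-defined linear map on the quotient spaces.

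First I would prove $\psi_1(\im f_2) \subseteq \im g_2$. Take $b \in \im f_2$, say $b = f_2(a)$ for some $a \in A_2$. Then $\psi_1(b) = \psi_1(f_2(a)) = g_2(\psi_2(a))$ by commutativity of the left square, so $\psi_1(b) \in \im g_2$. Next I would prove $\psi_1(\ker f_1) \subseteq \ker g_1$. Take $a \in \ker f_1$. Then $g_1(\psi_1(a)) = \psi_0(f_1(a))$ by commutativity of the right square, and $f_1(a) = 0$, so $g_1(\psi_1(a)) = 0$, i.e. $\psi_1(a) \in \ker g_1$.

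Finally I would define $(\psi_1)_*$ on cosets by $(\psi_1)_*(a + \im f_2) = \psi_1(a) + \im g_2$ for $a \in \ker f_1$. This makes sense because $\psi_1(a) \in \ker g_1$ by the second inclusion, so the right-hand side is a genuine element of $\ker g_1 / \im g_2$; and it is independent of the choice of coset representative because if $a - a' \in \im f_2$ then $\psi_1(a) - \psi_1(a') = \psi_1(a - a') \in \im g_2$ by the first inclusion, so $\psi_1(a) + \im g_2 = \psi_1(a') + \im g_2$. Linearity of $(\psi_1)_*$ is inherited from linearity of $\psi_1$ together with the fact that the quotient maps are linear.

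There is really no substantive obstacle here; this is the standard functoriality-of-homology argument restricted to a three-term situation, and every step is a one-line diagram chase. The only thing to be slightly careful about is the bookkeeping of which square gives which inclusion and checking well-definedness before asserting linearity, but none of this is hard.
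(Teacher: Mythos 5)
Your proof is correct and follows exactly the same route as the paper's: the same two one-line diagram chases for the inclusions, followed by defining $(\psi_1)_*$ on cosets and checking well-definedness and linearity. If anything, your well-definedness check is spelled out slightly more explicitly than the paper's.
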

\begin{proof}
An element of $\psi_1(\im f_2)$ is of the form $\psi_1(f_2(a_2))=g_2(\psi_2(a_2))\in \im g_2$. Next an element of $\psi_1(\ker f_1)$ is of the form $\psi_1(a_1)$ with $f_1(a_1)=0$. But then $g_1(\psi_1(a_1))=\psi_0(f_1(a_1))=\psi_0(0)=0$ so $\psi_1(a_1) \in \ker g_1$.

Then we can define $(\psi_1)_*(a_1+\im f_2)=\psi_1(a_1)+\im g_2$ and the facts above give us that this function is well defined and has an allowable codomain. Linearity follows from the linearity of $\psi_1$.
\end{proof}

The next lemma gives us a condition on the above diagram that lead to an isomorphism of homology.

\begin{lem}\label{lem:inducedIsomorphism}
If we have the setup in Lemma \ref{lem:inducedHomomorphism} and furthermore we have \[\psi_1(\im f_2) = \im g_2,\] \[\psi_1(\ker f_1)= \ker g_1, \mbox{ and}\] \[\ker\psi_1 \cap \ker f_1 \subseteq \im f_2\] then $(\psi_1)_*$ is an isomorphism.
\end{lem}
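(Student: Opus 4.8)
The plan is to show that the homomorphism $(\psi_1)_*\colon \ker f_1/\im f_2 \to \ker g_1/\im g_2$ constructed in Lemma \ref{lem:inducedHomomorphism} is both surjective and injective under the three additional hypotheses. Surjectivity is essentially immediate: given any class $b_1 + \im g_2$ with $b_1 \in \ker g_1$, the hypothesis $\psi_1(\ker f_1) = \ker g_1$ produces some $a_1 \in \ker f_1$ with $\psi_1(a_1) = b_1$, and then $(\psi_1)_*(a_1 + \im f_2) = b_1 + \im g_2$ by definition.

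For injectivity, I would take a class $a_1 + \im f_2$ in the kernel of $(\psi_1)_*$, i.e.\ $a_1 \in \ker f_1$ with $\psi_1(a_1) \in \im g_2$, and show $a_1 \in \im f_2$. Since $\psi_1(a_1) \in \im g_2$, write $\psi_1(a_1) = g_2(b_2)$ for some $b_2 \in B_2$. The hypothesis $\psi_1(\im f_2) = \im g_2$ (one could even get away with just ``$\supseteq$'' here, but equality is what is assumed) gives an element $a_2' \in A_2$ with $\psi_1(f_2(a_2')) = g_2(b_2) = \psi_1(a_1)$. Set $a_1' = a_1 - f_2(a_2')$ (over $\F_2$ this is $a_1 + f_2(a_2')$, but writing it as a difference makes the argument characteristic-free). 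Then $\psi_1(a_1') = 0$, so $a_1' \in \ker\psi_1$; and since $a_1 \in \ker f_1$ and $f_1 \circ f_2 = 0$, also $a_1' \in \ker f_1$. Hence $a_1' \in \ker\psi_1 \cap \ker f_1 \subseteq \im f_2$ by the third hypothesis. Therefore $a_1 = a_1' + f_2(a_2') \in \im f_2$, so $a_1 + \im f_2 = 0$ in the quotient, proving $(\psi_1)_*$ is injective.

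Combining the two parts, $(\psi_1)_*$ is a bijective linear map, hence an isomorphism of $\F_2$-vector spaces. I do not expect any genuine obstacle here --- this is a routine diagram chase --- but the one point to be careful about is bookkeeping in the injectivity argument: making sure the modified element $a_1'$ simultaneously lies in $\ker\psi_1$ \emph{and} in $\ker f_1$ before invoking the intersection hypothesis, and that the correction term $f_2(a_2')$ is genuinely in $\im f_2$ so that subtracting it does not change the class of $a_1$. It is also worth remarking, for the reader, exactly where each of the three hypotheses is used: the first ($\psi_1(\im f_2) = \im g_2$) to produce the correction term, the second ($\psi_1(\ker f_1) = \ker g_1$) for surjectivity, and the third ($\ker\psi_1 \cap \ker f_1 \subseteq \im f_2$) to conclude injectivity.
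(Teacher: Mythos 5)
Your proof is correct and follows essentially the same argument as the paper: surjectivity from $\psi_1(\ker f_1)=\ker g_1$, and injectivity by subtracting a correction term $f_2(a_2')$ from $a_1$ so that the difference lands in $\ker\psi_1\cap\ker f_1\subseteq\im f_2$. The only difference is expository (you spell out where each hypothesis is used), not mathematical.
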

\begin{proof}
We have $(\psi_1)_*$ a surjection because $\psi_1$ is a surjection from $\ker f_1$ to $\ker g_1$. To see $(\psi_1)_*$ is injective, if $(\psi_1)_*(a_1+\im f_2)=0$ for $a_1\in \ker f_1$ then $\psi_1(a_1)\in \im g_2=\psi_1(\im f_2)$ so $\psi_1(a_1)=\psi_1(f_2(a_2))$ and thus $\psi_1(a_1-f_2(a_2))=0$ and also $f_1(a_1-f_2(a_2))=0$ so $a_1-f_2(a_2) \in \ker\psi_1\cap \ker f_1 \subseteq \im f_2$ so $a_1-f_2(a_2)=f_2(a_2')$ so $a_1=f_2(a_2+a_2')\in \im f_2$ as required.
\end{proof}

The following proposition gives us another way of looking at hypermap-homology which may be informative.

\begin{prop}\label{prop:H1asDarts}
\[H_1 = \frac{\ker\partial_1}{\im\partial_2} \cong \frac{\ker d_1}{\im d_2+\imath(\mathscr{E})}.\] 
\end{prop}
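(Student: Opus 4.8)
The plan is to apply Lemma~\ref{lem:inducedIsomorphism} to a suitable commutative diagram relating the two chain complexes, taking the projection $p\colon \mathcal{W} \to \mathcal{W}/\imath(\mathcal{E})$ as the middle map $\psi_1$. Concretely, I would set up the diagram
\[\begin{tikzcd}[ampersand replacement=\&]
\& \mathcal{F} \arrow{r}{d_2} \arrow{d}{\id} \& \mathcal{W} \arrow{r}{d_1}\arrow{d}{p} \& \mathcal{V} \arrow{d}{\id}\\
\& \mathcal{F} \arrow{r}{\partial_2} \& \mathcal{W}/\imath(\mathcal{E}) \arrow{r}{\partial_1} \& \mathcal{V}
\end{tikzcd}\]
which commutes by the very definitions $\partial_2 = p\circ d_2$ and $\partial_1\circ p = d_1$ already established in the text (these are exactly the commutativity relations encoded in the triangle diagram preceding the proposition). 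The homology of the top row at the middle spot is $\ker d_1/\im d_2$, but note that this is \emph{not} quite what appears on the right-hand side of the proposition; the right-hand side is $\ker d_1/(\im d_2 + \imath(\mathcal{E}))$. So a direct application of the lemma to this diagram gives $\ker d_1/\im d_2 \to \ker\partial_1/\im\partial_2$, and I would need a second, easy step to pass from $\ker d_1/\im d_2$ to $\ker d_1/(\im d_2 + \imath(\mathcal{E}))$.

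First I would verify the three hypotheses of Lemma~\ref{lem:inducedIsomorphism} for this diagram with $\psi_1 = p$. The condition $p(\im d_2) = \im \partial_2$ is immediate since $\partial_2 = p\circ d_2$. The condition $p(\ker d_1) = \ker \partial_1$: the inclusion $\subseteq$ follows from $\partial_1\circ p = d_1$; for $\supseteq$, given $w + \imath(\mathcal{E}) \in \ker\partial_1$ we have $d_1(w) = \partial_1(w+\imath(\mathcal{E})) = 0$, so $w \in \ker d_1$ and $p(w) = w + \imath(\mathcal{E})$ — but here I must be careful, because I also need $\imath(\mathcal{E}) \subseteq \ker d_1$ so that the preimage genuinely lies in $\ker d_1$; this is exactly the content of the earlier proposition $d_1\circ\imath = 0$. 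Finally the condition $\ker p \cap \ker d_1 \subseteq \im d_2$: $\ker p = \imath(\mathcal{E})$, and $\imath(\mathcal{E}) \subseteq \ker d_1$ already, so this condition says $\imath(\mathcal{E}) \subseteq \im d_2$ — and \emph{this is false in general}. So the naive diagram does not satisfy the hypotheses, and I would instead run the argument slightly differently.

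The clean fix, which I expect to be the real heart of the proof, is to observe directly that $p$ restricts to a surjection $\ker d_1 \to \ker\partial_1$ (shown above, using $d_1\circ\imath=0$) with kernel exactly $\imath(\mathcal{E}) \cap \ker d_1 = \imath(\mathcal{E})$ (again using $d_1\circ\imath=0$), hence induces an isomorphism $\ker d_1/\imath(\mathcal{E}) \xrightarrow{\sim} \ker\partial_1$. Then I would push $\im\partial_2 = p(\im d_2)$ back: under this isomorphism, $\im\partial_2$ corresponds to $(\im d_2 + \imath(\mathcal{E}))/\imath(\mathcal{E})$. Quotienting both sides gives
\[\ker\partial_1/\im\partial_2 \;\cong\; \bigl(\ker d_1/\imath(\mathcal{E})\bigr)\big/\bigl((\im d_2 + \imath(\mathcal{E}))/\imath(\mathcal{E})\bigr) \;\cong\; \ker d_1/(\im d_2 + \imath(\mathcal{E})),\]
the last step being the third isomorphism theorem for vector spaces, which needs $\imath(\mathcal{E}) \subseteq \ker d_1$ (so the quotient makes sense) and $\im d_2 + \imath(\mathcal{E}) \subseteq \ker d_1$ (which holds because both $\im d_2 \subseteq \ker d_1$, from $d_1\circ d_2 = 0$, and $\imath(\mathcal{E})\subseteq\ker d_1$). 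Alternatively, if I want to stay inside the lemma framework, I would simply note that the lemma is being invoked as a black box for a variant statement and instead cite the two relevant propositions ($d_1\circ d_2=0$ and $d_1\circ\imath=0$) plus elementary linear algebra. The main obstacle is purely bookkeeping: making sure the role of $\imath(\mathcal{E})$ is handled correctly on both the ``numerator'' ($\ker$) and ``denominator'' ($\im$) sides, since it is \emph{not} contained in $\im d_2$ and therefore the symmetric-looking diagram argument needs the asymmetric correction above.
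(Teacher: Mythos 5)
Your proof is correct, and your diagnosis of why the naive diagram fails is exactly right: with top row $\mathcal{F}\xrightarrow{d_2}\mathcal{W}\xrightarrow{d_1}\mathcal{V}$ and $\psi_1=p$, the hypothesis $\ker p\cap\ker d_1\subseteq \im d_2$ of Lemma~\ref{lem:inducedIsomorphism} reduces to $\imath(\mathcal{E})\subseteq\im d_2$, which fails. Where you and the paper diverge is in how this is repaired. The paper keeps the lemma as the engine but changes the top chain complex: it replaces $\mathcal{F}$ by $\mathcal{F}\oplus\mathcal{E}$ and $d_2$ by $d_2\oplus\imath$, so that the homology of the top row at $\mathcal{W}$ is by construction $\ker d_1/(\im d_2+\imath(\mathcal{E}))$ and the problematic hypothesis becomes $\ker p=\imath(\mathcal{E})\subseteq\im(d_2\oplus\imath)$, which is now trivially true; the vertical map on the left is the projection $\pi_{\mathcal{F}}$, and commutativity still holds since $p\circ\imath=0$. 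You instead discard the lemma and argue by hand: $p$ restricts to a surjection $\ker d_1\to\ker\partial_1$ with kernel $\imath(\mathcal{E})$ (both facts using $d_1\circ\imath=0$), and then the third isomorphism theorem transports $\im\partial_2=p(\im d_2)$ back to $(\im d_2+\imath(\mathcal{E}))/\imath(\mathcal{E})$. The two arguments encode the same computation — absorbing $\imath(\mathcal{E})$ into the denominator — but the paper's version buys uniformity (the same Lemma~\ref{lem:inducedIsomorphism} is reused verbatim for Propositions~\ref{prop:H1asClassical} and~\ref{prop:cohomologyClassicalHomOfDual}), while yours is more elementary and makes visible exactly where $d_1\circ\imath=0$ and $d_1\circ d_2=0$ are used. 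Either is acceptable; no gap.
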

\begin{proof}
Consider the following diagram.
\[\begin{tikzcd}[ampersand replacement=\&]
\& \mathcal{F}\oplus\mathcal{E} \arrow{r}{d_2\oplus \imath} \arrow{d}{\pi_\mathcal{F}} \& \mathcal{W} \arrow{r}{d_1}\arrow{d}{p} \& \mathcal{V}\arrow{d}{id}\\
\&\mathcal{F} \arrow{r}{\partial_2} \& \mathcal{W}/\imath(\mathcal{E}) \arrow{r}{\partial_1} \& \mathcal{V}
\end{tikzcd}\]
This diagram is commutative because $p((d_2\oplus \imath)(f+e)=p(d_2(f))=\partial_2(f)=\partial_2(\pi_\mathcal{F}(f+e))$ and $\partial_1(p(w))=d_1(w)$. Next note that \[\im \partial_2=\im(p\circ d_2)=p(\im d_2)=p(\im d_2\oplus \imath)\] and using the fact that $p$ is surjective, \[\ker \partial_1=\{p(w) \colon \partial_1(p(w))=0\}=\{p(w) \colon d_1(w)=0\}=p(\ker d_1).\]

Finally, $\ker p=\imath(\mathcal{E}) \subseteq \im(d_2\oplus \imath)$ so by Lemma \ref{lem:inducedIsomorphism} we have $p_*$ an isomorphism.
\end{proof}

%
%

Next we consider how to relate hypermap-homology to the homology we can get by considering the embedded bipartite graph representation of the hypermap. The chain complex of this `classical homology'\index{homology!bipartite graph representation} is
\[\mathcal{F} \stackrel{\overline{d}_2}{\to} \mathcal{W} \stackrel{\overline{d}_1}{\to} \mathcal{V}\oplus\mathcal{E}\]
where $\overline{d}_2$ and $\overline{d}_1$ are defined by $\overline{d}_2(f) = \sum_{i\in f} w_i+w_{\alpha^{-1}(i)}$ and $\overline{d}_1(w_i)=v_{\ni i}+e_{\ni i}$ extended linearly.

\begin{prop}\label{prop:H1asClassical}
\[H_1 = \frac{\ker\partial_1}{\im\partial_2} \cong \frac{\ker \overline{d}_1}{\im \overline{d}_2}.\] 
\end{prop}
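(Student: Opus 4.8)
The plan is to apply Lemma~\ref{lem:inducedIsomorphism} to a suitable commutative square relating the two chain complexes, exactly as was done in Proposition~\ref{prop:H1asDarts}. The two complexes share the same middle and left spaces: $\mathcal{F}$ maps into $\mathcal{W}$ in both (via $\overline{d}_2$ in the classical complex and via $d_2$ in the hypermap complex, with $\partial_2 = p\circ d_2$ after projecting), and the difference is only in the rightmost space ($\mathcal{V}\oplus\mathcal{E}$ versus $\mathcal{W}/\imath(\mathcal{E})$). So I would first reconcile the two descriptions of the degree-$2$ map: note that $\overline{d}_2(f) = \sum_{i\in f}(w_i + w_{\alpha^{-1}(i)})$ while $d_2(f) = \sum_{i\in f} w_i$, and since a face is a $\sigma^{-1}\alpha$-type cycle these need not be literally equal, so I will want the diagram to go in the direction that makes the vertical map on the left be the identity on $\mathcal{F}$ but allow the middle vertical map to be something other than the identity — or, more cleanly, use the characterization from Proposition~\ref{prop:H1asDarts} that $H_1 \cong \ker d_1/(\im d_2 + \imath(\mathcal{E}))$ and prove instead that $\ker\overline{d}_1/\im\overline{d}_2 \cong \ker d_1/(\im d_2 + \imath(\mathcal{E}))$, which keeps $\mathcal{W}$ fixed in the middle.

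Concretely, I would set up the square
\[\begin{tikzcd}[ampersand replacement=\&]
\& \mathcal{F} \arrow{r}{\overline{d}_2} \arrow{d}{\psi_2} \& \mathcal{W} \arrow{r}{\overline{d}_1}\arrow{d}{\psi_1} \& \mathcal{V}\oplus\mathcal{E}\arrow{d}{\psi_0}\\
\& \mathcal{F}\oplus\mathcal{E} \arrow{r}{d_2\oplus\imath} \& \mathcal{W} \arrow{r}{d_1} \& \mathcal{V}
\end{tikzcd}\]
with $\psi_1 = \id_{\mathcal{W}}$, $\psi_0(v,e) = d_1(\text{preimage})$... — actually the right vertical map is the subtle one, so let me instead take $\psi_1$ to be the identity on $\mathcal{W}$ going \emph{downward} and check the three hypotheses of Lemma~\ref{lem:inducedIsomorphism}: (i) $\psi_1(\im\overline{d}_2) = \im(d_2\oplus\imath)$, i.e. $\im\overline{d}_2 = \im d_2 + \imath(\mathcal{E})$ as subspaces of $\mathcal{W}$; (ii) $\psi_1(\ker\overline{d}_1) = \ker d_1$, i.e. $\ker\overline{d}_1 = \ker d_1$ inside $\mathcal{W}$; and (iii) $\ker\psi_1\cap\ker\overline{d}_1 \subseteq \im\overline{d}_2$, which is automatic since $\psi_1$ is injective. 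For (ii), observe $\overline{d}_1(w_i) = v_{\ni i} + e_{\ni i}$, so $\sum_i a_i w_i \in \ker\overline{d}_1$ iff both $\sum a_i v_{\ni i} = 0$ in $\mathcal{V}$ and $\sum a_i e_{\ni i} = 0$ in $\mathcal{E}$; I need this to match $\ker d_1$ where $d_1(w_i) = v_{\ni i} + v_{\ni\alpha^{-1}(i)}$. Here I will use that the edge-sum condition $\sum a_i e_{\ni i} = 0$ says the coefficient vector is constant on each $\alpha$-orbit (each edge), and that $v_{\ni\alpha^{-1}(i)}$ only differs from $v_{\ni i}$ by moving within an edge — so a chain supported with constant coefficients on edges automatically has $d_1$ telescoping to $\overline{d}_1$ up to the vertex part; making this precise is the crux. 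For (i), $\overline{d}_2(f) = \sum_{i\in f}(w_i + w_{\alpha^{-1}(i)}) = d_2(f) + d_2(f)$ reindexed... no — $\overline{d}_2(f)$ equals $(\id + \alpha^{-1})$ applied to $d_2(f)$ in an appropriate sense, and modulo $\imath(\mathcal{E})$ (which is spanned by the $\alpha$-orbit sums) this identification should give $\im\overline{d}_2 + \imath(\mathcal{E}) = \im d_2 + \imath(\mathcal{E})$; then I must additionally check $\imath(\mathcal{E}) \subseteq \im\overline{d}_2$ or adjust the claim. Actually $\imath(e) = \sum_{i\in e} w_i$, and since $e$ is a single $\alpha$-orbit, $\overline{d}_2$ of... hmm, this requires care, and it may be cleaner to enlarge the left space to $\mathcal{F}\oplus\mathcal{E}$ on the top row too with map $\overline{d}_2\oplus\imath$ (a legitimate move since $\overline{d}_1\circ\imath = 0$ as one checks directly, and $\imath(\mathcal{E}) \subseteq \ker\overline{d}_1$ already), which does not change the classical $H_1$ because $\imath(\mathcal{E}) \subseteq \im\overline{d}_2$ for bipartite-graph homology (the boundary of the faces meeting an edge sum to that edge's dart-cycle) — wait, that last fact is itself what needs the argument. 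I will settle this by working out the small diagram chase for a generic $\alpha$-orbit.

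The main obstacle, then, is hypothesis (i): showing $\im\overline{d}_2 + \imath(\mathcal{E}) = \im d_2 + \imath(\mathcal{E})$ in $\mathcal{W}$, which amounts to a careful bookkeeping of how the two different degree-$2$ boundary maps relate once one quotients by the edge-sums $\imath(\mathcal{E})$. I expect that modulo $\imath(\mathcal{E})$ the operator $\alpha^{-1}$ acts trivially on each $w_i$ (since $w_i \equiv w_{\alpha^{-1}(i)}$ is false in general — rather, $\sum_{i\in e}w_i \equiv 0$), so the right model is: in $\mathcal{W}/\imath(\mathcal{E})$, we have $p(\overline{d}_2(f)) = p\bigl(\sum_{i\in f}w_i\bigr) + p\bigl(\sum_{i\in f}w_{\alpha^{-1}(i)}\bigr)$, and both sums, being unions of face-cycles, are $p(d_2(f))$ after noting $\alpha^{-1}$ permutes $B$ and $f$ is $\alpha^{-1}$-... no, $f$ is not $\alpha$-stable. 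The honest resolution: use Proposition~\ref{prop:H1asDarts} to replace $H_1$ by $\ker d_1/(\im d_2 + \imath(\mathcal{E}))$, then it suffices to exhibit a linear isomorphism $\ker\overline{d}_1/\im\overline{d}_2 \to \ker d_1/(\im d_2+\imath(\mathcal{E}))$ directly, and for this I would check $\ker\overline{d}_1 = \ker d_1 \cap (\text{constant-on-edges chains})$ — but in fact any chain in $\ker d_1$ need not be constant on edges, so the map is the inclusion-induced one and I must verify it is both well-defined ($\im\overline{d}_2 \subseteq \im d_2 + \imath(\mathcal{E})$, true since $w_i + w_{\alpha^{-1}(i)} = d_1$-related... ) and bijective. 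I will carry out this chase explicitly in the proof, and I anticipate that the one genuinely non-formal input is the dimension count already established above ($\dim H_1 = 2g$ for both complexes by homotopy invariance of the classical one and the computation for the hypermap one), which I can fall back on to conclude bijectivity once injectivity or surjectivity of the comparison map is shown.
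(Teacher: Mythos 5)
There is a genuine gap: you never identify the comparison map that actually relates the two complexes, and every candidate you float fails. The paper's proof hinges on the (non-obvious) chain map $\mu\colon \mathcal{W}/\imath(\mathcal{E})\to\mathcal{W}$ given by $\mu(w_i+\imath(\mathcal{E}))=w_i+w_{\alpha^{-1}(i)}$, i.e.\ $1+\alpha^{-1}_*$ descended to the quotient; this is well defined because $\alpha^{-1}$ permutes each edge, it is injective because anything killed by $1+\alpha^{-1}_*$ is $\alpha$-invariant and hence lies in $\imath(\mathcal{E})$, and it intertwines $\partial_1,\partial_2$ with $\overline{d}_1,\overline{d}_2$. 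Your proposal instead tries to take $\psi_1=\id_{\mathcal{W}}$ (or an inclusion-induced map after invoking Proposition~\ref{prop:H1asDarts}), and the hypotheses of Lemma~\ref{lem:inducedIsomorphism} then fail for dimension reasons: $\dim\ker \overline{d}_1=n-|V|-|E|+1$ while $\dim\ker d_1=n-|V|+1$, and $\dim\im\overline{d}_2=|F|-1$ while $\dim(\im d_2+\imath(\mathcal{E}))=|F|+|E|-1$, so both of your conditions (i) and (ii) are off by $|E|$ dimensions. Worse, the identity is not even a chain map between these complexes: $\ker\overline{d}_1\not\subseteq\ker d_1$ in general, since for $x\in\ker\overline{d}_1$ one gets $d_1(x)=\sum_j a_{\alpha(j)}v_{\ni j}$, the vertex boundary of the \emph{permuted} chain, which need not vanish (take two darts of the same edge that are not swapped by $\alpha$). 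Your side claim that $\imath(\mathcal{E})\subseteq\ker\overline{d}_1$ is also false: $\overline{d}_1(\imath(e))=\sum_{i\in e}v_{\ni i}+|e|\,e$, which is nonzero whenever $|e|$ is odd or some vertex meets $e$ in an odd number of darts.

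The second missing ingredient is the surjectivity argument. Even granting the right map $\mu$, one must show every classical cycle $x\in\ker\overline{d}_1$ lifts, i.e.\ $x=\mu(w+\imath(\mathcal{E}))$ for some $w$. The paper does this constructively: since $x$ meets each edge in an even number of darts, one pairs up the darts of $x$ around each edge going counterclockwise and writes each pair $w_i+w_j$ as $\mu$ applied to the arc of darts between them. Your proposal contains no mechanism for producing such a preimage; the fallback to the equality $\dim H_1=2g$ on both sides cannot rescue this, because without a well-defined comparison map there is nothing whose injectivity or surjectivity the dimension count could upgrade to bijectivity.
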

\begin{proof}
Consider the following diagram
\[\begin{tikzcd}[ampersand replacement=\&]
\& \mathcal{F} \arrow{r}{\partial_2} \arrow{d}{id} \& \mathcal{W}/\imath(\mathcal{E}) \arrow{r}{\partial_1}\arrow{d}{\mu} \& \mathcal{V}\arrow{d}{i_{\mathcal{V}}}\\
\&\mathcal{F} \arrow{r}{\overline{d}_2} \& \mathcal{W} \arrow{r}{\overline{d}_1} \& \mathcal{V}\oplus\mathcal{E}
\end{tikzcd}\]
where $\mu(w_i+\imath(\mathcal{E}))=w_i+w_{\alpha^{-1}(i)}$. This is well defined because \[\mu(\imath(e))=\sum_{i\in e}w_i+\sum_{i\in e}w_{\alpha^{-1}(i)}=0.\] The diagram is commutative because \[\mu(\partial_2(f))=\mu\left(\sum_{i\in f} w_i\right)=\sum_{i\in f} w_i+w_{\alpha^{-1}(i)}=\overline{d}_2(f)\] and 
\begin{align*}
\overline{d}_1(\mu(w_i+\imath(\mathcal{E})))&=\overline{d}_1(w_i+w_{\alpha^{-1}(i)})\\
&=v_{\ni i}+e_{\ni i}+v_{\ni\alpha^{-1}(i)}+e_{\ni \alpha^{-1}(i)}\\
&=v_{\ni i}+v_{\ni\alpha^{-1}(i)}\\
&=\partial_1(w_i+\imath(\mathcal{E}).
\end{align*}

We have $\im \overline{d}_2=\im(\mu\circ \partial_2)=\mu(\im\partial_2)$ and the earlier lemma gives us $\mu(\ker \partial_1)\subseteq \ker \overline{d}_1$. We now aim to prove the other inclusion.

If $x\in \ker\overline{d}_1$ then at each edge (i.e. square vertex of the bipartite graph representation) of the hypermap there are an even number of adjacent darts of $x$. Choose an ordering $w_{i_0},w_{i_1},\dots,w_{i_{2m-1}}$ of the darts at each edge so that there are no other darts in $x$ as we go counterclockwise from $w_{i_{2k}}$ to $w_{i_{2k+1}}$. Then if the darts in between $w_{i_{2k}}$ and $w_{i_{2k+1}}$ are $w_{l_1},\ldots,w_{l_a}$ we have $\mu(w_i+w_{l_1}+\dots+w_{l_a}+\imath(\mathcal{E}))=w_i+w_j$. Now by linearity, since all darts in $x$ are adjacent to one edge of the hypermap, this gives us a recipe to write $x=\mu(w+\imath(\mathcal{E}))$ for some $w\in \mathcal{W}$.

Finally, note that if $w=\sum_{i\in I} w_i$ for some index set $I$ then $\mu(w+\imath(\mathcal{E}))=\mu(\sum_{i\in I} w_i+\imath(\mathcal{E}))=\sum_{i\in I} (w_i + w_{\alpha^{-1}(i)})$. Thus if $\mu(w+\imath(\mathcal{E}))=0$ then $w$ is fixed by $\alpha^{-1}$ and must be in $\imath(\mathcal{E})$ so in fact $\mu$ is injective. Thus Lemma \ref{lem:inducedIsomorphism} shows that $\mu_*$ is an isomorphism.
\end{proof}

This gives us three ways to think about $H_1$:
\begin{enumerate}
\item $\partial_1$-cycles in $\mathcal{W}/\imath(\mathcal{E})$ modulo $\partial_2$-boundaries,
\item $d_1$-cycles in $\mathcal{W}$ modulo $d_2$-boundaries and $\imath$-boundaries, or
\item $\overline{d}_1$-cycles in $\mathcal{W}$ modulo $\overline{d}_2$-boundaries.
\end{enumerate}

\section{Hypermap-homology codes}\label{sec:hypHomCodes}

Choose a basis for $\mathcal{W}/\imath(\mathcal{E})$ (the bases for $\mathcal{F}$ and $\mathcal{V}$ have already been fixed).

Let $H_X=[\partial_1]$ and $H_Z^T=[\partial_2]$. Then we have $H_XH_Z^T=[\partial_1\circ\partial_2]=[0]=0$ so we can create a CSS code from $H_X$ and $H_Z$. Now $H_X$ is a $|V|\times(|W|-|E|)$ matrix and $H_Z$ is a $|F|\times(|W|-|E|)$ matrix. By the theory of CSS codes this code has parameters $[|W|-|E|,2g,D]$ with
\[D=\min\{\wt(c) \colon c \in (C_X \setminus C_Z^\perp) \cup (C_Z \setminus C_X^\perp)\}\]
where $C_X=\ker(H_X)$, $C_X^\perp=\im(H_X^T)$, $C_Z=\ker(H_Z)$ and $C_Z^\perp=\im(H_Z^T)$.

\begin{ex}
In this example we create a hypermap-homology code from example \ref{ex:squareOctagon}. Since $\mathcal{W}/\imath(\mathcal{E})=\langle w_1,w_2,w_3,\dots,w_{24} \mid w_1+w_2+w_3,\dots,w_{22}+w_{23}+w_{24}\rangle$ we may take as a basis $w_1,w_2,w_4,w_5,\dots,w_{23}$. Choosing this basis we can calculate
\[H_X=\begin{bmatrix}
1 & 1 & 0 & 0 & 0 & 0 & 0 & 0 & 0 & 0 & 0 & 0 & 0 & 1 & 1 & 0 \\
0 & 1 & 0 & 0 & 1 & 0 & 0 & 0 & 0 & 1 & 0 & 0 & 0 & 0 & 0 & 0 \\
1 & 0 & 0 & 0 & 0 & 0 & 0 & 1 & 1 & 1 & 0 & 0 & 0 & 0 & 0 & 0 \\
0 & 0 & 1 & 1 & 0 & 0 & 0 & 0 & 0 & 0 & 1 & 0 & 0 & 0 & 0 & 1 \\
0 & 0 & 0 & 1 & 0 & 0 & 0 & 0 & 0 & 0 & 0 & 1 & 1 & 0 & 0 & 0 \\
0 & 0 & 1 & 0 & 1 & 1 & 1 & 1 & 0 & 0 & 0 & 0 & 0 & 0 & 0 & 0 \\
0 & 0 & 0 & 0 & 0 & 1 & 1 & 0 & 0 & 0 & 1 & 1 & 0 & 0 & 0 & 0 \\
0 & 0 & 0 & 0 & 0 & 0 & 0 & 0 & 1 & 0 & 0 & 0 & 1 & 1 & 1 & 1
\end{bmatrix}\]
and
\[H_Z=\begin{bmatrix}
1 & 0 & 1 & 1 & 0 & 0 & 0 & 1 & 0 & 0 & 0 & 0 & 1 & 1 & 0 & 0 \\
0 & 1 & 1 & 0 & 1 & 0 & 0 & 0 & 0 & 0 & 0 & 0 & 0 & 0 & 1 & 1 \\
1 & 1 & 0 & 0 & 0 & 0 & 0 & 0 & 0 & 1 & 0 & 0 & 0 & 0 & 0 & 0 \\
0 & 0 & 0 & 1 & 0 & 0 & 0 & 0 & 0 & 0 & 1 & 1 & 0 & 0 & 0 & 0 \\
0 & 0 & 0 & 0 & 0 & 1 & 1 & 0 & 0 & 0 & 0 & 0 & 0 & 0 & 0 & 0 \\
0 & 0 & 0 & 0 & 1 & 1 & 0 & 0 & 1 & 1 & 1 & 0 & 0 & 0 & 0 & 1 \\
0 & 0 & 0 & 0 & 0 & 0 & 1 & 1 & 1 & 0 & 0 & 1 & 1 & 0 & 0 & 0 \\
0 & 0 & 0 & 0 & 0 & 0 & 0 & 0 & 0 & 0 & 0 & 0 & 0 & 1 & 1 & 0
\end{bmatrix}.\]

In both matrices the columns are labelled by $w_1,w_2,w_4,w_5,w_7,w_8,\dots,w_{23}$. In $H_X$ the rows are labelled by vertices $v_1,\dots,v_8$ while in $H_Z$ the rows are labelled by faces $f_1,\dots,f_8$. We give a few examples of how to compute these matrices:
\[\partial_1(w_{10})=v_{\ni 10} + v_{\ni 12}=v_6+v_7\]
\[\partial_2(f_1)=w_1+w_{11}+w_6+w_{21}=w_1+w_4+w_5+w_{11}+w_{19}+w_{20}.\]

By computer search (see Section \ref{sec:OOsoftware}) we calculate $D=2$ so this is a $[16,2,2]$ code.

\end{ex}

\begin{ex}
The hypermap in Figure \ref{fig:hexagon} has 48 darts, 16 vertices, 16 edges and 16 faces. We choose a basis for $\mathcal{W}/\imath(\mathcal{E})$ corresponding to all the darts except the ones to the bottom right of each edge. Computer search shows that the minimum distance is 3 and thus this is a $[32,2,3]$ code.

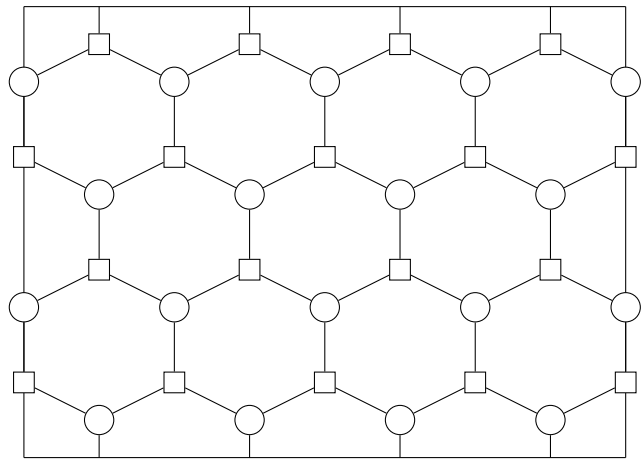
\begin{figure}[p!]
\centering
\begin{tikzpicture}
	\begin{pgfonlayer}{nodelayer}
		\node [style=edge] (0) at (-7, 6) {};
		\node [style=edge] (1) at (-5, 6) {};
		\node [style=edge] (2) at (-3, 6) {};
		\node [style=edge] (3) at (-1, 6) {};
		\node [style=vertex] (4) at (-8, 5.5) {};
		\node [style=vertex] (5) at (-6, 5.5) {};
		\node [style=vertex] (6) at (-4, 5.5) {};
		\node [style=vertex] (7) at (-2, 5.5) {};
		\node [style=vertex] (8) at (0, 5.5) {};
		\node [style=edge] (9) at (-8, 4.5) {};
		\node [style=edge] (10) at (-6, 4.5) {};
		\node [style=edge] (11) at (-4, 4.5) {};
		\node [style=edge] (12) at (-2, 4.5) {};
		\node [style=edge] (13) at (0, 4.5) {};
		\node [style=vertex] (14) at (-7, 4) {};
		\node [style=vertex] (15) at (-5, 4) {};
		\node [style=vertex] (16) at (-3, 4) {};
		\node [style=vertex] (17) at (-1, 4) {};
		\node [style=edge] (18) at (-7, 3) {};
		\node [style=edge] (19) at (-5, 3) {};
		\node [style=edge] (20) at (-3, 3) {};
		\node [style=edge] (21) at (-1, 3) {};
		\node [style=vertex] (22) at (-8, 2.5) {};
		\node [style=vertex] (23) at (-6, 2.5) {};
		\node [style=vertex] (24) at (-4, 2.5) {};
		\node [style=vertex] (25) at (-2, 2.5) {};
		\node [style=vertex] (26) at (0, 2.5) {};
		\node [style=edge] (27) at (-8, 1.5) {};
		\node [style=edge] (28) at (-6, 1.5) {};
		\node [style=edge] (29) at (-4, 1.5) {};
		\node [style=edge] (30) at (-2, 1.5) {};
		\node [style=edge] (31) at (0, 1.5) {};
		\node [style=vertex] (32) at (-7, 1) {};
		\node [style=vertex] (33) at (-5, 1) {};
		\node [style=vertex] (34) at (-3, 1) {};
		\node [style=vertex] (35) at (-1, 1) {};
		\node [style=none] (36) at (-8, 6.5) {};
		\node [style=none] (37) at (-7, 6.5) {};
		\node [style=none] (38) at (-5, 6.5) {};
		\node [style=none] (39) at (-3, 6.5) {};
		\node [style=none] (40) at (-1, 6.5) {};
		\node [style=none] (41) at (-1, 0.5) {};
		\node [style=none] (42) at (-3, 0.5) {};
		\node [style=none] (43) at (-5, 0.5) {};
		\node [style=none] (44) at (-7, 0.5) {};
		\node [style=none] (45) at (0, 6.5) {};
		\node [style=none] (46) at (0, 0.5) {};
		\node [style=none] (47) at (-8, 0.5) {};
	\end{pgfonlayer}
	\begin{pgfonlayer}{edgelayer}
		 \draw (4) to (0);
		\draw (0) to (5);
		\draw (5) to (10);
		\draw (10) to (14);
		\draw (14) to (9);
		\draw (4) to (9);
		\draw (5) to (1);
		\draw (0) to (37.center);
		\draw (1) to (38.center);
		\draw (1) to (6);
		\draw (6) to (11);
		\draw (11) to (15);
		\draw (15) to (10);
		\draw (6) to (2);
		\draw (2) to (7);
		\draw (7) to (12);
		\draw (12) to (16);
		\draw (16) to (11);
		\draw (7) to (3);
		\draw (3) to (8);
		\draw (8) to (13);
		\draw (13) to (17);
		\draw (17) to (12);
		\draw (14) to (18);
		\draw (18) to (22);
		\draw (18) to (23);
		\draw [in=90, out=-90] (23) to (28);
		\draw (28) to (32);
		\draw (32) to (27);
		\draw (27) to (22);
		\draw (23) to (19);
		\draw (19) to (15);
		\draw (20) to (24);
		\draw (24) to (19);
		\draw (20) to (16);
		\draw (17) to (21);
		\draw (21) to (25);
		\draw (25) to (20);
		\draw (21) to (26);
		\draw (3) to (40.center);
		\draw (2) to (39.center);
		\draw (31) to (26);
		\draw (31) to (35);
		\draw (35) to (41.center);
		\draw (35) to (30);
		\draw (30) to (25);
		\draw (24) to (29);
		\draw (29) to (33);
		\draw (33) to (28);
		\draw (32) to (44.center);
		\draw (33) to (43.center);
		\draw (34) to (42.center);
		\draw (34) to (29);
		\draw (34) to (30);
		\draw (36.center) to (45.center);
		\draw (45.center) to (46.center);
		\draw (47.center) to (46.center);
		\draw (47.center) to (36.center);
	\end{pgfonlayer}
\end{tikzpicture}
\caption[Hexagon hypermap]{A hypermap with hexagonal faces, embedded on a torus.}
\label{fig:hexagon}
\end{figure}
\end{ex}

\begin{ex}
The hypermap in Figure \ref{fig:4x4square} has 32 darts, 8 vertices, 8 edges and 16 faces. The dual hypermap can be seen in Figure \ref{fig:4x4squareDual}. We choose a basis for $\mathcal{W}/\imath(\mathcal{E})$ corresponding to all the darts except the ones below each edge. Computer search shows that the minimum distance is 4 and thus this is a $[24,2,4]$ code. Compare this to the toric code of minimum distance 4 which is a $[32,2,4]$ code.
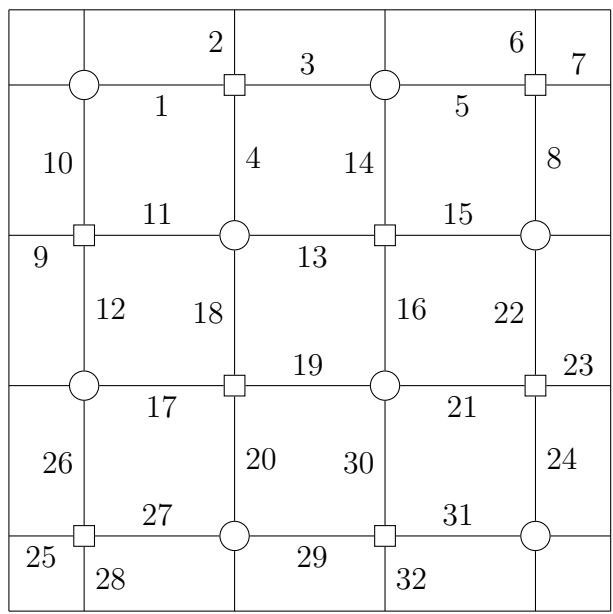
\begin{figure}[p!]
\centering
\begin{tikzpicture}
	\begin{pgfonlayer}{nodelayer}
		\node [style=vertex] (0) at (-7, 6) {};
		\node [style=vertex] (1) at (-3, 6) {};
		\node [style=vertex] (2) at (-5, 4) {};
		\node [style=vertex] (3) at (-1, 4) {};
		\node [style=vertex] (4) at (-7, 2) {};
		\node [style=vertex] (5) at (-3, 2) {};
		\node [style=vertex] (6) at (-5, 0) {};
		\node [style=vertex] (7) at (-1, 0) {};
		\node [style=edge] (8) at (-5, 6) {};
		\node [style=edge] (9) at (-1, 6) {};
		\node [style=edge] (10) at (-7, 4) {};
		\node [style=edge] (11) at (-3, 4) {};
		\node [style=edge] (12) at (-5, 2) {};
		\node [style=edge] (13) at (-1, 2) {};
		\node [style=edge] (14) at (-7, 0) {};
		\node [style=edge] (15) at (-3, 0) {};
		\node [style=none] (16) at (-8, 6) {};
		\node [style=none] (17) at (-7, 7) {};
		\node [style=none] (18) at (-8, 7) {};
		\node [style=none] (19) at (-8, 4) {};
		\node [style=none] (20) at (-8, 2) {};
		\node [style=none] (21) at (-8, 0) {};
		\node [style=none] (22) at (-8, -1) {};
		\node [style=none] (23) at (-7, -1) {};
		\node [style=none] (24) at (-5, -1) {};
		\node [style=none] (25) at (-3, -1) {};
		\node [style=none] (26) at (-1, -1) {};
		\node [style=none] (27) at (0, -1) {};
		\node [style=none] (28) at (0, 0) {};
		\node [style=none] (29) at (0, 2) {};
		\node [style=none] (30) at (0, 4) {};
		\node [style=none] (31) at (0, 6) {};
		\node [style=none] (32) at (0, 7) {};
		\node [style=none] (33) at (-1, 7) {};
		\node [style=none] (34) at (-3, 7) {};
		\node [style=none] (35) at (-5, 7) {};
	\end{pgfonlayer}
	\begin{pgfonlayer}{edgelayer}
		\draw (18.center) to (22.center);
		\draw (22.center) to (27.center);
		\draw (27.center) to (32.center);
		\draw (32.center) to (18.center);
		\draw (17.center) to (0);
		\draw (0) to (16.center);
		\draw (0) to node[below]{1} (8);
		\draw (8) to node[left]{2} (35.center);
		\draw (34.center) to (1);
		\draw (1) to node[above]{3} (8);
		\draw (1) to node[below]{5} (9);
		\draw (9) to node[left]{6} (33.center);
		\draw (9) to node[above]{7} (31.center);
		\draw (9) to node[right]{8} (3);
		\draw (3) to (30.center);
		\draw (3) to node[above]{15} (11);
		\draw (11) to node[left]{14} (1);
		\draw (11) to node[below]{13} (2);
		\draw (2) to node[right]{4} (8);
		\draw [in=0, out=180] (2) to node[above]{11} (10);
		\draw (10) to node[left]{10} (0);
		\draw (10) to node[below]{9} (19.center);
		\draw (4) to (20.center);
		\draw (4) to node[right]{12} (10);
		\draw (4) to node[below]{17} (12);
		\draw [in=-90, out=90] (12) to node[left]{18} (2);
		\draw (11) to node[right]{16} (5);
		\draw (15) to node[left]{30} (5);
		\draw (15) to node[below]{29} (6);
		\draw (6) to node[right]{20} (12);
		\draw (12) to node[above]{19} (5);
		\draw (5) to node[below]{21} (13);
		\draw (13) to node[right]{24} (7);
		\draw (7) to node[above]{31} (15);
		\draw (13) to node[above]{23} (29.center);
		\draw [in=-90, out=90] (13) to node[left]{22} (3);
		\draw (4) to node[left]{26} (14);
		\draw (14) to node[below]{25} (21.center);
		\draw (14) to node[right]{28} (23.center);
		\draw (6) to (24.center);
		\draw (6) to node[above]{27} (14);
		\draw (15) to node[right]{32} (25.center);
		\draw (7) to (26.center);
		\draw (7) to (28.center);
	\end{pgfonlayer}
\end{tikzpicture}
\caption[$4\times4$ square hypermap]{A hypermap with square faces, embedded on a torus.}
\label{fig:4x4square}
\end{figure}

\begin{figure}[p!]
\centering
\begin{tikzpicture}[x=1.8 cm, y=1.8 cm]
	\begin{pgfonlayer}{nodelayer}
		\node [style=vertex] (0) at (-7, 6) {};
		\node [style=vertex] (1) at (-3, 6) {};
		\node [style=vertex] (2) at (-5, 4) {};
		\node [style=vertex] (3) at (-1, 4) {};
		\node [style=vertex] (4) at (-7, 2) {};
		\node [style=vertex] (5) at (-3, 2) {};
		\node [style=vertex] (6) at (-5, 0) {};
		\node [style=vertex] (7) at (-1, 0) {};
		\node [style=edge] (8) at (-5, 6) {};
		\node [style=edge] (9) at (-1, 6) {};
		\node [style=edge] (10) at (-7, 4) {};
		\node [style=edge] (11) at (-3, 4) {};
		\node [style=edge] (12) at (-5, 2) {};
		\node [style=edge] (13) at (-1, 2) {};
		\node [style=edge] (14) at (-7, 0) {};
		\node [style=edge] (15) at (-3, 0) {};
		\node [style=none] (16) at (-8, 6) {};
		\node [style=none] (17) at (-7, 7) {};
		\node [style=dualVertex] (18) at (-8, 7) {};
		\node [style=none] (19) at (-8, 4) {};
		\node [style=none] (20) at (-8, 2) {};
		\node [style=none] (21) at (-8, 0) {};
		\node [style=dualVertex] (22) at (-8, -1) {};
		\node [style=none] (23) at (-7, -1) {};
		\node [style=none] (24) at (-5, -1) {};
		\node [style=none] (25) at (-3, -1) {};
		\node [style=none] (26) at (-1, -1) {};
		\node [style=none] (27) at (0, -1) {};
		\node [style=none] (28) at (0, 0) {};
		\node [style=none] (29) at (0, 2) {};
		\node [style=none] (30) at (0, 4) {};
		\node [style=none] (31) at (0, 6) {};
		\node [style=dualVertex] (32) at (0, 7) {};
		\node [style=none] (33) at (-1, 7) {};
		\node [style=none] (34) at (-3, 7) {};
		\node [style=none] (35) at (-5, 7) {};
		\node [style=dualVertex] (36) at (-6, 5) {};
		\node [style=dualVertex] (37) at (-4, 5) {};
		\node [style=dualVertex] (38) at (-2, 5) {};
		\node [style=dualVertex] (39) at (-2, 3) {};
		\node [style=dualVertex] (40) at (-4, 3) {};
		\node [style=dualVertex] (41) at (-6, 3) {};
		\node [style=dualVertex] (42) at (-6, 1) {};
		\node [style=dualVertex] (43) at (-4, 1) {};
		\node [style=dualVertex] (44) at (-2, 1) {};
		\node [style=dualVertex] (45) at (-6, -1) {};
		\node [style=dualVertex] (46) at (-4, -1) {};
		\node [style=dualVertex] (47) at (-2, -1) {};
		\node [style=dualVertex] (48) at (0, -1) {};
		\node [style=dualVertex] (49) at (0, 1) {};
		\node [style=dualVertex] (50) at (0, 3) {};
		\node [style=dualVertex] (51) at (0, 5) {};
		\node [style=dualVertex] (52) at (-8, 5) {};
		\node [style=dualVertex] (53) at (-8, 3) {};
		\node [style=dualVertex] (54) at (-8, 1) {};
		\node [style=dualVertex] (55) at (-6, 7) {};
		\node [style=dualVertex] (56) at (-4, 7) {};
		\node [style=dualVertex] (57) at (-2, 7) {};
	\end{pgfonlayer}
	\begin{pgfonlayer}{edgelayer}
		\draw (18) to (22);
		\draw (22) to (27.center);
		\draw (27.center) to (32);
		\draw (32) to (18);
		\draw (17.center) to (0);
		\draw (0) to (16.center);
		\draw (0) to node[below]{1} (8);
		\draw (8) to node[left]{2} (35.center);
		\draw (34.center) to (1);
		\draw (1) to node[above]{3} (8);
		\draw (1) to node[below]{5} (9);
		\draw (9) to node[left]{6} (33.center);
		\draw (9) to node[above]{7} (31.center);
		\draw (9) to node[right]{8} (3);
		\draw (3) to (30.center);
		\draw (3) to node[above]{15} (11);
		\draw (11) to node[left]{14} (1);
		\draw (11) to node[below]{13} (2);
		\draw (2) to node[right]{4} (8);
		\draw [in=0, out=180] (2) to node[above]{11} (10);
		\draw (10) to node[left]{10} (0);
		\draw (10) to node[below]{9} (19.center);
		\draw (4) to (20.center);
		\draw (4) to node[right]{12} (10);
		\draw (4) to node[below]{17} (12);
		\draw (12) to node[left]{18} (2);
		\draw (11) to node[right]{16} (5);
		\draw (15) to node[left]{30} (5);
		\draw (15) to node[below]{29} (6);
		\draw (6) to node[right]{20} (12);
		\draw (12) to node[above]{19} (5);
		\draw (5) to node[below]{21} (13);
		\draw (13) to node[right]{24} (7);
		\draw (7) to node[above]{31} (15);
		\draw (13) to node[above]{23} (29.center);
		\draw (13) to node[left]{22} (3);
		\draw (4) to node[left]{26} (14);
		\draw (14) to node[below]{25} (21.center);
		\draw (14) to node[right]{28} (23.center);
		\draw (6) to (24.center);
		\draw (6) to node[above]{27} (14);
		\draw (15) to node[right]{32} (25.center);
		\draw (7) to (26.center);
		\draw (7) to (28.center);
		\draw [color=red] (36) to node[left]{1} (8);
		\draw [color=red, in=45, out=-135] (36) to node[right]{11} (10);
		\draw [color=red] (37) to node[left]{4} (8);
		\draw [color=red] (37) to node[right]{14} (11);
		\draw [color=red] (11) to node[right]{15} (38);
		\draw [color=red] (38) to node[left]{5} (9);
		\draw [color=red] (9) to node[left]{8} (51);
		\draw [color=red] (11) to node[left]{13} (40);
		\draw [color=red] (40) to node[right]{19} (12);
		\draw [color=red] (12) to node[right]{18} (41);
		\draw [color=red] (41) to node[left]{12} (10);
		\draw [color=red] (11) to node[left]{16} (39);
		\draw [color=red] (39) to node[right]{22} (13);
		\draw [color=red] (13) to node[right]{23} (50);
		\draw [color=red] (8) to node[below]{3} (56);
		\draw [color=red] (9) to node[above]{6} (57);
		\draw [color=red] (9) to node[right]{7} (32);
		\draw [color=red] (10) to node[right]{10} (52);
		\draw [color=red] (10) to node[left]{9} (53);
		\draw [color=red] (12) to node[left]{17} (42);
		\draw [color=red] (12) to node[left]{20} (43);
		\draw [color=red] (13) to node[left]{21} (44);
		\draw [color=red] (13) to node[left]{24} (49);
		\draw [color=red] (15) to node[right]{30} (43);
		\draw [color=red] (15) to node[above]{29} (46);
		\draw [color=red] (15) to node[right]{31} (44);
		\draw [color=red] (45) to node[below]{28} (14);
		\draw [color=red] (14) to node[right]{27} (42);
		\draw [color=red] (14) to node[right]{26} (54);
		\draw [color=red] (14) to node[left]{25} (22);
		\draw [color=red] (8) to node[above]{2} (55);
		\draw [color=red, in=135, out=-45] (15) to node[below]{32} (47);
	\end{pgfonlayer}
\end{tikzpicture}
\caption[$4\times4$ square hypermap with dual]{A hypermap with square faces, embedded on a torus. The dual hypermap is shown in red.}
\label{fig:4x4squareDual}
\end{figure}
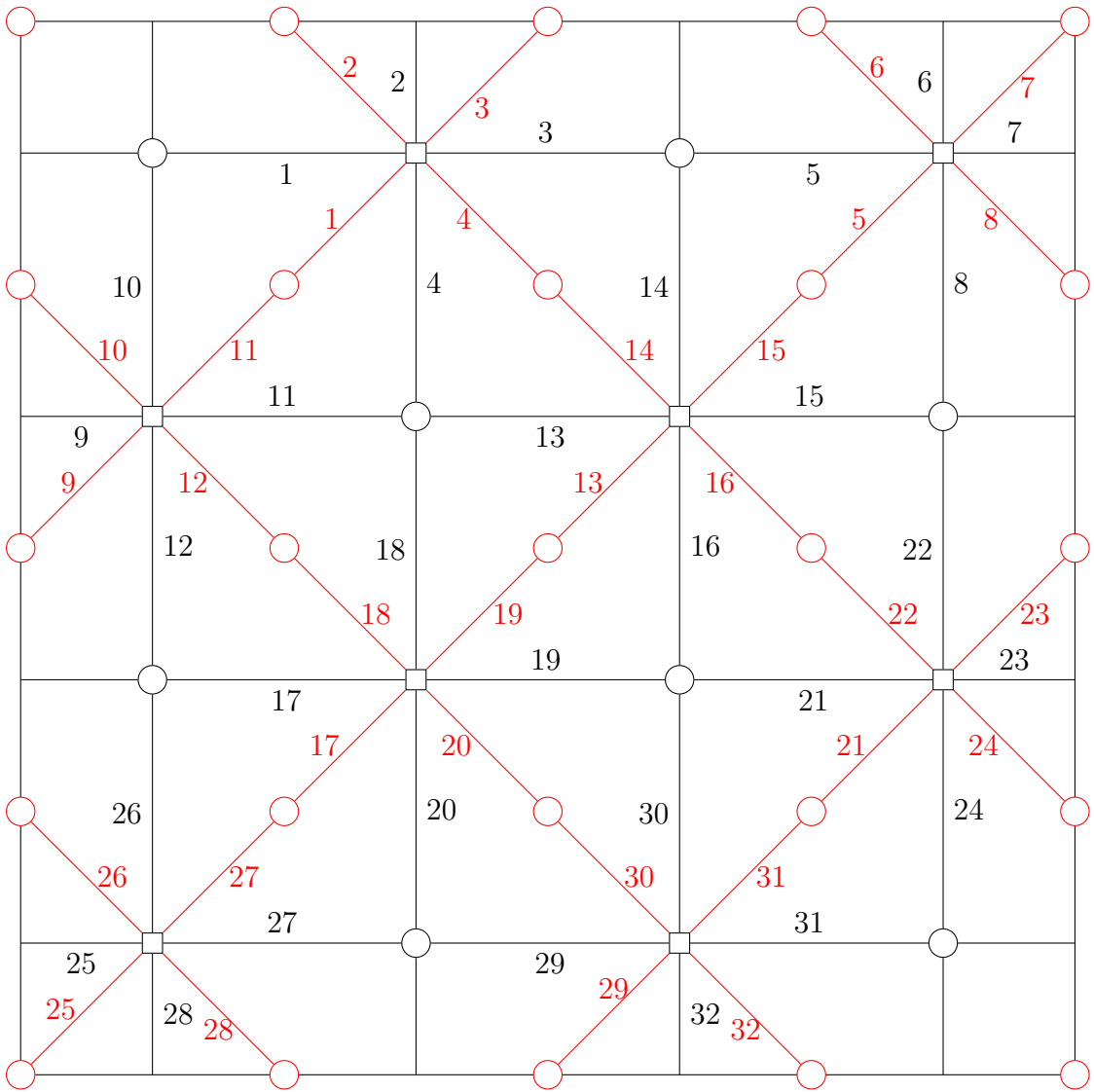

\end{ex}

\begin{ex}
The hypermap in Figure \ref{fig:DLCube} is based on a graph embedding from \cite{gagarin2003embeddings} which is described as the dual of the line graph of a cube. We choose a basis for $\mathcal{W}/\imath(\mathcal{E})$ corresponding to all the darts whose labels are not divisible by 6. Computer calculations show that this gives rise to a $[20,2,3]$ code. 
\begin{figure}[p!]
\centering
\begin{tikzpicture}[x=2 cm, y=2 cm]
	\begin{pgfonlayer}{nodelayer}
		\node [style=vertex] (0) at (-4, 0) {};
		\node [style=vertex] (1) at (-2, 1) {};
		\node [style=vertex] (2) at (-1, 1.5) {};
		\node [style=vertex] (3) at (1, 2.5) {};
		\node [style=edge] (4) at (-3, 0.5) {};
		\node [style=edge] (5) at (0, 2) {};
		\node [style=vertex] (6) at (-3.75, 3) {};
		\node [style=vertex] (7) at (0.5, -0.5) {};
		\node [style=vertex] (8) at (-1.5, -1.5) {};
		\node [style=vertex] (9) at (-2.25, -2) {};
		\node [style=edge] (10) at (-0.5, -1) {};
		\node [style=edge] (11) at (-3.25, -2.5) {};
		\node [style=none] (12) at (-4.5, 3.5) {};
		\node [style=none] (13) at (-4.5, -3) {};
		\node [style=none] (14) at (2, 3.5) {};
		\node [style=none] (15) at (2, -3) {};
		\node [style=none] (16) at (-4, 3.5) {};
		\node [style=none] (17) at (-3.5, 3.5) {};
		\node [style=none] (18) at (-3.5, -3) {};
		\node [style=none] (19) at (-3, -3) {};
		\node [style=none] (20) at (-4.5, 0.5) {};
		\node [style=none] (21) at (2, 1.5) {};
		\node [style=none] (22) at (-4.5, -1.75) {};
		\node [style=none] (23) at (2, -1.75) {};
		\node [style=none] (24) at (-1.5, 3.5) {};
		\node [style=none] (25) at (-1, 3.5) {};
		\node [style=none] (26) at (-1, -3) {};
		\node [style=none] (27) at (-1.5, -3) {};
		\node [style=none] (28) at (-4.5, -2.5) {};
		\node [style=none] (29) at (-4, -3) {};
		\node [style=none] (30) at (-3, 3.5) {};
		\node [style=none] (31) at (-2.5, 3.5) {};
		\node [style=none] (32) at (-2.5, -3) {};
		\node [style=none] (33) at (0.5, 3.5) {};
		\node [style=none] (34) at (0.5, -3) {};
		\node [style=none] (35) at (2, -2.5) {};
		\node [style=none] (36) at (-4.5, 1.5) {};
		\node [style=none] (37) at (2, 0.5) {};
	\end{pgfonlayer}
	\begin{pgfonlayer}{edgelayer}
		\draw (0) to node[below]{1} (4);
		\draw [in=-153, out=27] (4) to node[above]{4} (1);
		\draw (2) to node[below]{7} (5);
		\draw (5) to node[above]{10} (3);
		\draw (1) to node[below left]{20} (10);
		\draw (10) to node[left]{21} (2);
		\draw (10) to node[below]{19} (8);
		\draw (10) to node[above]{22} (7);
		\draw (9) to node[above]{15} (11);
		\draw (4) to node[left]{3} (6);
		\draw (4) to node[right]{6} (9);
		\draw (4) to node[above]{5} (8);
		\draw (12.center) to (14.center);
		\draw (14.center) to (15.center);
		\draw (15.center) to (13.center);
		\draw (13.center) to (12.center);
		\draw (7) to node[right]{12} (5);
		\draw (6) to node[right]{23} (16.center);
		\draw (6) to node[right]{18} (17.center);
		\draw (11) to node[right]{18} (18.center);
		\draw (11) to node[right]{17} (19.center);
		\draw (0) to node[above]{11} (20.center);
		\draw (3) to node[left]{2} (21.center);
		\draw (11) to node[below]{13} (22.center);
		\draw (23.center) to node[below]{13} (7);
		\draw (11) to node[left]{14} (0);
		\draw (5) to node[left]{9} (25.center);
		\draw (5) to node[left]{8} (24.center);
		\draw (9) to node[left]{8} (27.center);
		\draw (8) to node[left]{9} (26.center);
		\draw (29.center) to node[above right]{23} (28.center);
		\draw (1) to node[right]{17} (30.center);
		\draw (2) to node[right]{16} (31.center);
		\draw (11) to node[above right]{16} (32.center);
		\draw (3) to node[right]{24} (33.center);
		\draw (34.center) to node[right]{24} (10);
		\draw (10) to node[above]{23} (35.center);
		\draw (37.center) to node[above right]{11} (5);
		\draw (4) to node[below]{2} (36.center);
	\end{pgfonlayer}
\end{tikzpicture}
\caption[Dual(L(Cube)) hypermap]{A hypermap based on the dual of the line graph of a cube.}
\label{fig:DLCube}
\end{figure}
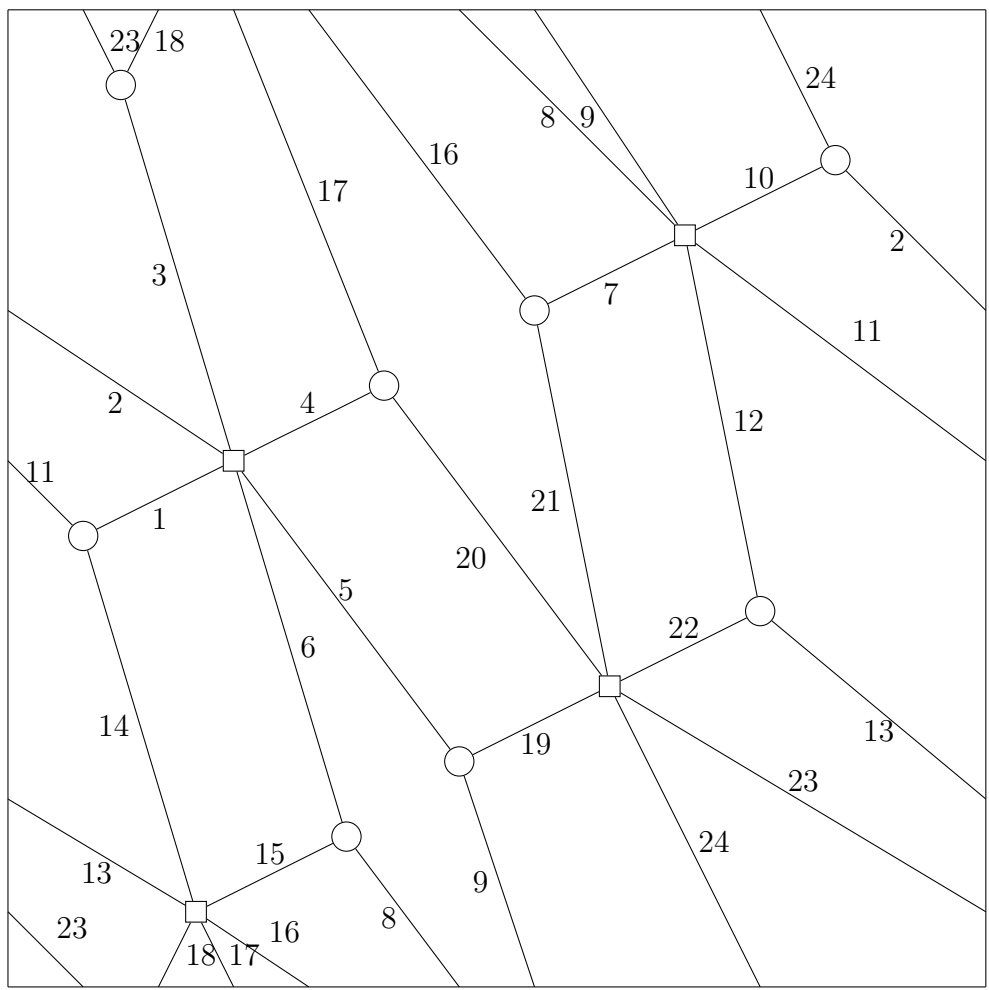

\end{ex}

\clearpage

\section{Finding weights given a special basis}\label{sec:specialBasis}
We choose, for each edge of the hypermap, one dart adjacent to that edge. Call this set of darts $S$, for `special darts'\index{special basis}. Then we choose a basis of $\mathcal{W}/\imath(\mathcal{E})$ which is $W\setminus S$. With a basis chosen we have a weight function on $\mathcal{W}/\imath(\mathcal{E})$ with the weight of $x+\imath(\mathcal{E})$ given by the number of nonzero basis vectors in its representation in this `special basis'.

If we look at the proof of Proposition \ref{prop:H1asClassical} then we see that $\mu \colon \ker\partial_1\setminus\im\partial_2 \to \ker\overline{d}_1\setminus\im\overline{d}_2$ is a bijection. Thus we can say that \[\operatorname{minwt}(C_X\setminus C_Z^\perp)=\operatorname{minwt}\{\mu^{-1}(x) \colon x\in \ker\overline{d}_1\setminus\im\overline{d}_2\}.\]
Now we seek to understand how we can find the weight of $\mu^{-1}(x)$ given $x$. We can choose an ordering of the darts around each edge using two cases. If the special dart adjacent to the edge is in $x$ then choose the pairing of darts so that the edge is at the end of a pair not the start. If the special dart is not in $x$ then we can choose an ordering of the darts so that the special dart is not between two paired darts. In both cases $\mu^{-1}(x)$ does not contain any special darts. Now $\mu^{-1}(x)$ is uniquely determined and we have chosen a representation of $\mu^{-1}(x)$ without any special darts so the weight of $\mu^{-1}(x)$ is just the number of darts in its sum. We see also that
\begin{align*}
&\operatorname{wt}(\mu^{-1}(x))=\frac{1}{2}\operatorname{wt}(x)+\\
&\text{\# of non-} x \text{ darts as we go counterclockwise around edges avoiding special darts.}
\end{align*}
One way to understand this is to imagine that at each of the darts that we skip as we go around an edge our cycle traverses that dart twice. Then every cycle in $\ker\overline{d}_1$ can be thought of as a cycle that does not skip darts around edges and $\mu^{-1}$ of that cycle is of weight half the weight of the cycle.

With the basis $W\setminus S$ chosen then we can identify vector spaces with their duals in the standard way and then the coboundary operators are given by transposes of boundary operators. If we look at the matrices $\delta_1=H_X^T$ and $\delta_2=H_Z$ then we see that $\delta_1$ takes vertices to non-special darts and $\delta_2$ takes non-special darts to faces so we will identify $(\mathcal{W}/\imath(\mathcal{E}))^*$ with $\mathcal{W}/\langle S\rangle$. Thus our hypermap-cohomology will be
\[\mathcal{V} \stackrel{\delta_1}{\to} \mathcal{W}/\langle S\rangle \stackrel{\delta_2}{\to} \mathcal{F}\]
where
\begin{align*}
\delta_1(v)&=\text{sum of non-special darts whose $\partial_1$-boundary contains $v$}\\
&=\text{map-boundary of face $v$ in dual hypermap without special darts}
\end{align*}
and
\begin{align*}
\delta_2(w_i)&=\text{faces whose $\partial_2$-boundaries contain $w_i$}\\
&=f_{\ni i}+f_{\ni \text{special dart of } e_{\ni i}}.
\end{align*}

We can also consider the classical cohomology of the bipartite graph representation of the hypermap. This is
\[\mathcal{V} \stackrel{\overline{\delta}_1}{\to} \mathcal{W} \stackrel{\overline{\delta}_2}{\to} \mathcal{F}\oplus\mathcal{E}\]
where \[\overline{\delta}_1(v)=\text{map-boundary of face $v$ in dual-hypermap}\]
and
\[\overline{\delta}_2(w_i)=f_{\ni i}+e_{\ni i}.\]

\begin{prop}\label{prop:cohomologyClassicalHomOfDual}
\[H^1 = \frac{\ker\delta_2}{\im\delta_1} \cong \frac{\ker \overline{\delta}_2}{\im \overline{\delta}_1}.\] 
\end{prop}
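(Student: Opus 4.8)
The plan is to imitate the proof of Proposition \ref{prop:H1asClassical}, with the two middle terms swapping roles. There the comparison map was an injection $\mu\colon\mathcal{W}/\imath(\mathcal{E})\to\mathcal{W}$ and the work lay in showing $\mu$ hits all the relevant cycles; here the comparison map will be the surjection $q\colon\mathcal{W}\to\mathcal{W}/\langle S\rangle$ (the quotient by the span of the special darts) and the work will lie in lifting cocycles back along $q$. Concretely, I would first build the commutative diagram
\[\begin{tikzcd}[ampersand replacement=\&]
\& \mathcal{V} \arrow{r}{\overline{\delta}_1} \arrow{d}{\id} \& \mathcal{W} \arrow{r}{\overline{\delta}_2}\arrow{d}{q} \& \mathcal{F}\oplus\mathcal{E}\arrow{d}{\nu}\\
\& \mathcal{V} \arrow{r}{\delta_1} \& \mathcal{W}/\langle S\rangle \arrow{r}{\delta_2} \& \mathcal{F}
\end{tikzcd}\]
where $\nu\colon\mathcal{F}\oplus\mathcal{E}\to\mathcal{F}$ is given on the fixed bases by $\nu(f)=f$ and $\nu(e)=f_{\ni s}$, with $s$ the special dart of the edge $e$. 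The left square commutes because $\delta_1(v)$ is by definition $\overline{\delta}_1(v)$ with its special-dart terms dropped, i.e.\ $\delta_1=q\circ\overline{\delta}_1$. For the right square: on a non-special dart $w_i$ one has $\nu(\overline{\delta}_2(w_i))=\nu(f_{\ni i}+e_{\ni i})=f_{\ni i}+f_{\ni s}=\delta_2(w_i)=\delta_2(q(w_i))$, where $s$ is the special dart of $e_{\ni i}$; while on a special dart $w_i$ one has $q(w_i)=0$ and, since $i$ is then itself the special dart of $e_{\ni i}$, $\nu(\overline{\delta}_2(w_i))=f_{\ni i}+f_{\ni i}=0$. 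Both rows are complexes (the top one by construction, and $\delta_2\delta_1=H_ZH_X^T=0$ since $H_XH_Z^T=0$), so Lemma \ref{lem:inducedHomomorphism} produces the induced map $q_*\colon\ker\overline{\delta}_2/\im\overline{\delta}_1\to\ker\delta_2/\im\delta_1=H^1$.

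Then I would verify the three hypotheses of Lemma \ref{lem:inducedIsomorphism}. The condition $q(\im\overline{\delta}_1)=\im\delta_1$ is immediate from $\delta_1=q\circ\overline{\delta}_1$. For $\ker q\cap\ker\overline{\delta}_2\subseteq\im\overline{\delta}_1$ I would note that $\ker q=\langle S\rangle$ and that $\overline{\delta}_2$ is injective on $\langle S\rangle$: for $i\in S$ we have $\overline{\delta}_2(w_i)=e_{\ni i}+f_{\ni i}$, and since $i\mapsto e_{\ni i}$ is a bijection $S\to E$ (one special dart per edge) these images are linearly independent; hence $\ker q\cap\ker\overline{\delta}_2=\{0\}$. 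The remaining condition $q(\ker\overline{\delta}_2)=\ker\delta_2$ carries the content. The inclusion $\subseteq$ is part of Lemma \ref{lem:inducedHomomorphism}. For $\supseteq$, take $y\in\ker\delta_2$, lift it to some $x_0\in\mathcal{W}$ with $q(x_0)=y$, and use commutativity to get $\nu(\overline{\delta}_2(x_0))=\delta_2(y)=0$, so $\overline{\delta}_2(x_0)\in\ker\nu$. Now invoke the key claim that $\overline{\delta}_2$ restricts to an isomorphism $\langle S\rangle\xrightarrow{\ \sim\ }\ker\nu$: injectivity was just noted, $\dim\langle S\rangle=|S|=|E|$, and $\dim\ker\nu=|E|$ because $\nu$ is onto (it is the identity on $\mathcal{F}$). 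So there is $z\in\langle S\rangle$ with $\overline{\delta}_2(z)=\overline{\delta}_2(x_0)$, whence $x:=x_0+z\in\ker\overline{\delta}_2$ and $q(x)=q(x_0)=y$. Lemma \ref{lem:inducedIsomorphism} then gives that $q_*$ is an isomorphism, which is the assertion.

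The one step that is more than a routine check is the identification $\overline{\delta}_2(\langle S\rangle)=\ker\nu$, and I expect that to be the main obstacle; but it reduces cleanly to the dimension count above, given the explicit formula $\overline{\delta}_2(w_i)=e_{\ni i}+f_{\ni i}$ for $i\in S$ and the bijection between special darts and edges. I would also record at the outset the two small structural facts used repeatedly: that $\{w_i:i\in S\}$ is a basis of $\langle S\rangle$ (it is a subset of the basis $W$ of $\mathcal{W}$), and that the choice of $\nu$ is designed precisely so that $\delta_2=\nu\circ\overline{\delta}_2$ holds on all of $W$, including on the special darts, where both sides vanish.
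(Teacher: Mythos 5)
Your proposal is correct and follows essentially the same route as the paper's proof: the same commutative diagram (your $q$ and $\nu$ are the paper's $\pi$ and $\lambda$), the same appeal to Lemma \ref{lem:inducedIsomorphism}, and the same three conditions checked in the same way for two of them. The only divergence is in the surjectivity step $q(\ker\overline{\delta}_2)\supseteq\ker\delta_2$: the paper explicitly constructs the lift of a cocycle by adding the special dart at each edge meeting an odd number of its darts, whereas you correct an arbitrary lift via the dimension count $\overline{\delta}_2(\langle S\rangle)=\ker\nu$; both arguments are valid and rest on the same underlying fact that there is exactly one special dart per edge.
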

\begin{proof}
Consider the diagram
\[\begin{tikzcd}[ampersand replacement=\&]
\& \mathcal{V} \arrow{r}{\overline{\delta}_1} \arrow{d}{id} \& \mathcal{W} \arrow{r}{\overline{\delta}_2}\arrow{d}{\pi} \& \mathcal{E}\oplus\mathcal{F}\arrow{d}{\lambda}\\
\&\mathcal{V} \arrow{r}{\delta_1} \& \mathcal{W}/\langle S\rangle \arrow{r}{\delta_2} \& \mathcal{F}
\end{tikzcd}\]
where $\pi$ is the projection $\mathcal{W}\to\mathcal{W}/\langle S\rangle$ and $\lambda$ is defined by $\lambda(f_i)=f_i$ and $\lambda(e_i)=f_{\ni\text{special dart of } e_i}$.

This diagram is commutative because
\[\delta_1(v)=\text{map-boundary of face $v$ in dual hypermap without special darts}=\pi(\overline{\delta}_1(v))\] and
\begin{align*}
\lambda(\overline{\delta}_2(w_i))&=\lambda(f_{\ni i}+e_{\ni i})\\
&=f_{\ni i}+f_{\ni\text{special dart of } e_{\ni i}}\\
&=\delta_2(\pi(w_i))
\end{align*}
because $f_{\ni i}+f_{\ni\text{special dart of } e_{\ni i}}=0$ precisely when $w_i \in S$.

Now $\im \delta_1=\im(\pi\circ\overline{\delta}_1)=\pi(\im\overline{\delta}_1)$ and as usual we know that $\pi(\ker\delta_2)\subseteq\ker\delta_2$. We now show the opposite inclusion. If $x\in \ker\delta_2$, define $w\in \mathcal{W}$ to contain all the darts of $x$ and then for each edge of the hypermap, if there are an odd number of darts in $x$ adjacent to that edge, add the special dart adjacent to that edge to $w$. Clearly $\pi(w)=x$, we now claim that $w\in \ker\overline{\delta_2}$. To see this, note that
\[\lambda(\overline{\delta}_2(w))=\delta_2(\pi(w))=\delta_2(x)=0\]
and
\[\overline{\delta}_2(w)=\overline{\delta}_2\left(\sum_{i\in I} w_i\right)=\sum_{i\in I}\left(f_{\ni i}+e_{\ni i}\right)=\sum_{i\in I}f_{\ni i}\]
because each edge has an even number of darts of $w$ adjacent. Thus $\lambda(\overline{\delta}_2(w))=\overline{\delta}_2(w)$ and $w\in \ker\overline{\delta_2}$ as desired.

Finally if $w\in \ker\pi \cap\ker\overline{\delta}_2$ then $w$ is a sum of special darts which is a classical cycle in the dual hypermap. In particular each edge is adjacent to an even number of darts in $w$. But each edge is adjacent to only one special dart in the whole hypermap so we must have $w=0$. Thus $\ker\pi \cap\ker\overline{\delta}_2 \subseteq \im\overline{\delta}_1$ and Lemma \ref{lem:inducedIsomorphism} tells us that $\pi_*$ is an isomorphism.
\end{proof}

\begin{cor}
There is a bijection
\[\pi \colon \ker \overline{\delta}_2\setminus\im \overline{\delta}_1 \to \ker\delta_2\setminus\im\delta_1\]
which takes non-boundary cycles of hypermap-cohomology to non-boundary classical cycles of the dual hypermap.
\end{cor}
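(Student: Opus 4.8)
The plan is to read off the bijection directly from the proof of Proposition \ref{prop:cohomologyClassicalHomOfDual}. First I would isolate the three facts established there about the commutative square relating $\overline{\delta}_\bullet$ and $\delta_\bullet$: (i) the projection $\pi\colon\mathcal{W}\to\mathcal{W}/\langle S\rangle$ carries $\ker\overline{\delta}_2$ \emph{onto} $\ker\delta_2$, via the explicit lift sending a classical cycle $x$ to the dart set $w$ obtained from $x$ by adjoining, at each edge adjacent to an odd number of darts of $x$, the special dart of that edge; (ii) $\pi(\im\overline{\delta}_1)=\im\delta_1$, which is immediate from the commutativity $\delta_1=\pi\circ\overline{\delta}_1$; and (iii) $\langle S\rangle\cap\ker\overline{\delta}_2=\{0\}$, since a nonzero sum of special darts has nonzero $\mathcal{E}$-component under $\overline{\delta}_2$ (each edge contains exactly one special dart). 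Fact (iii) is the computation already performed in that proof, where only the weaker conclusion $\ker\pi\cap\ker\overline{\delta}_2\subseteq\im\overline{\delta}_1$ was recorded.

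Next I would combine (i) and (iii) into the statement that $\pi$ restricts to a linear \emph{isomorphism} $\pi\colon\ker\overline{\delta}_2\xrightarrow{\sim}\ker\delta_2$: surjectivity is (i), and injectivity holds because $\ker\pi=\langle S\rangle$ meets $\ker\overline{\delta}_2$ only in $0$. I would then verify that this isomorphism sends the subspace $\im\overline{\delta}_1$ exactly onto $\im\delta_1$: the inclusion $\pi(\im\overline{\delta}_1)\subseteq\im\delta_1$ is (ii), and conversely if $w\in\ker\overline{\delta}_2$ satisfies $\pi(w)\in\im\delta_1=\pi(\im\overline{\delta}_1)$ then $\pi(w)=\pi(w')$ for some $w'\in\im\overline{\delta}_1\subseteq\ker\overline{\delta}_2$, whence $w-w'\in\langle S\rangle\cap\ker\overline{\delta}_2=\{0\}$ and $w=w'\in\im\overline{\delta}_1$. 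Thus $\pi^{-1}(\im\delta_1)\cap\ker\overline{\delta}_2=\im\overline{\delta}_1$.

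A bijection of sets that matches up two distinguished subsets automatically matches up their complements, so applying this to $\pi\colon\ker\overline{\delta}_2\xrightarrow{\sim}\ker\delta_2$ with the subsets $\im\overline{\delta}_1$ and $\im\delta_1$ gives the claimed bijection $\pi\colon\ker\overline{\delta}_2\setminus\im\overline{\delta}_1\to\ker\delta_2\setminus\im\delta_1$. To recover the wording of the corollary, I would finally note that $\ker\delta_2/\im\delta_1=H^1$ is the hypermap-cohomology, so the left-hand set consists of the non-boundary cocycles (``non-boundary cycles'') of hypermap-cohomology, while $\overline{\delta}_1$ and $\overline{\delta}_2$ are exactly the boundary maps of the classical ($\F_2$-)homology complex of the bipartite graph representation of the \emph{dual} hypermap (its vertices being the faces of the primal, and so on), so that $\ker\overline{\delta}_2\setminus\im\overline{\delta}_1$ is the set of non-boundary classical cycles of the dual hypermap. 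There is no real obstacle here: the analytic content lives entirely in Proposition \ref{prop:cohomologyClassicalHomOfDual}, and the only point requiring a moment's care is the upgrade from ``$\pi_*$ is an isomorphism of quotient spaces'' to ``$\pi$ is a bijection of the non-boundary-cycle sets'', which is precisely where one needs the sharpened fact (iii) in place of the containment actually displayed in that proof.
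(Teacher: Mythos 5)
Your proposal is correct and follows essentially the route the paper intends: the corollary is meant to be read off from the proof of Proposition \ref{prop:cohomologyClassicalHomOfDual}, whose argument in fact shows $w=0$ for $w\in\ker\pi\cap\ker\overline{\delta}_2$ (i.e.\ your sharpened fact (iii)), so that $\pi$ restricts to an isomorphism $\ker\overline{\delta}_2\to\ker\delta_2$ matching $\im\overline{\delta}_1$ with $\im\delta_1$ and hence their complements. The only blemish is a harmless swap of ``left-hand'' and ``right-hand'' in your final identification of the two sets, which mirrors the direction mismatch already present in the corollary's own wording.
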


\begin{cor}
$\operatorname{minwt}(C_Z\setminus C_X^\perp)$ is given by the minimum weight of a non-boundary classical cycle in the dual hypermap where non-special darts have weight 1 and special darts have weight 0.
\end{cor}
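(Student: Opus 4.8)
The plan is to chain together the bijection of the preceding corollary with the description of the weight function on $\mathcal{W}/\langle S\rangle$ induced by the special basis. First I would recall that, by the CSS construction of \secref{sec:hypHomCodes}, $C_Z=\ker(H_Z)=\ker\delta_2$ and $C_X^\perp=\im(H_X^T)=\im\delta_1$, so that
\[ C_Z\setminus C_X^\perp \;=\; \ker\delta_2\setminus\im\delta_1 . \]
The preceding corollary supplies a bijection $\pi\colon \ker\overline{\delta}_2\setminus\im\overline{\delta}_1 \to \ker\delta_2\setminus\im\delta_1$, which is just the restriction of the projection $\mathcal{W}\to\mathcal{W}/\langle S\rangle$, and whose domain $\ker\overline{\delta}_2\setminus\im\overline{\delta}_1$ is precisely the set of non-boundary classical cycles of the dual hypermap: an element of $\ker\overline{\delta}_2$ is a dart set $w$ meeting every edge and every face of the primal hypermap in an even number of darts, i.e.\ meeting every edge and every vertex of the \emph{dual} hypermap evenly, hence a classical cycle of the dual hypermap's bipartite graph representation; and $\im\overline{\delta}_1$ is spanned by the dart sets around the faces of the dual hypermap, i.e.\ its classical boundaries. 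Consequently
\[ \minwt(C_Z\setminus C_X^\perp) \;=\; \min\bigl\{\wt(\pi(w)) \colon w\in\ker\overline{\delta}_2\setminus\im\overline{\delta}_1\bigr\}. \]

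It then remains to identify $\wt(\pi(w))$. Since the weight on $\mathcal{W}/\langle S\rangle$ is taken with respect to the basis $W\setminus S$ and $\pi$ simply discards the special darts, $\wt(\pi(w))$ is the number of non-special darts occurring in $w$, i.e.\ the weight of the classical cycle $w$ when each non-special dart is assigned weight $1$ and each special dart weight $0$. (Equivalently, tracing through the proof of \propref{prop:cohomologyClassicalHomOfDual}, the inverse bijection lifts $x\in\ker\delta_2\setminus\im\delta_1$ to the classical cycle obtained from the non-special darts of $x$ by adjoining, for each edge meeting $x$ in an odd number of darts, that edge's special dart; this adjoins only special darts and so does not change the weighted count.) Substituting into the displayed minimum yields exactly the asserted formula.

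The argument is essentially bookkeeping built on \propref{prop:cohomologyClassicalHomOfDual} and its corollary, so I do not expect a genuine obstacle; the one point needing care is verifying that $\pi$ transports weights in the claimed way, and this is precisely where the specific choice of the special basis $W\setminus S$ — one distinguished dart per edge — is used, forcing the "special darts cost nothing" weighting on the dual hypermap rather than leaving it as a choice.
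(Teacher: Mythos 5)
Your argument is correct and is exactly the intended derivation: the paper leaves this corollary unproved because it follows from \propref{prop:cohomologyClassicalHomOfDual} and the preceding corollary in precisely the way you describe, namely by identifying $C_Z\setminus C_X^\perp$ with $\ker\delta_2\setminus\im\delta_1$, transporting the minimum through the bijection $\pi$, and observing that the weight in the basis $W\setminus S$ counts only non-special darts. No gaps; your parenthetical check via the explicit lift in the proof of the proposition is a nice confirmation but not strictly needed once injectivity of $\pi$ on $\ker\overline{\delta}_2$ is in hand.
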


\begin{ex}\label{ex:mbymsquaregrid}
Take an $m\times m$ square grid hypermap embedded on the torus with $m$ even and the special darts chosen to be the darts underneath each edge  (see Figure \ref{fig:4x4squareDual} for the hypermap and its dual in the $m=4$ case, where the special darts are those with labels divisible by 4). Then we have an $[N,k,d]$ code with $N=n-|E|=2m^2-(1/2)m^2=(3/2)m^2$ and $k=2g=2$.

For $\minwt(C_Z\setminus C_X^\perp)$ notice that there is a path in the dual hypermap of length $2m$ with half of the darts special darts. Also there are no paths in the dual hypermap of classical weight less than $2m$ so $\minwt(C_Z\setminus C_X^\perp)=m$.

For $\minwt(C_X\setminus C_Z^\perp)$ notice that a horizontal or vertical classical cycle $x\in\ker\overline{d}_1\setminus\im\overline{d}_2$ in the hypermap has $\wt(\mu^{-1}(x))=m$. To see this is the minimum weight, if $x\in\ker\overline{d}_1\setminus\im\overline{d}_2$ then $x$ is not a boundary so without loss of generality $x$ has at least $m$ horizontal darts. But $\mu(\mu^{-1}(x))=x$ and $\mu$ takes $w_i$ to $w_i$ plus $w_i$ rotated around an edge. Therefore $\mu$ acting on one dart can only ever lead to 1 horizontal dart so we must in fact have $\wt(\mu^{-1}(x))\geq m$.

Thus we have a $[(3/2)m^2,2,m]$ code with $k=2$ and $d=\sqrt{\frac{2}{3}N}$, so $kd^2=\frac{4}{3}N$.
\end{ex}

\section{Future work}
Many questions about hypermap-homology codes remain unanswered. We mention a few of them here:
\begin{enumerate}
\item Must hypermap-homology codes with a special basis satisfy $kd^2<cN$ for some constant $c$?
\item Can we analyze hypermap-homology codes with a non-special basis?
\item Can we find families of hypermaps which lead to better performance than the square grid hypermap?
\item Can we construct `planar hypermap-homology codes'?
\item Can we analyze the performance of randomly generated hypermap-homology codes?
\end{enumerate}

\appendix
\chapter{Software}
\section{Matlab software for hypermap-homology codes}\label{sec:OOsoftware}
We implemented an object-oriented package in Matlab to work with hypermap-homology codes. This includes classes for permutations, hypermaps and CSS codes. The use of this package in this dissertation is restricted to determining the parameters $[n,k,d]$ of a code generated by an input hypermap $(\sigma,\alpha)$. We now describe the simple algorithms we use to do this.

First we check if $\langle \sigma,\alpha\rangle\leq S_n$ is transitive. This algorithm comes from \cite{holt2005handbook}.

\algsetup{indent=1em}
\begin{algorithm}[h]
	\caption{CheckTransitive($\sigma$,$\alpha$)}
	\begin{algorithmic}[1]
		\REQUIRE $\sigma ,\alpha \in S_n$
		\ENSURE Output whether $\langle \sigma,\alpha\rangle$ is transitive
		\medskip
		\STATE Set $\mathrm{Orbit}=\{1\}$
		\REPEAT
			\FOR {$i=1,\dots,n$}
				\STATE Set $\mathrm{Orbit}=\mathrm{Orbit}\cup\{\sigma(i), \alpha(i)\}$
			\ENDFOR
		\UNTIL {$\mathrm{Orbit}$ did not change in last iteration}
		\IF {$|\mathrm{Orbit}|=n$}
			\RETURN `Yes'
		\ELSE
			\RETURN `No'
		\ENDIF
	\end{algorithmic}
\end{algorithm}

After finding the matrices $H_X$ and $H_Z$ using the definitions in Chapter \ref{ch:hypermapHomologyCodes} we can then determine the parameters of the associated code.

\algsetup{indent=1em}
\begin{algorithm}[ht]
	\caption{FindParameters($H_X$,$H_Z$)}
	\begin{algorithmic}[1]
		\REQUIRE Binary matrices $H_X$ and $H_Z$ with $H_XH_Z^T=0$
		\ENSURE Output parameters $[n,k,d]$
		\medskip
		\STATE Set $n=\mbox{width of matrix } H_X$
		\STATE Find $\rank(H_X)$ and $\rank(H_Z)$ by putting these matrices in RREF
		\STATE Set $k=n-\rank(H_X)-\rank(H_Z)$
		\STATE Find generator matrices $G_X$, $G_Z$ by finding nullspace of $H_X$, $H_Z$
		\STATE Set $k_X$ to the height of $G_X$ and $k_Z$ to the height of $G_Z$
		\STATE Set $d=\infty$
		\FORALL {$u \in \F_2^{k_X}$}
			\STATE Set $c=G_X u$
			\IF {$0<\wt(c)<d$ and $G_Z^Tc\neq 0$}
				\STATE Set $d=\wt(c)$
			\ENDIF
		\ENDFOR
		\FORALL {$u \in \F_2^{k_Z}$}
			\STATE Set $c=G_Z u$
			\IF {$0<\wt(c)<d$ and $G_X^Tc\neq 0$}
				\STATE Set $d=\wt(c)$
			\ENDIF
		\ENDFOR
		\RETURN $n,k,d$
	\end{algorithmic}
\end{algorithm}

\printindex

\bibliography{mybib.bib}{}

\begin{thebibliography}{DKLP02}

\bibitem[BMD07]{bombin2007homological}
H.~Bombin and MA~Martin-Delgado.
\newblock Homological error correction: classical and quantum codes.
\newblock {\em Journal of mathematical physics}, 48(5):052105--052105, 2007.

\bibitem[BPT10]{bravyi2010tradeoffs}
S.~Bravyi, D.~Poulin, and B.~Terhal.
\newblock Tradeoffs for reliable quantum information storage in 2d systems.
\newblock {\em Physical review letters}, 104(5):50503, 2010.

\bibitem[CDZ11]{couvreur2011construction}
A.~Couvreur, N.~Delfosse, and G.~Z{\'e}mor.
\newblock A construction of quantum ldpc codes from cayley graphs.
\newblock In {\em Information Theory Proceedings (ISIT), 2011 IEEE
  International Symposium on}, pages 643--647. IEEE, 2011.

\bibitem[CM92]{cori1992maps}
R.~Cori and A.~Machi.
\newblock Maps, hypermaps and their automorphisms: a survey i, ii, iii.
\newblock {\em Exposition. Math}, 10(5):403--427, 1992.

\bibitem[CP80]{cori1980complexity}
R.~Cori and J.G. Penaud.
\newblock The complexity of a planar hypermap and that of its dual.
\newblock {\em Annals of Discrete Mathematics}, 9:53--62, 1980.

\bibitem[CS96]{calderbank1996good}
A.R. Calderbank and P.W. Shor.
\newblock Good quantum error-correcting codes exist.
\newblock {\em Physical Review A}, 54(2):1098, 1996.

\bibitem[DCP10]{duclos2010fast}
G.~Duclos-Cianci and D.~Poulin.
\newblock Fast decoders for topological quantum codes.
\newblock {\em Physical review letters}, 104(5):50504, 2010.

\bibitem[DKLP02]{dennis2002topological}
E.~Dennis, A.~Kitaev, A.~Landahl, and J.~Preskill.
\newblock Topological quantum memory.
\newblock {\em Journal of Mathematical Physics}, 43:4452, 2002.

\bibitem[Fet11]{fetaya2011homological}
E.~Fetaya.
\newblock Homological error correcting codes and systolic geometry.
\newblock {\em Arxiv preprint arXiv:1108.2886}, 2011.

\bibitem[FML02]{freedman2002z2}
M.H. Freedman, D.A. Meyer, and F.~Luo.
\newblock Z2-systolic freedom and quantum codes.
\newblock {\em Mathematics of quantum computation, Chapman \& Hall/CRC}, pages
  287--320, 2002.

\bibitem[Gal62]{gallager1962low}
Robert Gallager.
\newblock Low-density parity-check codes.
\newblock {\em Information Theory, IRE Transactions on}, 8(1):21--28, 1962.

\bibitem[GKN03]{gagarin2003embeddings}
A.~Gagarin, W.~Kocay, and D.~Neilson.
\newblock Embeddings of small graphs on the torus.
\newblock {\em Cubo. v5}, pages 351--371, 2003.

\bibitem[Got97]{gottesman1997stabilizer}
D.~Gottesman.
\newblock Stabilizer codes and quantum error correction.
\newblock {\em Arxiv preprint quant-ph/9705052}, 1997.

\bibitem[Gur10]{guruswami2010lecture}
V.~Guruswami.
\newblock Course notes for introduction to coding theory.
\newblock {\em Carnegie Mellon University}, 2010.

\bibitem[Hat02]{hatcheralgebraic}
Allen Hatcher.
\newblock {\em Algebraic topology}.
\newblock Cambridge University Press, 2002.

\bibitem[HEO05]{holt2005handbook}
Derek~F Holt, Bettina Eick, and Eamonn~A O'Brien.
\newblock {\em Handbook of computational group theory}, volume~24.
\newblock Chapman and Hall/CRC, 2005.

\bibitem[Kit03]{kitaev2003fault}
A.Y. Kitaev.
\newblock Fault-tolerant quantum computation by anyons.
\newblock {\em Annals of Physics}, 303(1):2--30, 2003.

\bibitem[KP12]{kovalev2012improved}
A.A. Kovalev and L.P. Pryadko.
\newblock Improved quantum hypergraph-product ldpc codes.
\newblock {\em Arxiv preprint arXiv:1202.0928}, 2012.

\bibitem[LZZ04]{lando2004graphs}
S.K. Lando, A.K. Zvonkin, and D.B. Zagier.
\newblock {\em Graphs on surfaces and their applications}, volume 141.
\newblock Springer Verlag, 2004.

\bibitem[Mac03]{mackay2003information}
D.J.C. MacKay.
\newblock {\em Information theory, inference, and learning algorithms}.
\newblock Cambridge Univ Pr, 2003.

\bibitem[Maz11]{mazoit2011tree}
F.~Mazoit.
\newblock Tree-width of hypergraphs and surface duality.
\newblock {\em Journal of Combinatorial Theory, Series B}, 2011.

\bibitem[MMM04]{mackay2004sparse}
D.J.C. MacKay, G.~Mitchison, and P.L. McFadden.
\newblock Sparse-graph codes for quantum error correction.
\newblock {\em Information Theory, IEEE Transactions on}, 50(10):2315--2330,
  2004.

\bibitem[NC10]{nielsenquantum}
Michael~A Nielsen and Isaac~L Chuang.
\newblock {\em Quantum computation and quantum information}.
\newblock Cambridge university press, 2010.

\bibitem[RU08]{richardson2008modern}
T.J. Richardson and R.L. Urbanke.
\newblock {\em Modern coding theory}.
\newblock Cambridge Univ Pr, 2008.

\bibitem[SB12]{sarvepalli2012topological}
P.~Sarvepalli and K.R. Brown.
\newblock Topological subsystem codes from graphs and hypergraphs.
\newblock {\em arXiv preprint arXiv:1207.0479}, 2012.

\bibitem[TZ09]{tillich2009quantum}
J.P. Tillich and G.~Z{\'e}mor.
\newblock Quantum ldpc codes with positive rate and minimum distance
  proportional to $n^{1/2}$.
\newblock In {\em Information Theory, 2009. ISIT 2009. IEEE International
  Symposium on}, pages 799--803. IEEE, 2009.

\bibitem[Wal75]{walsh1975hypermaps}
TRS Walsh.
\newblock Hypermaps versus bipartite maps.
\newblock {\em Journal of Combinatorial Theory, Series B}, 18(2):155--163,
  1975.

\bibitem[Z{\'e}m09]{zemor2009cayley}
G.~Z{\'e}mor.
\newblock On cayley graphs, surface codes, and the limits of homological coding
  for quantum error correction.
\newblock {\em Coding and Cryptology}, pages 259--273, 2009.

\end{thebibliography}
\bibliographystyle{alpha}

\end{document}